\documentclass[pdflatex,sn-mathphys-num]{sn-jnl}% Math and \usepackage{graphicx}%
\usepackage{comment}
\usepackage{multirow,soul}%
\usepackage{amsmath,amssymb,amsfonts,a4wide}%
\usepackage{amsthm}%
\usepackage{mathrsfs}%
\usepackage[title]{appendix}%
\usepackage{xcolor,lineno}%
\usepackage{textcomp}%
\usepackage{manyfoot}%
\usepackage{booktabs}%
\usepackage{algorithm}%
\usepackage{algorithmicx}%
\usepackage{algpseudocode}%
\usepackage{listings}%
\theoremstyle{thmstyleone}%
\newtheorem{theorem}{Theorem}%  meant for continuous numbers
\newtheorem{proposition}[theorem]{Proposition}% 

\theoremstyle{thmstyletwo}%
\newtheorem{remark}{Remark}%
\newtheorem{lemma}{Lemma}%

\theoremstyle{thmstylethree}%
\newcommand{\Omegamu}{\Omega}
\begin{document}

\title[Resolving Algebraic Complexity]{An Analytically Tractable Framework for Multi-Strain Epidemics: Resolving Algebraic Complexity to Map Oscillatory Dynamics}

\author*[1]{\fnm{Nir} \sur{Gavish}}\email{ngavish@technion.ac.il}
\affil*[1]{\orgdiv{Faculty of Mathematics}, \orgname{Technion Israel Institute of Technology}, \orgaddress{\city{Technion City, Haifa}, \postcode{3200003}, \country{Israel}}}

%%==================================%%
%% Sample for unstructured abstract %%
%%==================================%%
 \abstract{Multi-strain epidemiological systems frequently exhibit self-sustained oscillations, yet severe algebraic complexity has long obstructed a complete analytical characterization of these dynamics. Seeking to bypass these barriers, we have identified a broad, analytically tractable class of two-strain models featuring asymmetric cross-immunity that precludes secondary infections for a single strain. This targeted structural simplification allows us to derive explicit expressions for coexistence equilibria and map their stability. We show that the relative transmission advantage of secondary infections fundamentally governs the onset of robust limit-cycle oscillations, extending previous narrow-borderline results. Crucially, numerical simulations reveal a rich macroscopic landscape where small-amplitude local oscillations coexist with large-amplitude, recurrent outbreak cycles. The structural robustness of these phenomena is confirmed by incorporating effects such as waning immunity and isolation and extending the framework beyond the analytically tractable class. These findings provide a clear mechanistic explanation for multi-strain epidemic cycles and offer a highly tractable baseline for future theoretical developments.}
\keywords{Multi-strain epidemic models, Bifurcation analysis , Partial cross-immunity , Analytical tractability , Self-sustained oscillations}

\maketitle

\section{Introduction}\label{sec:intro}
Many infectious diseases, such as seasonal influenza and dengue fever, involve multiple co-circulating strains that interact dynamically through host immunity~\cite{martcheva2015introduction,wormser2008modeling,aguiar2022mathematical,Keeling-book}. While the principle of competitive exclusion predicts that less fit strains should be driven to extinction, biological reality frequently defies this prediction. Instead, multi-strain systems frequently exhibit persistent coexistence and complex, self-sustained oscillations.  Understanding the biological mechanisms that generate and sustain such oscillatory behavior is a fundamental issue in epidemiology, with important implications for forecasting epidemic trajectories and designing sustainable public health strategies.

A key candidate for driving these self-sustained oscillations is partial cross-immunity, an immunological interaction where prior infection with one strain confers incomplete protection against others. Mathematical models have shown that cross-immunity can generate a wide range of dynamical behaviors. Notably, models with additional structural complexity, such as age stratification, quarantine, or temporary cross-immunity tailored to specific pathogens, readily exhibit complex, self-sustained oscillations ~\cite{gumel2008existence,garba2013cross,hussaini2016mathematical,ferguson1999effect,gupta1998chaos,nuno2005dynamics,nuno2008mathematical,thieme2007pathogen,kuddus2022analysis,castillo1989epidemiological,kooi2014analysis,kooi2023multi,kooi2013bifurcation,aguiar2011role,aguiar2013how,aguiar2008epidemiology}. In contrast, the foundational two-strain SIR framework with lifelong partial cross-immunity was initially found to produce only damped oscillations~\cite{andreasen1997dynamics, castillo1989epidemiological, nuno2005dynamics}, and later work showed that sustained oscillations in this minimal model occur only under highly restrictive conditions~\cite{chung2016dynamics}.

Recently, Gavish~\cite{gavish2024newoscillatoryregimetwostrain} identified a novel oscillatory regime within this minimal framework by demonstrating the failure of standard asymptotic expansions along a critical curve in parameter space. The modeling assumptions of that study, however, restricted the emergence of these oscillations to a case where one strain confers no cross-immunity to the other, while confining the oscillatory behavior to an extremely narrow parametric region. This specificity and narrowness raises two fundamental questions: are these newly discovered oscillations merely a mathematical artifact of a highly specific parameter set, or do they indicate a broader, biologically robust phenomenon? More generally, as noted in~\cite{chung2016dynamics}, the emergence of self-oscillations is ``rather intriguing, as there seems to be no biological reasoning behind this phenomenon''. The biological rationale driving this behavior has remained elusive, leaving open a related question posed by~\cite{gavish2024newoscillatoryregimetwostrain}: what is biologically special about the newly identified curve of instability?

In this work, we address these questions and elucidate the biological drivers of self-sustained oscillations by systematically investigating a broad family of two-strain transmission models to map their oscillatory dynamics. However, a significant initial obstacle arises: in general models, the coexistence equilibrium is defined only implicitly as the solution to a complex nonlinear algebraic system, which makes extracting stability conditions analytically inaccessible. To overcome this barrier, we identify an analytically tractable subfamily of models that are still rich enough to sustain oscillatory phenomena. Specifically, we achieve this by deliberately precluding secondary infections for one strain.  Although we employ this condition primarily as a theoretical tool to bypass algebraic complexity, such highly asymmetric cross-immunity dynamics map naturally to real-world interactions, such as the competitive exclusion observed between \textit{Bordetella pertussis} and \textit{Bordetella parapertussis}~\cite{wolfe2007antigen}. Establishing this structural frontier provides a clear analytical lens through which to map and study the system's oscillatory behavior.

Utilizing this framework, we demonstrate that the onset of limit-cycle oscillations is governed by the overall relative transmission advantage of a secondary infection. We show how oscillations can emerge even when the two strains are completely symmetric, possessing identical epidemiological features. This finding provides critical context to the results in~\cite{gavish2024newoscillatoryregimetwostrain}. In retrospect, that study evaluated a borderline scenario residing precisely at the onset of instability, explaining why the previously identified oscillatory regime was so parametrically narrow. By relaxing these restrictive historical assumptions and moving away from this mathematical borderline, we uncover parameter regions that support far more robust oscillations.

From an epidemiological perspective, focusing on a family of models that precludes secondary infections for one strain may appear restrictive, given that bidirectional secondary infections are common in many multi-strain systems. However, this specific framework allows us to explicitly incorporate complex biological and behavioral features, such as Antibody-Dependent Enhancement (ADE), waning immunity, and isolation, without rendering the mathematical stability analysis intractable. To demonstrate the utility of this approach, we analyze an extended system incorporating both waning immunity and isolation. We observe that, despite a significantly faster timescale for susceptible replenishment driven by waning immunity, the system preserves the same characteristic oscillatory patterns as the base model. Finally, to test the potential breadth of these findings without exceeding the scope of the current work, we conduct an initial numerical exploration of systems where secondary infections are permitted for both strains, as is typical for influenza~\cite{andreasen1997dynamics}. We observe that the oscillatory dynamics extend to these broader cases, strongly suggesting that the fundamental instability mechanism identified here is structurally robust.

While our study investigates oscillatory dynamics predominantly through the lens of the coexistence equilibrium and its local stability, conforming with standard approaches in mathematical epidemiology, we identified a second, distinct oscillatory mechanism in multiple case studies. This unexpected phenomenon is characterized by large-amplitude recurrent outbreaks. We provide clear numerical observations of these macroscopic cycles and demonstrate that they can coexist with small-amplitude local oscillations. Characterizing the boundaries of these global transitions remains a compelling open problem. Although a full investigation of these global dynamics falls outside the scope of the present work, we use the discussion section to outline the separate global mechanism driving these macroscopic cycles, thereby establishing a foundational scaffold for a companion paper dedicated to characterizing this global phenomenon.

This paper is organized as follows. In Section~\ref{sec:considerations}, we introduce the criteria for selecting our analytically tractable family of two-strain models and present the specific base model. Section~\ref{sec:steady_states} provides explicit expressions for the single-strain and coexistence equilibria. In Section~\ref{sec:stability_phiCE_base}, we investigate the stability of the coexistence steady state employing a perturbation analysis to identify the parameter regimes that lead to oscillatory instabilities. We complement this analysis with a numerical study in Section~\ref{sec:numeric_base_model} to explore the different types of oscillatory solutions that emerge. The robustness of these findings is evaluated in Section~\ref{sec:extensions_numerical}, which considers an extended model incorporating waning immunity and quarantine measures, and in Section~\ref{sec:general_two_strain}, which considers models beyond the analytically tractable family of models. Finally, Section~\ref{sec:discussion} provides a summary and discussion of our implications, highlighting the global oscillatory mechanisms to be explored in a companion paper.

\subsection{Two-Strain Transmission Base Model}\label{sec:gnrl_twostrain_base_model}
We consider, as a basis, the standard two-strain transmission model for the dynamics of partial cross-immunity, where infection by one strain may provide reduced susceptibility to infection by the other~\cite{castillo1989epidemiological, martcheva2015introduction, thieme2007pathogen}. Significantly, we extend this base model by allowing the recovery time from secondary infections to differ from the recovery time from primary infections.  

\begin{figure}[ht!]
\begin{center}
\includegraphics[width=0.75\textwidth]{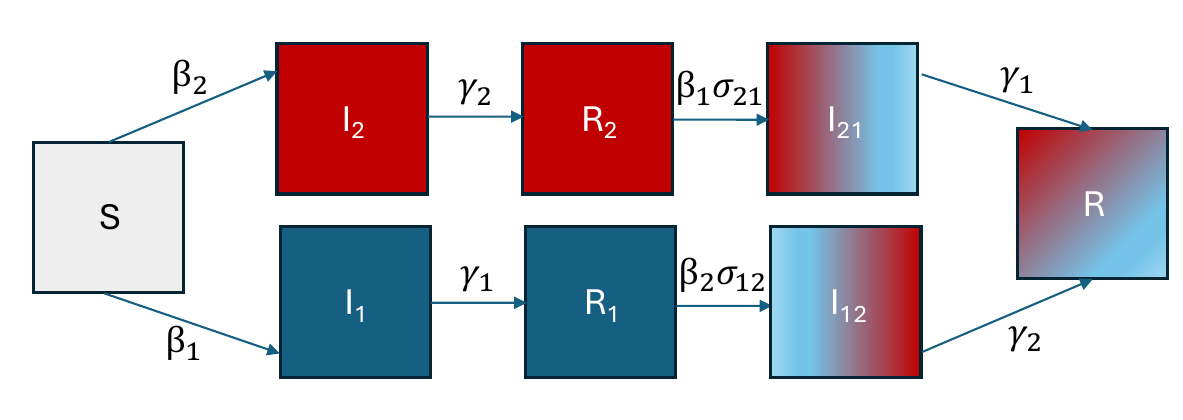} 
\caption{Schematic diagram of disease dynamics when the host is exposed to two co-circulating strains. Susceptibles ($S$) may be infected with strain~$i=1$ or~$i=2$ ($I_i$, primary infection). Those recovered from strain~$i$ ($R_i$) are immune to reinfection by strain~$i$ but may be susceptible to infection by strain~$j\ne i$ ($I_{ji}$, secondary infection). Here,~$\sigma_{ij}$ is the relative susceptibility to strain~$j$ for an individual previously infected with and recovered from strain~$i$ ($i\ne j$). A value of~$\sigma_{ij}>0$ corresponds to partial cross-immunity,~$\sigma_{ij}=1$ corresponds to no cross-immunity, and~$\sigma_{ij}>1$ corresponds to enhanced susceptibility. Additionally,~$\gamma_i$ is the recovery rate from strain~$i$, while~$\gamma_{ji}$ is the recovery rate from a secondary infection with strain~$i$.}
\label{fig:diagram_both}
\end{center}
\end{figure} 

As shown in Figure~\ref{fig:diagram_both}, we assume the population mixes randomly and is divided into eight relative compartments: Susceptibles ($S$); individuals infected with a primary infection of strain~$i$ ($I_i$); individuals recovered from a primary infection with strain~$i$ ($R_i$); individuals infected with strain~$i$ after having recovered from strain~$j\ne i$ ($I_{ji}$, secondary infection); and individuals recovered from both strains ($R$). The model is given by the following system of equations:
\begin{subequations} \label{eq:model}
\begin{align}
        \frac{\text{d}S}{\text{d}t} &=\mu(1-S)-\sum_{i=1}^{2}\beta_i(I_i+\eta_i I_{ji})S,\label{eq:eqS}\\[1ex]
         \frac{\text{d}I_1}{\text{d}t}  &= \beta_1S(I_1+\eta_1I_{21})-(\mu+\gamma_1)I_1,\label{eq:eqI1}\\[1ex]
         \frac{\text{d}I_2}{\text{d}t}  &= \beta_2S(I_2+\eta_2I_{12})-(\mu+\gamma_2)I_2,\label{eq:eqI2}\\[1ex]
         \frac{\text{d}I_{21}}{\text{d}t}  &= \beta_1\sigma_{21}R_2(I_1+\eta_1 I_{21})-(\mu+\gamma_{21})I_{21},\label{eq:eqJ1}\\[1ex]
         \frac{\text{d}I_{12}}{\text{d}t}  &= \beta_2\sigma_{12}R_1(I_2+\eta_2I_{12})-(\mu+\gamma_{12})I_{12},\label{eq:eqJ2}\\[1ex]
         \frac{\text{d}{R_1}}{\text{d}t} &= \gamma_1I_1-\beta_2\sigma_{12}(I_2+\eta_2I_{12})R_1-\mu R_1,\label{eq:eqR1}\\[1ex]
         \frac{\text{d}{R_2}}{\text{d}t} &= \gamma_2I_2-\beta_1\sigma_{21}(I_1+\eta_1I_{21})R_2-\mu R_2,\label{eq:eqR2}\\[1ex]
         \frac{\text{d}{R}}{\text{d}t} &=\gamma_{21}I_{21}+\gamma_{12}I_{12}-\mu R,\nonumber
\end{align}
\end{subequations}
where~$\mu$ is the birth and mortality rate,~$\beta_i$ is the transmission coefficient for strain~$i$,~$\gamma_i$ is the recovery rate from strain~$i$, and~$\gamma_{ji}$ is the recovery rate from a secondary infection with strain~$i$. Additionally,~$\sigma_{ij}$ is the relative susceptibility to strain~$j$ for an individual previously infected and recovered from strain~$i$ ($i\ne j$), where~$\sigma_{ij}>0$ corresponds to reduced susceptibility (partial cross-immunity),~$\sigma_{ij}=1$ corresponds to neutral susceptibility (no cross-immunity) and~$\sigma_{ij}>1$ corresponds to enhanced susceptibility. The relative infectivity of individuals with a secondary infection of strain~$i$, compared to those with a primary infection, is denoted by~$\eta_i$.

\section{Model Selection: Criteria and Identification}\label{sec:considerations}
We wish to identify a family of analytically tractable models  suitable for studying oscillatory phenomena.  We begin by listing the primary criteria for determining such a family of two-strain models:
\begin{enumerate}
    \item {\bf Asymmetry Between Strains:} Symmetric models are commonly studied because they are mathematically tractable~\cite{chung2016dynamics, castillo1989epidemiological}, yet they may not accurately represent real-world epidemiological dynamics where strains exhibit distinct biological profiles~\cite{aguiar2011role,kooi2014analysis}. Even introducing weak asymmetry can reveal novel phenomena that purely symmetric models overlook, as demonstrated by the narrow oscillatory regime recently identified in~\cite{gavish2024newoscillatoryregimetwostrain}. Therefore, we seek asymmetric models with the belief that they are likely to provide a clearer, more robust framework to capture and analyze the generic oscillatory phenomena driven by these biological imbalances.
    \item {\bf Explicit Expression for the Coexistence Steady-State:} The endemic steady-state of coexistence is the solution to a nonlinear algebraic system. While numerical studies have long shown limit cycles in asymmetric models, the algebraic complexity of this equilibrium has historically precluded the derivation of explicit, analytical criteria for the onset of these oscillations. Implicit or cumbersome expressions for the steady-state variables significantly complicate stability analysis or make it intractable~\cite{chung2016dynamics,castillo1989epidemiological,martcheva2015introduction,gavish2024newoscillatoryregimetwostrain}. Finding a family of models characterized by an explicit and simple coexistence steady-state is a technical necessity to unlock the algebraic thresholds governing these oscillations.
    \item {\bf Broad Family of Models:} To gain insight from multiple models, we require a broad family of models that satisfies the above two requirements, rather than a single carefully selected model.
\end{enumerate}

To meet these objectives, a delicate mathematical balance must be struck.  While satisfying all three criteria simultaneously poses a challenge, we demonstrate that preventing secondary infections with one strain provides a straightforward mathematical scalpel to achieve this. Let us consider model~\eqref{eq:model} with~$\sigma_{21}=0$. Requirement 1 is met because the model is inherently asymmetric, allowing secondary infections only with strain 2. Requirement 2 is also satisfied, as the coexistence steady-state becomes defined by a clear and straightforward expression. By definition, $I_{21}\equiv0$ when secondary infections with strain 1 are impossible. As a result, Equation~\eqref{eq:eqI1} simplifies to
\[
 \frac{\text{d}I_1}{\text{d}t}  = \left[\beta_1S-(\mu+\gamma_1)\right]I_1,
\]
leading to a simple expression for $S$ at equilibrium:
\[
S=\frac{\mu+\gamma_1}{\beta_1}=\frac{1}{\mathcal{R}_1}.
\]
Substitution of the above value for~$S$ removes nonlinear terms such as $S I_i$ from the algebraic equations describing the coexistence steady-state, allowing one to explicitly solve the remaining linear system, as detailed in Section~\ref{sec:steady_states}. Finally, the assumption that eliminates secondary infections for one strain can be applied across a broad family of two-strain models. For example, in Section~\ref{sec:extensions_numerical}, we find explicit expressions for the coexistence steady-state in an extended two-strain model with quarantine and waning immunity. Thus, requirement 3 is also satisfied.

Although mathematical necessity narrowed our focus from the general two-strain case to this specific model family, we emphasize that this choice is primarily an analytical tool. It is well established that general two-strain models can exhibit complex dynamics, including Hopf bifurcations and chaos, even without external forcing \cite{aguiar2008epidemiology, aguiar2011role, kooi2013bifurcation}. Therefore, precluding secondary infections for one strain ($\sigma_{21}=0$) is not introduced to claim structural novelty, but rather to explicitly bypass the severe algebraic complexity that typically obscures the coexistence equilibrium. Nevertheless, this structural simplification remains firmly rooted in biological reality. Strongly asymmetric cross-immunity dynamics have been documented in real-world pathogen interactions, such as the competitive exclusion observed between \textit{Bordetella pertussis} and \textit{Bordetella parapertussis} \cite{wolfe2007antigen}. Ultimately, this overarching methodological strategy, deriving a specific, analytically tractable model family from objective criteria to illuminate macroscopic behaviors, proves highly effective here. As subsequent sections will demonstrate, this targeted approach successfully extracts the robust analytical mechanisms that drive epidemic cycles.

\subsection{Model of Study}\label{sec:mathModel}

Having established the utility of a model family where secondary infections with one strain are excluded, we now define the specific model for our study. For brevity, we derive this by substituting~$\sigma_{21}=0$ into the general model~\eqref{eq:model} and aggregating the compartments~$R$ and $R_2$ into a single compartment~$R$.

\begin{figure}[ht!]
\begin{center}
\includegraphics[width=0.75\textwidth]{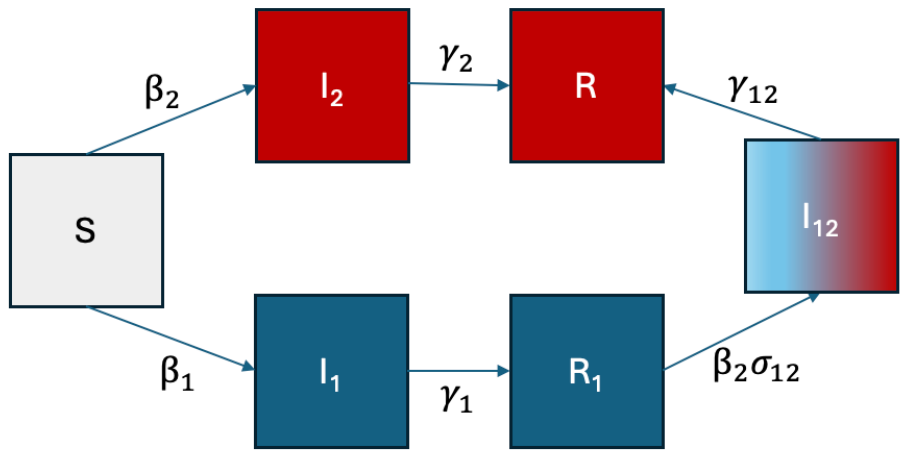}
\caption{Schematic diagram of disease dynamics for the base model. Susceptibles ($S$) may be infected with strain 1 or 2 ($I_i$, primary infection). Those recovered from strain 1 ($R_1$) are immune to reinfection by strain 1 but are susceptible to infection by strain 2. Those recovered from strain 2 ($R$) are immune to reinfection by both strains.}
\label{fig:diagram_reduced}
\end{center}
\end{figure}

The disease dynamics for this reduced model are schematically depicted in Figure~\ref{fig:diagram_reduced} and governed by the following system:
\begin{subequations}\label{eq:base_model}
\begin{align}
        \frac{\text{d}S}{\text{d}t} &=\mu(1-S)-\beta_1 I_1S-\beta_2(I_2+\eta_2I_{12})S,\label{eq:reduced_eqS}\\[1ex]
         \frac{\text{d}I_1}{\text{d}t}  &= \beta_1I_1S-(\mu+\gamma_1)I_1,\label{eq:reduced_eqI1}\\[1ex]
         \frac{\text{d}I_2}{\text{d}t}  &= \beta_2(I_2+\eta_2I_{12})S-(\mu+\gamma_2)I_2,\label{eq:reduced_eqI2}\\[1ex]
         \frac{\text{d}I_{12}}{\text{d}t}  &= \sigma_{12}\beta_2(I_2+\eta_2I_{12})R_1-(\mu+\gamma_{12})I_{12},\label{eq:reduced_eqI12}\\[1ex]        
         \frac{\text{d}{R_1}}{\text{d}t} &= \gamma_1I_1-\sigma_{12}\beta_2(I_2+\eta_2I_{12})R_1-\mu R_1,\label{eq:reduced_eqR1}\\[1ex]
         \frac{\text{d}{R}}{\text{d}t} &=\gamma_2I_2+\gamma_{12}I_{12}-\mu R.
\end{align}
\end{subequations}

We consider initial conditions that satisfy
\begin{equation}\label{eq:reduced_IC}
\begin{split}
&S(0)\ge0,\quad I_i(0)\ge0, \quad I_{12}(0)\ge0, \quad R_1(0)\ge0, \quad R(0)\ge0,\quad i=1,2,\\
&S(0)+I_1(0)+I_2(0)+I_{12}(0)+R_1(0)+R(0)=1,
\end{split}
\end{equation}
which ensures that for all~$t>0$,
\begin{equation}\label{eq:boundOnVariables}
\begin{split}
&S(t)\ge0,\quad I_i(t)\ge0, \quad I_{12}(t)\ge0, \quad R_1(t)\ge0, \quad R(t)\ge0,\quad i=1,2,\\
&S(t)+I_1(t)+I_2(t)+I_{12}(t)+R_1(t)+R(t)\equiv1.
\end{split}
\end{equation}
In what follows, we use the algebraic relation~\eqref{eq:boundOnVariables} to express~$R(t)$ in terms of the other compartment sizes and consider the model as a system of the remaining equations~\eqref{eq:reduced_eqS}--\eqref{eq:reduced_eqR1}.
We consider the model~\eqref{eq:base_model} with vital dynamics~$\mu>0$, restricting the feasible parameter space to strictly positive values:
\begin{equation}\label{eq:parameterspace_reduced_only}
\mu>0,\quad \beta_i>0,\quad \gamma_i>0, \quad \eta_2>0,\quad \sigma_{12}>0,\qquad i=1,2.
\end{equation}

A standard computation of the basic reproduction number, e.g., as in~\cite{chung2016dynamics,castillo1989epidemiological}, yields
\[
\mathcal{R}_0=\max\{\mathcal{R}_1, \mathcal{R}_2\},
\]
where
\begin{equation}\label{eq:RRi}
\mathcal{R}_i=\frac{\beta_i}{\gamma_i+\mu}.
\end{equation}
\section{Steady States}\label{sec:steady_states}
The system~\eqref{eq:base_model} has a disease-free equilibrium,~$\phi^{\rm DFE}=(1,0,0,0,0)$, that is stable, by definition, when~$\mathcal{R}_0<1$, and two single-strain endemic states~$\phi^{EE,i}$ for~$i=1,2$: 
\begin{enumerate}
\item If~$\mathcal{R}_1>1$, there exists a steady-state~$\phi^{EE,1}$ of~\eqref{eq:base_model} with compartment sizes 
\begin{equation}\label{eq:phiEE1}
    S^{EE,1}=\frac1{\mathcal{R}_1},\quad I_1^{EE,1}=\frac{\mathcal{R}_1-1}{(\gamma_1+\mu)\mathcal{R}_1}\mu,\quad R_1^{EE,1}=\frac{\mathcal{R}_1-1}{(\gamma_1+\mu)\mathcal{R}_1}\gamma_1,\quad I_2^{EE,1}=I_{12}^{EE,1}=0.
\end{equation}

\item If~$\mathcal{R}_2>1$, then there exists a steady-state~$\phi^{EE,2}$ of~\eqref{eq:base_model} with compartment sizes
\begin{equation}\label{eq:phiEE2}
    S^{EE,2}=\frac1{\mathcal{R}_2},\quad I_2^{EE,2}=\frac{\mathcal{R}_2 - 1}{(\gamma_2+\mu)\mathcal{R}_2}\mu,\quad I_1^{EE,2}=I_{12}^{EE,2}=R_1^{EE,2}=0.
\end{equation}
\end{enumerate}
The invasion numbers,~$\hat{\mathcal{R}}_j^i$, of strain~$j$ at the endemic equilibrium,~$\phi^{EE,i}$, of strain~$i$ (for~$i=1,2$ and~$j\ne i$) are computed as in~\cite[Ch. 8.4]{martcheva2015introduction}:
\begin{equation}\label{eq:Rinvasion1}
\hat{\mathcal{R}}_2^1=\frac{\mathcal{R}_2}{\mathcal{R}_1}\left[1+\frac{\gamma_1(\gamma_2+\mu)\eta_2\sigma_{12}}{(\gamma_1+\mu)(\gamma_{12}+\mu)}(\mathcal{R}_1-1)\right],\qquad \hat{\mathcal{R}}_1^2=\frac{\mathcal{R}_1}{\mathcal{R}_2}.
\end{equation}

In accordance with standard epidemiological theory, the endemic steady state~$\phi^{EE,i}$ is linearly stable if and only if the invasion reproduction number~$\hat{\mathcal{R}}_i^j$  
of the competing strain~$j\ne i$ is strictly less than one.
\begin{proposition}[Stability of Single-Strain Steady States]\label{prop:single_strain_stability}
    Let~$\mathcal{R}_i>1$ such that the single-strain endemic steady state,~$\phi^{EE,i}$, of system~\eqref{eq:base_model} exists. Then,~$\phi^{EE,i}$ is linearly stable if and only if the invasion number,~$\hat{\mathcal{R}}_j^i$, of the competing strain~$j\ne i$ is strictly less than one:
    \[
    \hat{\mathcal{R}}_j^i<1.
    \]
\end{proposition}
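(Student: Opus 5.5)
We linearize the five-dimensional system \eqref{eq:reduced_eqS}--\eqref{eq:reduced_eqR1} at each single-strain equilibrium. The key observation is that the Jacobian is block-triangular once the variables are ordered as resident variables followed by invader variables. The invader equations are linear in the invader infection variables, with coefficients evaluated at the resident state, because they are proportional to the invading infectives. Thus the invader variables decouple to first order from the resident ones. The spectrum therefore splits into the spectrum of the resident single-strain SIR block and the spectrum of the invader block, and we treat each separately.

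\emph{Case $i=1$.} Order the variables as $(S,I_1,R_1\,|\,I_2,I_{12})$. At $\phi^{EE,1}$ we have $I_2=I_{12}=0$, so the derivatives of the $I_2$ and $I_{12}$ equations with respect to $S,I_1,R_1$ vanish. The Jacobian is then upper block-triangular. The resident block for $(S,I_1)$ is the standard endemic SIR Jacobian
\[
\begin{pmatrix}-\mu-\beta_1 I_1^{*} & -(\gamma_1+\mu)\\ \beta_1 I_1^{*} & 0\end{pmatrix},
\]
which has negative trace and positive determinant $\beta_1 I_1^*(\gamma_1+\mu)$ when $\mathcal{R}_1>1$. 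The $R_1$ row adds the eigenvalue $-\mu$, because the $R_1$ equation restricted to $(S,I_1,R_1)$ has diagonal entry $-\mu$ and $R_1$ does not feed back into $S$ or $I_1$. The resident block is therefore always stable. The invader block for $(I_2,I_{12})$ is
\[
A=\begin{pmatrix}\beta_2S^*-(\mu+\gamma_2) & \beta_2\eta_2S^*\\ \sigma_{12}\beta_2R_1^* & \sigma_{12}\beta_2\eta_2R_1^*-(\mu+\gamma_{12})\end{pmatrix},
\]
which is Metzler: its off-diagonal entries are positive. We write $A=F-V$ with $F\ge0$ the new-infection matrix and $V=\mathrm{diag}(\mu+\gamma_2,\mu+\gamma_{12})$. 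The standard next-generation result (van den Driessche--Watmough) gives $s(A)<0 \iff \rho(FV^{-1})<1$. Here $FV^{-1}$ has rank one, since both rows of $F$ are multiples of $(1,\eta_2)$. Its spectral radius is therefore its trace,
\[
\frac{\beta_2S^*}{\mu+\gamma_2}+\frac{\sigma_{12}\beta_2\eta_2R_1^*}{\mu+\gamma_{12}}.
\]
Substituting \eqref{eq:phiEE1} should reproduce exactly $\hat{\mathcal{R}}_2^1$ in \eqref{eq:Rinvasion1}, because $\beta_2/\mathcal{R}_1(\mu+\gamma_2)=\mathcal{R}_2/\mathcal{R}_1$. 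Hence $\phi^{EE,1}$ is stable iff $\hat{\mathcal{R}}_2^1<1$. If $\hat{\mathcal{R}}_2^1>1$, then $A$ has a positive real eigenvalue and $\phi^{EE,1}$ is unstable.

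\emph{Case $i=2$.} Order the variables as $(S,I_2,I_{12}\,|\,I_1,R_1)$. At $\phi^{EE,2}$ we have $I_1=R_1=0$. The $I_1$ row is $(\ldots,\beta_1S^*-(\mu+\gamma_1))$ in the $I_1$ entry and zero elsewhere. The $R_1$ row has entries $\gamma_1$ in the $I_1$ column and $-\sigma_{12}\beta_2I_2^*-\mu$ in the $R_1$ column, and zero in the $S,I_2,I_{12}$ columns, because those enter only through products with $R_1=0$. The invader block is therefore triangular, with eigenvalues $\beta_1/\mathcal{R}_2-(\mu+\gamma_1)=(\mu+\gamma_1)(\hat{\mathcal{R}}_1^2-1)$ and $-\sigma_{12}\beta_2I_2^*-\mu<0$. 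For the resident block, the $I_{12}$ equation at $R_1=0$ gives eigenvalue $-(\mu+\gamma_{12})$, since $I_{12}$ is decoupled there and the only coupling is $I_{12}\to S,I_2$, which is triangular. The remaining $(S,I_2)$ block is again the stable SIR endemic block.

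The main care point is checking the triangular structure, that is, that no resident variable enters the invader equations at first order. A second care point is the borderline $\hat{\mathcal{R}}=1$, where there is a zero eigenvalue; this is excluded by reading ``linearly stable'' as all eigenvalues having negative real part. The algebraic check that the trace of $FV^{-1}$ equals \eqref{eq:Rinvasion1} is routine.
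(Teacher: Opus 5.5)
Your proof is correct. It shares the paper's skeleton: the spectrum at $\phi^{EE,i}$ splits into a resident part that is always stable and an invader part decided by $\hat{\mathcal{R}}_j^i$. The decisive step, however, uses a different tool.

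The paper's proof (Appendix~\ref{app:proof_phiEE_extended}) is written for the extended model and specialized via $w=w_1=w_2=\delta_1=\delta_2=0$. It states the characteristic polynomial at each boundary equilibrium in already-factored form, obtained by direct computation. It then applies Routh--Hurwitz to the resident cubic $Q_3^{EE,i}$ and, at $\phi^{EE,1}$, to an invader quadratic $Q_2^{EE,1}$ whose constant term is proportional to $1-\hat{\mathcal{R}}_2^1$. At $\phi^{EE,2}$ the threshold appears as the linear factor $\lambda+(\mu+\gamma_1)(1-\hat{\mathcal{R}}_1^2)$, which is exactly your triangular invader block.

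Your block-triangular ordering explains why that factorization exists, rather than presenting it as the output of symbolic algebra. Your use of the van den Driessche--Watmough lemma with the rank-one $FV^{-1}$ identifies the threshold directly as the spectral radius of the invader next-generation matrix, i.e.\ as the invasion number itself. This spares you the separate sign check on the linear coefficient $b_2^{EE,1}$, whose lengthy expression in the paper is just $-\mathrm{tr}\,A$ rearranged. The paper's check, in effect, is the identity $\det A=(\mu+\gamma_2)(\mu+\gamma_{12})(1-\hat{\mathcal{R}}_2^1)$ together with the fact that both diagonal entries of $A$ are negative when $\hat{\mathcal{R}}_2^1<1$; this is an equally elementary alternative to your lemma.

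The paper's explicit polynomial handles the nine-dimensional extended model in one sweep. Your structural argument transfers there just as easily: at $\phi^{EE,1}$ the extra compartments $Q_2$, $R_2$, $R$ are fed only by the invader infectives and contribute the eigenvalues $-(\mu+\alpha_2)$, $-(\mu+w_2)$, $-(\mu+w)$.
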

\begin{proof}
See Appendix~\ref{app:proof_phiEE_extended}.  Note that this proof applies to the more general Proposition~\ref{prop:stability_EE_extended}, and reduces to the proof of Proposition~\ref{prop:single_strain_stability} when 
$w=w_1=w_2=\delta_1=\delta_2=0$.
\end{proof}

We, therefore, obtain a partition of the~$(\mathcal{R}_1,\mathcal{R}_2)$ plane into several domains based on the stability of the disease-free and single-strain equilibria, as illustrated in Figure~\ref{fig:Omega_mu}:
\begin{equation}\label{eq:Omega_c}
\begin{split}
\Omega_{c,0}&=\left\{\mathcal{R}_1\le 1,\quad \mathcal{R}_2\le1\right\},\\
\Omega_{c,1}&=\left\{\mathcal{R}_1>1,\quad \hat{\mathcal{R}}_2^1<1\right\},\qquad 
\Omega_{c,2}=\left\{\mathcal{R}_2>1,\quad \mathcal{R}_2>\mathcal{R}_1\right\},
\end{split}
\end{equation}
where~$\Omega_{c,0}$ (gray region) is the domain in which the disease-free equilibrium is stable, and~$\Omega_{c,1}$ (blue region) and~$\Omega_{c,2}$ (red region) are the domains where the single-strain endemic states~$\phi^{EE,1}$ and~$\phi^{EE,2}$ are stable, respectively.   \begin{figure}[ht!]
\begin{center}
\includegraphics[width=0.5\textwidth]{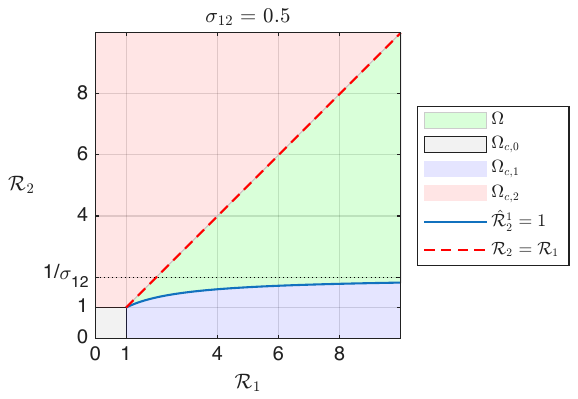}
\caption{
The stability regions~$\Omega_{c,0}$ (gray), $\Omega_{c,1}$ (blue), $\Omega_{c,2}$ (red), defined by~\eqref{eq:Omega_c}, and the coexistence region~$\Omegamu$ (green) defined by~\eqref{eq:omega_mu}. The parameters are chosen such that~$\gamma_2=\gamma_{12}$ and $\sigma_{12}=0.5$ at the limit~$\mu\to0^+$. Note that all other model parameters do not affect the boundaries of these domains in this asymptotic limit.}
\label{fig:Omega_mu}
\end{center}
\end{figure}The following proposition shows the existence of a unique steady-state of coexistence in the complementary regime~$\Omegamu$ (green region), defined as:
 \begin{equation}\label{eq:omega_mu}
 \Omegamu=\left\{\hat{\mathcal{R}}_1^2>1,\quad \hat{\mathcal{R}}_2^1>1\right\}.
 \end{equation}

\begin{proposition}[Coexistence Steady State]\label{prop:phiCE}
    Let~$(\mathcal{R}_1,\mathcal{R}_2)\in \Omegamu$. Then, system~\eqref{eq:base_model} possesses a unique coexistence steady state,~$\phi^{CE}$, characterized by strictly positive infectious populations~$I_1>0$ and~$I_2>0$. The compartment sizes of~$\phi^{CE}$ are given by:
\begin{subequations}\label{eq:phiCE}
\begin{equation}
    \begin{split}
S^{CE}&=\frac1{\mathcal{R}_1},\quad I_1^{CE}=\frac{[1 + \sigma_{12}(\mathcal{R}_1 - 1)]\mu R_1^{CE}}{\gamma_1+(\gamma_1+\mu)\sigma_{12}\mathcal{R}_1 R_1^{CE}},\quad I_2^{CE}=\frac{\mu\,c}{(\gamma_2+\mu)\sigma_{12}}(\hat{\mathcal{R}}^1_2-1)
,\\
I_{12}^{CE}&=\frac{\mathcal{R}_1-\mathcal{R}_2}{\eta_2\mathcal{R}_2}I_2^{CE},\quad  
R_1^{CE} = \frac{(\gamma_{12}+\mu)(\mathcal{R}_1 - \mathcal{R}_2)}{(\gamma_2+\mu)\eta_2\sigma_{12}\mathcal{R}_1\mathcal{R}_2},
\end{split}
\end{equation}
where
\begin{equation}
c=
\frac{(\gamma_1+\mu)(\gamma_{12}+\mu)}{(\gamma_1+\mu)(\gamma_{12}+\mu)(\mathcal{R}_1-\mathcal{R}_2)+\gamma_1\eta_2(\gamma_2+\mu)\mathcal{R}_2}.
\end{equation}  
\end{subequations}
\end{proposition}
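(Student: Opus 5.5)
We solve the steady-state equations of \eqref{eq:base_model} directly, imposing $I_1>0$ and $I_2>0$ from the outset. Since $I_1>0$, equation \eqref{eq:reduced_eqI1} forces $\beta_1 S-(\mu+\gamma_1)=0$, hence $S^{CE}=1/\mathcal{R}_1$. This single relation is what makes the system tractable: every product of $S$ with an infected variable becomes linear.

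Introduce the force of infection of strain 2, $\Lambda=\beta_2(I_2+\eta_2 I_{12})$. Equation \eqref{eq:reduced_eqI2} then reads $\Lambda/\mathcal{R}_1=(\mu+\gamma_2)I_2$. Substituting $\Lambda$ back into its own definition gives
\[
I_2+\eta_2 I_{12}=\frac{\mathcal{R}_2}{\mathcal{R}_1}\,(I_2+\eta_2I_{12})+\eta_2 I_{12}\Big(1-\tfrac{\mathcal{R}_2}{\mathcal{R}_1}\Big)\cdot\tfrac{\mathcal{R}_1}{\mathcal{R}_2}\ \text{(rearranged)},
\]
which simplifies to $I_{12}=\frac{\mathcal{R}_1-\mathcal{R}_2}{\eta_2\mathcal{R}_2}I_2$. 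Equation \eqref{eq:reduced_eqI12} with $I_{12}\ne0$ gives $\sigma_{12}\Lambda R_1=(\mu+\gamma_{12})I_{12}$. Dividing this by the $I_2$ equation eliminates $\Lambda$ and yields the stated formula for $R_1^{CE}$.

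The remaining unknowns $I_1$ and $I_2$ come from the two linear equations left over. From \eqref{eq:reduced_eqR1}, $\gamma_1 I_1=(\sigma_{12}\Lambda+\mu)R_1$. From \eqref{eq:reduced_eqS}, $\mu(1-1/\mathcal{R}_1)=(\beta_1 I_1+\Lambda)/\mathcal{R}_1$, using $\Lambda=(\gamma_2+\mu)\mathcal{R}_1 I_2$.
\begin{itemize}
\item Eliminating $\Lambda$ between these two equations gives $I_1$ in terms of $R_1$. This should reproduce the stated expression $I_1^{CE}=[1+\sigma_{12}(\mathcal{R}_1-1)]\mu R_1/(\gamma_1+(\gamma_1+\mu)\sigma_{12}\mathcal{R}_1R_1)$.
\item Solving the same pair for $I_2$ instead, and substituting $R_1^{CE}$, gives a ratio that I expect to factor as $\frac{\mu c}{(\gamma_2+\mu)\sigma_{12}}(\hat{\mathcal{R}}_2^1-1)$, with $\hat{\mathcal{R}}_2^1$ as in \eqref{eq:Rinvasion1}.
\end{itemize}
Uniqueness is automatic. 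After $S=1/\mathcal{R}_1$ is fixed, each step is a linear solve or a forced division, so there is at most one solution with $I_1,I_2>0$.

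\textbf{The main obstacle is this last algebraic factorization together with the positivity check.}
\begin{itemize}
\item Recognizing $\hat{\mathcal{R}}_2^1-1$ inside the numerator of $I_2$ is the hard part of the algebra. I would expand both sides as polynomials in $\mathcal{R}_1$ and $\mathcal{R}_2$ and compare coefficients.
\item For positivity, first note that in $\Omegamu$ we have $\hat{\mathcal{R}}_1^2>1$, i.e. $\mathcal{R}_1>\mathcal{R}_2$. This gives $R_1^{CE}>0$ and $I_{12}^{CE}>0$ directly.
\item It also gives $c>0$, since both terms in its denominator are then positive. Combined with $\hat{\mathcal{R}}_2^1>1$, this yields $I_2^{CE}>0$.
\item $I_1^{CE}>0$ follows because $R_1^{CE}>0$ and $\mathcal{R}_1>1$. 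Note $\mathcal{R}_1>\mathcal{R}_2$ together with $\hat{\mathcal{R}}_2^1>1$ forces $\mathcal{R}_1>1$; otherwise the bracket in \eqref{eq:Rinvasion1} is at most $1$ and $\hat{\mathcal{R}}_2^1<1$.
\item Finally, $R^{CE}=1-\sum(\cdot)$ is automatically nonnegative. It equals $(\gamma_2I_2+\gamma_{12}I_{12})/\mu$ by the $R$ equation, which is positive.
\end{itemize}
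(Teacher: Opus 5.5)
Your proposal is correct and follows the paper's proof step for step ($S^{CE}=1/\mathcal{R}_1$ from the $I_1$ equation, then $I_{12}^{CE}$, then $R_1^{CE}$, then a linear $2\times2$ system for $I_1,I_2$). The factorization you only anticipate does hold: eliminating $I_1$ gives $\Lambda=\mu[\gamma_1(\mathcal{R}_1-1)-(\gamma_1+\mu)\mathcal{R}_1R_1]/[\gamma_1+(\gamma_1+\mu)\sigma_{12}\mathcal{R}_1R_1]$, whose numerator and denominator become multiples of $\hat{\mathcal{R}}_2^1-1$ and $1/c$ once $R_1^{CE}$ is substituted, and your positivity and uniqueness checks add detail the paper omits. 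One slip: your displayed identity for $I_{12}$ is garbled and, read literally, yields the reciprocal factor; the $I_2$ equation with $S=1/\mathcal{R}_1$ gives $\mathcal{R}_2(I_2+\eta_2I_{12})=\mathcal{R}_1I_2$ directly, which is your stated formula.
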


\begin{proof}
From~\eqref{eq:reduced_eqI1},~$I_1\ne0$ implies
\[
S^{CE}=\frac1{\mathcal{R}_1}.
\]
From~\eqref{eq:reduced_eqI2},
\[
I_{12}^{CE}=\frac{\mathcal{R}_1 - \mathcal{R}_2}{\eta_2\mathcal{R}_2}I_2^{CE}.
\]
Similarly, the expression for~$R_1^{CE}$ follows directly from~\eqref{eq:reduced_eqI12}.
The remaining system for~$I_1^{CE}$ and~$I_2^{CE}$ is linear, and its solution yields~\eqref{eq:phiCE}.
\end{proof}

\section{Stability of the Coexistence Steady State}\label{sec:stability_phiCE_base}

Direct stability analysis of~$\phi^{CE}$ via the Routh-Hurwitz criteria is intractable due to the complexity of the quintic characteristic polynomial. To address this, we follow established methodologies in the literature~\cite{andreasen1997dynamics, dawes2002onset, chung2016dynamics, gavish2024newoscillatoryregimetwostrain} and analyze the system in the asymptotic regime
\begin{equation}\label{eq:asymptotic_regime_stability}
\mu\ll \gamma_i,\quad \mu\ll \gamma_{12},\qquad i=1,2,
\end{equation}
which corresponds to a demographic turnover rate significantly slower than the disease recovery rates. This separation of time scales is characteristic of acute infections like influenza or dengue, where the infectious period (days/weeks) is negligible compared to the host lifespan (decades)~\cite{CDCyellowbook2024}. Mathematically, this represents a singular perturbation problem, as the coexistence equilibrium vanishes in the limit $\mu \to 0$.

Building on this framework, we apply the singular perturbation approach utilized in~\cite{gavish2024newoscillatoryregimetwostrain}. In that study, the method was applied to a model with mutual partial cross-immunity ($0<\sigma_{ij}<1$) under the restricted assumptions of symmetric parameters for secondary infections: identical recovery rates ($\gamma_{ij}=\gamma_i$) and equal infectivity ($\eta_i=1$). Consequently, that analysis pointed to instability in a rather limited region, specifically when either~$\sigma_{12}\to 1$ or~$\sigma_{21}\to 1$. Here, we apply the same mathematical approach to the strongly asymmetric model ($\sigma_{21}=0$), but we allow for general positive secondary epidemiological parameters ($\gamma_{12}$ and~$\eta_2$). Significantly, this generalization yields vastly different outcomes, uncovering a much broader instability regime that reduces previous work to a borderline case.

The characteristic polynomial of the Jacobian of~\eqref{eq:base_model} at~$\phi^{CE}$ takes the form
\begin{equation}\label{eq:Plambda}
P(\lambda)=\lambda^5 + \frac{\gamma_2(\mathcal{R}_1 - \mathcal{R}_2)+ \gamma_{12}\mathcal{R}_2}{\mathcal{R}_1}\lambda^4+\mu\lambda^2P_1(\lambda)+\mu^2P_2(\lambda)+\mu^3P_3(\lambda)+\mathcal{O}(\mu^4),
\end{equation}
where
\begin{equation}
P_i(\lambda)=P_{i0}+\lambda P_{i1}+\cdots+\lambda^{d_i} P_{id_i},\quad P_{i0}\ne0,\quad i=0,1,2,\cdots,%\sum_{k=0}^5 \lambda^kP_{ik}
\end{equation}
where~$d_i$ is the degree of each polynomial~$P_i(\lambda)$.
The coefficients,~$P_{ij}=P_{ij}(\mathcal{R}_1,\mathcal{R}_2; \gamma_1, \gamma_2, \gamma_{12}, \sigma_{12}, \eta_2)$, depend on the parameters of the problem. The explicit expressions for~$P_{ij}$ are available through symbolic computation and are generally long and cumbersome. For conciseness, we will provide explicit expressions only for the coefficients whose form and dependence on parameters are used in the stability analysis.

The polynomial~$P(\lambda)$ has five roots. Four roots~$\{\lambda_j\}_{j=1}^4$ are of order~$\mathcal{O}(\sqrt\mu)$, 
\begin{subequations}\label{eq:lambda_14}
\begin{equation}
    \lambda_j=\sqrt{\mu} \,r_j+o(\sqrt\mu),\quad j=1,2,\cdots,4,
\end{equation}
where~$\{r_j\}_{j=1}^4$ are the roots of
\begin{equation}\label{eq:eq4ri}
\frac{((\mathcal{R}_1-\mathcal{R}_2)\gamma_{12}+\mathcal{R}_2\eta_2\gamma_2)((\mathcal{R}_1 - \mathcal{R}_2)\gamma_2 + \gamma_{12}\mathcal{R}_2)\sigma_{12}}{\gamma_{12}}r^4 + P_{10}r^2 + P_{20}=0,
\end{equation}
\end{subequations}
and an additional root that satisfies
\[
\lambda_5=-\frac{\gamma_2(\mathcal{R}_1 - \mathcal{R}_2)+ \gamma_{12}\mathcal{R}_2}{\mathcal{R}_1}+\mathcal{O}(\mu).
\]
The latter root is negative in the feasible parameter space, i.e., when~$(\mathcal{R}_1,\mathcal{R}_2)\in\Omegamu$.  

We now consider the roots~\eqref{eq:lambda_14}. To simplify the stability criteria, we group the parameters into constants~$A$ and~$B$ defined as
\begin{equation}\label{eq:AnB}
\begin{split}
A&=\gamma_1\gamma_{12}(\mathcal{R}_2+\sigma_{12}(\mathcal{R}_1 - \mathcal{R}_2))(\hat{\mathcal{R}}_2^1-1)\mathcal{R}_1,
\\B&=\frac{\gamma_1\sigma_{12}( \eta_2\gamma_2\mathcal{R}_2+\gamma_{12}(\mathcal{R}_1 - \mathcal{R}_2))(\gamma_{12}\mathcal{R}_2+\gamma_2(\mathcal{R}_1- \mathcal{R}_2))(1 + \sigma_{12}(\mathcal{R}_1 - 1))\mathcal{R}_1R_1^{CE}}{\gamma_{12}(1+\sigma_{12}\mathcal{R}_1 R_1^{CE})}
,
\end{split}
\end{equation}
so that the coefficients of~\eqref{eq:eq4ri} read as
\[\begin{split}
P_{10}&=A+B,\\ P_{20}&=\frac{A\,B}{(\gamma_2(\mathcal{R}_1-\mathcal{R}_2) + \gamma_{12}\mathcal{R}_2)(\sigma_{12}(\mathcal{R}_1-\mathcal{R}_2) + \mathcal{R}_2)}.
\end{split}
\]
For~$(\mathcal{R}_1,\mathcal{R}_2)\in\Omegamu$,~$A>0$ and~$B>0$. Thus,~$P_{10}>0$ and~$P_{20}>0$. The discriminant,~$\Delta$, of the quadratic equation~\eqref{eq:eq4ri} equals
\[
\begin{split}
\Delta&=P_{10}^2 - 4\frac{((\mathcal{R}_1-\mathcal{R}_2)\gamma_{12}+\mathcal{R}_2\eta_2\gamma_2)((\mathcal{R}_1 - \mathcal{R}_2)\gamma_2 + \gamma_{12}\mathcal{R}_2)\sigma_{12}}{\gamma_{12}}P_{20}\\&=(A-B)^2+(\gamma_{12}-\eta_2\gamma_2\sigma_{12})(\hat{\mathcal{R}}_2^1-1)C\\
&=(A-B)^2+\gamma_{12}(1-q_{12}^c)(\hat{\mathcal{R}}_2^1-1)C,
\end{split}
\]
where
\[
C=\frac{4\sigma_{12}\gamma_1^2(\gamma_{12}\mathcal{R}_2+\gamma_2(\mathcal{R}_1-\mathcal{R}_2))(\eta_2\gamma_2\mathcal{R}_2+ \gamma_{12}(\mathcal{R}_1-\mathcal{R}_2))(1+\sigma_{12}(\mathcal{R}_1-1))\mathcal{R}_1^2\mathcal{R}_2R_1^{CE}}{1+\sigma_{12}\mathcal{R}_1R_1^{CE}}>0,
\]
and
\begin{equation}\label{eq:sigma12c}
    q_{12}^c=\frac{\sigma_{12}\eta_2\gamma_2}{\gamma_{12}}.
\end{equation} 
The analysis gives rise to a new quantity,~$q_{12}^c$. 

Biologically,~$q_{12}^c$ quantifies the relative transmission advantage of secondary infections. This advantage can stem from heightened infectivity ($\eta_2$), decreased crossed-immunity or enhanced susceptibility ($\sigma_{12}$), or a longer infectious period ($\gamma_{12}^{-1}$). Crucially, this formulation reveals that instability does not strictly require weak cross-immunity ($\sigma_{12} \approx 1$) as~\cite{gavish2024newoscillatoryregimetwostrain} implied. Rather, instability emerges whenever the combined epidemiological traits of a secondary infection confer an overall transmission advantage ($q_{12}^c \ge 1$).
  
\begin{remark}
The invasion number~$\hat{\mathcal{R}}_2^1$, see~\eqref{eq:Rinvasion1}, equals
    \begin{equation}\label{eq:Rinvasion1_sigma12c}
\hat{\mathcal{R}}_2^1=\frac{\mathcal{R}_2}{\mathcal{R}_1}\left[1+q_{12}^c(\mathcal{R}_1-1)\right]+\mathcal{O}(\mu),
\end{equation}
so that the ratio~$q_{12}^c$ relates to the contribution of secondary infections to the invasion number. 
\end{remark}

In the case~$q_{12}^c<1$, it follows that~$\Delta>0$, and hence~$r_j^2<0$ for~$j=1,2,3,4$. This implies that the relevant eigenvalues are purely imaginary at leading order, and stability is determined by the next-order terms. Numerous works, e.g.,~\cite{andreasen1997dynamics, gomes2002dynamics, dawes2002onset, chung2016dynamics}, have derived these terms, revealing long and cumbersome expressions from which stability could only be determined in highly specific parameter cases. We will not pursue this direction here but will present numerical results suggesting that the coexistence steady-state is stable in this parameter regime.

When~$q_{12}^c>1$, the second term of~$\Delta$ is negative, and therefore~$\Delta$ can become negative if~$A$ is sufficiently close to~$B$. The following lemma points to a single curve in the parameter space along which~$A=B$.

\begin{lemma}\label{lem:curveGamma}
Let~$\Omegamu$ be defined by~\eqref{eq:omega_mu}, and let~$A$ and~$B$ be defined by~\eqref{eq:AnB}. Then, for a sufficiently small~$\mu>0$, there exists a single curve~$\Gamma^*\subset\Omegamu$ along which~$A=B$.
\end{lemma}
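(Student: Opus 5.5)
The plan is to reduce to the limit $\mu\to0^+$, prove the claim there by a sign-change plus monotonicity argument along one-parameter slices of $\Omegamu$, and then recover small $\mu>0$ by perturbation. Set $F(\mathcal{R}_1,\mathcal{R}_2;\mu)=A-B$. Both $A$ and $B$ in~\eqref{eq:AnB} are explicit rational functions of $(\mathcal{R}_1,\mathcal{R}_2)$, and they depend smoothly on $\mu$ through $R_1^{CE}$ and $\hat{\mathcal{R}}_2^1$, with a regular limit as $\mu\to0$. By~\eqref{eq:Rinvasion1_sigma12c} and $\hat{\mathcal{R}}_1^2=\mathcal{R}_1/\mathcal{R}_2$, the limiting region is
\[
\Omegamu_0=\left\{\mathcal{R}_1>1,\quad \frac{\mathcal{R}_1}{1+q_{12}^c(\mathcal{R}_1-1)}<\mathcal{R}_2<\mathcal{R}_1\right\}.
\]
For every $\mathcal{R}_1>1$ the slice $\{\mathcal{R}_1\}\times(\underline{\mathcal{R}}_2(\mathcal{R}_1),\mathcal{R}_1)$ is a nonempty interval, because $1+q_{12}^c(\mathcal{R}_1-1)>1$. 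I will therefore work on these vertical slices at fixed $\mathcal{R}_1$, with $\mathcal{R}_2$ varying.

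\textbf{Step 1 (opposite signs on the two boundaries).} On the upper boundary $\mathcal{R}_2=\mathcal{R}_1$, i.e.\ $\hat{\mathcal{R}}_1^2=1$, we have $R_1^{CE}=0$, so $B=0$. At the same point $\hat{\mathcal{R}}_2^1=1+q_{12}^c(\mathcal{R}_1-1)>1$, so $A>0$ and hence $F>0$. On the lower boundary $\hat{\mathcal{R}}_2^1=1$ we have $A=0$. There $\mathcal{R}_1>\mathcal{R}_2$, so $R_1^{CE}>0$ and every factor of $B$ is positive, giving $F<0$. By the intermediate value theorem each slice contains at least one zero of $F$.

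\textbf{Step 2 (exactly one zero per slice).} I would show that the ratio $A/B$ is strictly monotone in $\mathcal{R}_2$ on each slice, or at least that $\partial_{\mathcal{R}_2}F\neq0$ wherever $F=0$.

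1. Write $x=\mathcal{R}_1-\mathcal{R}_2\in\bigl(0,\mathcal{R}_1-\underline{\mathcal{R}}_2\bigr)$.
2. In $B$, the factor $\mathcal{R}_1R_1^{CE}$ is proportional to $x/\mathcal{R}_2$, and $\mathcal{R}_1R_1^{CE}/(1+\sigma_{12}\mathcal{R}_1R_1^{CE})$ is increasing in $x$. The remaining factors are affine in $x$ and positive.
3. In $A$, the factor $\hat{\mathcal{R}}_2^1-1$ is, at leading order, $\mathcal{R}_1^{-1}\bigl[\mathcal{R}_2(1+q_{12}^c(\mathcal{R}_1-1))-\mathcal{R}_1\bigr]$. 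This is linear and decreasing in $x$ and vanishes at the lower boundary.
4. Hence $A/B$ has the form $(\text{positive decreasing linear factor})\times(\text{positive factor})/(\text{increasing factor vanishing at }x=0)$. I expect to show that its logarithmic derivative is strictly negative.

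\textbf{Main obstacle.} The factor $\mathcal{R}_2+\sigma_{12}x$ in $A$ is increasing in $x$ when $\sigma_{12}>1$. Likewise, the affine factors in $B$ may have either monotonicity depending on the signs of $\gamma_{12}-\eta_2\gamma_2$ and $\gamma_2-\gamma_{12}$. The cancellation therefore has to be checked by bounding each factor's logarithmic derivative. My first attempt is to compare each affine factor $a+bx$, whose log-derivative is $b/(a+bx)$ with $a+bx>0$, against the dominant term $1/x$ that comes from $R_1^{CE}\propto x$. If that fails, the fallback is to clear denominators, write $F=0$ as a polynomial equation in $\mathcal{R}_2$ of low degree, and count its roots in the interval by Descartes' rule or a discriminant, using the endpoint signs from Step 1.

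\textbf{Step 3 (the curve and persistence for small $\mu$).} Given Step 2, the zeros form a graph $\mathcal{R}_2=g(\mathcal{R}_1)$, and the implicit function theorem makes $g$ smooth because $\partial_{\mathcal{R}_2}F\neq0$ there. This gives a single curve $\Gamma^*\subset\Omegamu_0$. For $\mu>0$ small, $F(\cdot;\mu)$ is a $C^1$-small perturbation of $F(\cdot;0)$ on compact subsets. Transversality then persists, and the boundary sign conditions of Step 1 hold exactly for every $\mu$, since $R_1^{CE}=0$ iff $\mathcal{R}_1=\mathcal{R}_2$ and $A=0$ iff $\hat{\mathcal{R}}_2^1=1$. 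Together these give a unique perturbed curve $\Gamma^*\subset\Omegamu$, which completes the lemma.
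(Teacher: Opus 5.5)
Your Step~1 is exactly the paper's existence argument. On each slice of fixed $\mathcal{R}_1$, $A=0<B$ at $\hat{\mathcal{R}}_2^1=1$ and $B=0<A$ at $\mathcal{R}_2=\mathcal{R}_1$, so $h=A-B$ changes sign.

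The gap is Step~2, which carries the whole content of ``single''. You only state what you expect to show, and you miss the observation that makes it immediate. At leading order in $\mu$,
\[
1+\sigma_{12}\mathcal{R}_1R_1^{CE}=\frac{\eta_2\gamma_2\mathcal{R}_2+\gamma_{12}(\mathcal{R}_1-\mathcal{R}_2)}{\eta_2\gamma_2\mathcal{R}_2},
\]
which is proportional to the factor $\eta_2\gamma_2\mathcal{R}_2+\gamma_{12}(\mathcal{R}_1-\mathcal{R}_2)$ in the numerator of $B$. After cancelling,
\[
B=\gamma_1\bigl(1+\sigma_{12}(\mathcal{R}_1-1)\bigr)(\mathcal{R}_1-\mathcal{R}_2)\bigl(\gamma_{12}\mathcal{R}_2+\gamma_2(\mathcal{R}_1-\mathcal{R}_2)\bigr).
\]
Since $(\hat{\mathcal{R}}_2^1-1)\mathcal{R}_1=\mathcal{R}_2\bigl(1+q_{12}^c(\mathcal{R}_1-1)\bigr)-\mathcal{R}_1$ is affine in $\mathcal{R}_2$, $A$ is quadratic as well. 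So $h$ is a polynomial of degree at most two in $\mathcal{R}_2$. With opposite signs at the endpoints, it has exactly one root in the interval, and that root is simple, which also gives you the transversality needed for the implicit function theorem. This is how the paper concludes (``$h$ is a quadratic function in $\mathcal{R}_2$''). Your fallback of clearing denominators points this way. Without seeing the cancellation, however, you get a cubic, for which a sign change only yields one or three roots.

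Your primary route can also be made to work, but not as described. You list the sign of $\gamma_{12}-\eta_2\gamma_2$ as something to control, yet that factor cancels. After cancellation, comparing each affine factor separately with the single $1/x$ fails when $\sigma_{12}>1$ and $\gamma_{12}>\gamma_2$. There are then two positive contributions, each below $1/x$, whose sum can exceed it. The fix is to absorb the second one with the $-1/(x_{\max}-x)$ that comes from the vanishing of $\hat{\mathcal{R}}_2^1-1$. Here $x=\mathcal{R}_1-\mathcal{R}_2$ and $x_{\max}=\mathcal{R}_1-\underline{\mathcal{R}}_2(\mathcal{R}_1)<\mathcal{R}_1$:
\[
\frac{d}{dx}\log\frac{A}{B}=\Bigl[\frac{\sigma_{12}-1}{\mathcal{R}_1+(\sigma_{12}-1)x}-\frac{1}{x}\Bigr]+\Bigl[\frac{\gamma_{12}-\gamma_2}{\gamma_{12}\mathcal{R}_1-(\gamma_{12}-\gamma_2)x}-\frac{1}{x_{\max}-x}\Bigr]<0.
\]
Both denominators equal $\mathcal{R}_2+\sigma_{12}x>0$ and $\gamma_{12}\mathcal{R}_2+\gamma_2x>0$, so each bracket is trivially negative when its numerator is nonpositive. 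Otherwise the first bracket is negative because $\mathcal{R}_1>0$. The second is negative because $\gamma_{12}\mathcal{R}_1/(\gamma_{12}-\gamma_2)>\mathcal{R}_1>x_{\max}$.

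Finally, the cancellation is exact only at leading order in $\mu$. That is the form in which $A$ and $B$ enter $P_{10}$ and $P_{20}$, and the setting in which the paper's degree count holds. If you want Step~3 for the literal $\mu$-dependent expressions, note a limitation. A perturbation that is small only on compact sets does not by itself give exactly one zero on every slice $\mathcal{R}_1\in(1,\infty)$ for a fixed $\mu$. You would need an additional uniformity argument near $\mathcal{R}_1\to1^+$, where the slices shrink, and as $\mathcal{R}_1\to\infty$.
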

\begin{proof}
See Appendix~\ref{app:proof_lem_curveGamma}.
\end{proof}
Following Lemma~\ref{lem:curveGamma}, if~$q_{12}^c>1$, then~$\Delta<0$ along~$\Gamma^*$. In this case, at least one of the roots~$\{r_j\}_{j=1}^4$ has a positive real part, and therefore the coexistence steady-state is unstable along~$\Gamma^*$.

In the borderline case,~$q_{12}^c=1$,~$\Delta=0$ only when~$A=B$. Here, stability is determined by the next-order terms. Expressions~\eqref{eq:AnB} for~$A$ and~$B$ reduce to
\[
A=(\mathcal{R}_1 - \mathcal{R}_2)(\sigma_{12}\mathcal{R}_1+1 - \sigma_{12}),\quad B=[\gamma_{12} + \eta_2\gamma_2\sigma_{12}(\mathcal{R}_1 - 1)]\mathcal{R}_2 - \gamma_{12}\mathcal{R}_1,
\]
so that
\[
P_{10}=\gamma_1(\mathcal{R}_1 - \mathcal{R}_2) + \gamma_2(\mathcal{R}_2 - 1),\quad P_{20}=\gamma_1\gamma_2(\mathcal{R}_1 - \mathcal{R}_2)(\mathcal{R}_2 - 1).
\]
Following~\cite{gavish2024newoscillatoryregimetwostrain}, to ascertain stability when~$\Delta=0$, we consider the higher-order expansion
\begin{equation}\label{eq:expansion_with_si_base}
\lambda_j=\sqrt{\mu} \,r_j +\mu^{3/4} s_j+\mu q_j+\mathcal{O}(\mu\sqrt[4]\mu),\quad j=1,2,3,4.
\end{equation}
Substituting~\eqref{eq:expansion_with_si_base} into~$P(\lambda)$ and equating powers of~$\mu$ yields, at leading order,
\[
r_{1,2}=\pm i\sqrt{\gamma_2(\mathcal{R}_2-1)},\quad r_{3,4}=\pm i\sqrt{\gamma_1\mathcal{R}_1-\gamma_2\mathcal{R}_2}.
\]
At the next order, we have
\begin{equation}\label{eq:sqrtDeltasj}
\Delta r_js_j=0,\quad j=1,2,3,4.
\end{equation}
Given~$\Delta=0$, the subsequent order yields
    \begin{equation}\label{eq:si}
    s_j^2=-\frac{r_j}2\frac{P_{11}r_j^2 + P_{21}}{P_{10}}.
\end{equation}
Note that~$s_j^2$ is purely imaginary. Consequently, 
\[
s_j=c_j\,(\pm 1+i).
\]
Hence, two of the eigenvalues~$\{\lambda_j\}_{j=1}^4$ have a positive real part, and thus the coexistence steady-state is unstable along~$\Gamma^*$ and in its vicinity. 

The following Proposition summarizes the analytical results attained in this section.
\begin{proposition}[Instability along~$\Gamma^*$]\label{prop:instablity_along_Gamma*}
    Let the relative transmission advantage of a secondary infection over a primary infection satisfy
    \begin{equation}\label{eq:sigma12condition}
    q_{12}^c\ge1,
    \end{equation} and let~$\mu>0$ be sufficiently small. For any parameter set such that~$(\mathcal{R}_1,\mathcal{R}_2)\in\Gamma^*$, where the curve~$\Gamma^*\subset\Omegamu$ is defined in Lemma~\ref{lem:curveGamma}, the unique coexistence steady state,~$\phi^{CE}$, of system~\eqref{eq:base_model} is linearly unstable.
\end{proposition}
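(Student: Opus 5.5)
The plan is to split into the two cases $q_{12}^c>1$ and $q_{12}^c=1$. In both, the key is that the characteristic polynomial~\eqref{eq:Plambda} has four roots of size $\mathcal{O}(\sqrt\mu)$ and one root $\lambda_5$ bounded away from zero. Since $\lambda_5$ is negative on $\Omegamu$, instability must come from the small roots. I would first justify the scaling rigorously. Set $\lambda=\sqrt\mu\,r$ and divide $P$ by $\mu^2$. The rescaled polynomial converges, uniformly on compact sets of $r$, to $\lambda_5^{(0)}$ times the quartic~\eqref{eq:eq4ri}, up to sign conventions. Since the leading coefficient of that quartic is nonzero on $\Omegamu$, Hurwitz's theorem (continuity of roots) shows that the four small eigenvalues satisfy $\lambda_j=\sqrt\mu\,r_j+o(\sqrt\mu)$. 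When the $r_j$ are simple, they do so with an error that is analytic in $\mu^{1/2}$ or $\mu^{1/4}$.

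\textbf{Case $q_{12}^c>1$.} On $\Gamma^*$ we have $A=B$ by Lemma~\ref{lem:curveGamma}. Since $\hat{\mathcal{R}}_2^1>1$ and $C>0$ on $\Omegamu$, this gives $\Delta=\gamma_{12}(1-q_{12}^c)(\hat{\mathcal{R}}_2^1-1)C<0$. The quartic is a quadratic in $r^2$ with real coefficients and negative discriminant, so its two values of $r^2$ are non-real complex conjugates. The four roots $r_j$ are the square roots of these values, and they come in pairs $\pm r$. Because no $r^2$ is a nonnegative or nonpositive real number, no $r_j$ is purely imaginary, and the $\pm$ symmetry forces at least one $r_j$ with $\operatorname{Re} r_j>0$. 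These roots are simple, so $\lambda_j=\sqrt\mu\,r_j+o(\sqrt\mu)$ has positive real part for all sufficiently small $\mu$. This gives linear instability. One caveat is that $\Gamma^*$ itself depends on $\mu$ through $R_1^{CE}$ and $\hat{\mathcal R}_2^1$. I would handle this by noting that all quantities involved are continuous in $\mu$ up to $\mu=0$, and using the $\mu$-dependent curve at each fixed small $\mu$.

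\textbf{Case $q_{12}^c=1$.} Here $\Delta=0$ on $\Gamma^*$, so $r^2$ is a double root and the leading order is inconclusive. I would carry out the expansion~\eqref{eq:expansion_with_si_base}. Substituting into $P$, the order-$\mu^{2}$ terms recover the quartic. The order $\mu^{9/4}$ gives~\eqref{eq:sqrtDeltasj}, which is automatically satisfied. The order $\mu^{5/2}$ gives the solvability condition~\eqref{eq:si} for $s_j^2$, and this is where $P_{11}$ and $P_{21}$ enter. Using the simplified forms of $P_{10}$ and $P_{20}$, I would check that $r_j$ is purely imaginary and nonzero. I would also check that $P_{11}r_j^2+P_{21}$ is real and, crucially, nonzero on $\Gamma^*$. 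Then $s_j^2$ is a nonzero purely imaginary number, so $s_j=c_j(\pm1+i)$ with $c_j\neq0$ real. Both signs occur among the two roots branching from each double $r$, so some $\lambda_j$ has real part $\mu^{3/4}|c_j|+\mathcal{O}(\mu)>0$. This is a Puiseux expansion, and its validity follows from Newton-polygon arguments once $s_j\ne0$.

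\textbf{Main obstacle.} The hard step is verifying that $P_{11}r_j^2+P_{21}\ne0$ along $\Gamma^*$ when $q_{12}^c=1$; otherwise the $\mu^{3/4}$ correction vanishes and the argument fails. This needs the explicit symbolic coefficients $P_{11}$ and $P_{21}$. My first attempt would be to substitute $q_{12}^c=1$, meaning $\gamma_{12}=\sigma_{12}\eta_2\gamma_2$, together with the relation $A=B$ into the symbolic expressions. I would then factor the result, aiming to show it is a product of factors that have a fixed sign on $\Omegamu$, such as $\mathcal R_1-\mathcal R_2$, $\mathcal R_2-1$ and $\hat{\mathcal{R}}_2^1-1$. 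This is analogous to the borderline computation in~\cite{gavish2024newoscillatoryregimetwostrain}.
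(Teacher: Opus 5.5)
Your proposal is correct and follows the paper's argument. The case $q_{12}^c>1$ uses $A=B$ on $\Gamma^*$ to force $\Delta<0$, which yields a leading-order root $r_j$ with positive real part; the case $q_{12}^c=1$ uses the $\mu^{3/4}$ correction, where $s_j^2$ is purely imaginary, so $s_j=c_j(\pm1+i)$. Your Hurwitz/Puiseux justifications only make explicit what the paper takes for granted.

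The nonvanishing of $s_j$ that you single out as the main obstacle is simply assumed in the paper. When you carry out the $\mu^{5/2}$ balance, note that the $\lambda^5$ term of $P$ also enters at that order. The real quantity that must be nonzero is therefore $r_j^4+P_{11}r_j^2+P_{21}$ (up to the paper's normalization of the quartic), not $P_{11}r_j^2+P_{21}$ alone; this does not change the purely-imaginary argument.
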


\begin{remark}
The analysis in~\cite{gavish2024newoscillatoryregimetwostrain} examines the stability of~$\phi^{CE}$ with respect to the dynamics of the more general model~\eqref{eq:model} in the regime~$0 < \sigma_{21} \le \sigma_{12} \le 1$. In retrospect, the restrictive assumptions of that study caused condition~\eqref{eq:sigma12condition} to simplify to~$\sigma_{12} \ge 1$, meaning the analysis was effectively constrained to identifying instability only in the mathematical borderline case~$q_{12}^c=1$ with~$\sigma_{12} =1$. Consequently, the current analysis serves as an analytical refinement and generalization of~\cite{gavish2024newoscillatoryregimetwostrain}. By relaxing the prior restricted assumptions, we demonstrate that the previously identified instability is not merely an isolated consequence of a mathematical borderline, but rather a specific instance of a much broader and biologically robust oscillatory regime. Furthermore, the fact that the condition~$q_{12}^c=1$ remains applicable in~\cite{gavish2024newoscillatoryregimetwostrain} for any~$0 < \sigma_{21} \le 1$, rather than being restricted to the limit~$\sigma_{21} \to 0$, suggests that the current analysis extends beyond the base model~\eqref{eq:base_model}. See also Section~\ref{sec:general_two_strain}.
\end{remark}

\section{Numerical Study}\label{sec:numeric_base_model}
The analysis in Section~\ref{sec:stability_phiCE_base} shows that when secondary infections possess a relative transmission advantage over primary infections,~$q_{12}^c\ge1$, there exists a parameter region in which the coexistence steady-state,~$\phi^{CE}$, is unstable, and that this region includes the curve~$\Gamma^*$. However, the analysis does not address the dynamics of~$\phi^{CE}$ beyond this characterized region. Therefore, it is not clear whether the steady state is stable outside this regime or if other regions of instability exist. Furthermore, our analytical results do not provide information on the global dynamics of the system when~$\phi^{CE}$ is unstable, for example, whether its solutions converge to a periodic limit cycle or display more complicated dynamics. In this section, we complement the analysis with a numerical study to obtain a clearer picture of the system's global behavior.  Numerical details are provided in Appendix~\ref{app:numericalDetails}.

\begin{figure}[ht!]
\begin{center}
\includegraphics[width=\textwidth]{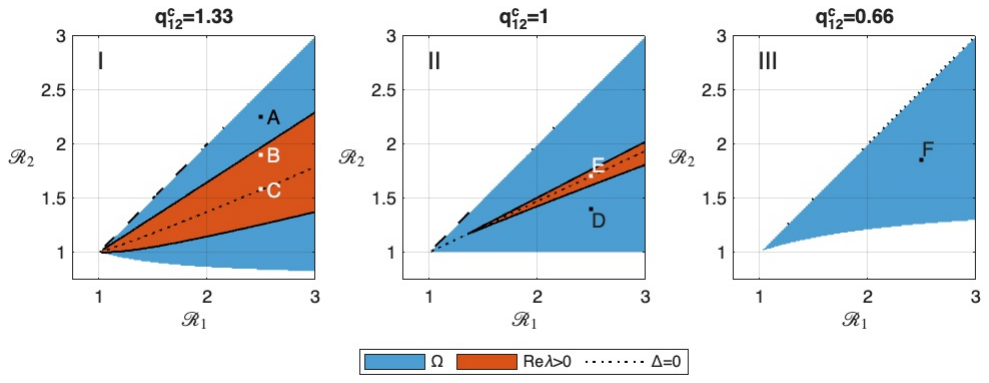}
\caption{Bifurcation diagram for the base model~\eqref{eq:base_model} as a function of~$\mathcal{R}_1$ and~$\mathcal{R}_2$. The parameters~$\beta_1$ and~$\beta_2$ vary while the other parameters are kept fixed: $\gamma_1=1$,~$\gamma_2=1.1$,~$\sigma_{12}=1.1$, $\eta_2=1.1$, $\mu=2.5\cdot10^{-4}$ and  
I: ~$q_{12}^c=1.33$.  II:~$q_{12}^c=1$. III:~$q_{12}^c=0.66$. The diagram shows the region~$\Omegamu$ where the coexistence steady-state exists (blue) and the region where it is unstable (red). The curve~$\Gamma^*$ is shown as a dotted black curve. Points A-F are at~$\mathcal{R}_1=2.5$ with the following $\mathcal{R}_2$ values: A(2.25), B(1.9), C(1.58), D(1.4), E(1.7), and F(1.85).}
\label{fig:bifurcation_diagram_base_alternative}
\end{center}
\end{figure}
Figure~\ref{fig:bifurcation_diagram_base_alternative} presents a bifurcation diagram for the base model~\eqref{eq:base_model} across several values of~$q_{12}^c$, the relative transmission advantage of a secondary infection. In all cases, we set~$\gamma_1=1$ and~$\mu=2.5\cdot 10^{-4}$; thus, assuming a time unit of one week, the average recovery time from a primary infection with strain 1 is one week, and the mean host lifespan is roughly 76 years. While Proposition~\ref{prop:instablity_along_Gamma*} formally establishes instability for sufficiently small~$\mu$, this parameter choice acts to confirm that the predicted oscillatory behavior robustly arises under biologically realistic values of~$\mu$ and~$\gamma_i$.  

To construct this scenario, we choose parameters that confer a slight transmission advantage to secondary infections: heightened infectivity ($\eta_2=1.1$) and enhanced susceptibility ($\sigma_{12}=1.1$). However, the overall transmission advantage,~$q_{12}^c$, is ultimately determined by the duration of the secondary infectious period,~$\gamma_{12}=\eta_2\sigma_{12}\gamma_{2}/q_{12}^c$. For our chosen values, setting~$q_{12}^c \in \{1.33, 1, 0.66\}$ yields secondary recovery rates of~$\gamma_{12} \approx 1, 1.33,$ and $2$, respectively, corresponding to recovery periods of $7-14$ days. Furthermore, exploring other parameter combinations, such as neutral infectivity ($\eta_2=1$) and partial cross-immunity ($\sigma_{12}<1$), produces no qualitative change in the system's dynamics, see Appendix~\ref{app:numerical_examples}.

Figure~\ref{fig:bifurcation_diagram_base_alternative}I shows the case~$q_{12}^c=1.33$, where the secondary transmission advantage satisfies condition~\eqref{eq:sigma12condition}. Within the coexistence region~$\Omega_\mu$, the steady state is computed and its linear stability is evaluated via the spectral radius of the corresponding Jacobian,~$\max \mathrm{Re}\lambda$, with stable regimes colored blue and unstable regimes colored red. As Proposition~\ref{prop:instablity_along_Gamma*} implies, a distinct region of instability (red) emerges, with the critical curve~$\Gamma^*$ (black dotted curve) lying safely within its interior.

Figure~\ref{fig:bifurcation_diagram_base_alternative}III presents the case~$q_{12}^c=0.66$, for which condition~\eqref{eq:sigma12condition} is not satisfied. We observe that~$\phi^{CE}$ is stable whenever it exists, providing numerical evidence that the next-order terms maintain the stability of the coexistence equilibrium in this regime.  Finally, Figure~\ref{fig:bifurcation_diagram_base_alternative}II shows the borderline case~$q_{12}^c=1$. In accordance with Proposition~\ref{prop:instablity_along_Gamma*}, we observe that~$\phi^{CE}$ is unstable in a narrow regime the vicinity of the curve~$\Gamma^*$. 
\begin{figure}[ht!]
\begin{center}
\includegraphics[width=\textwidth]{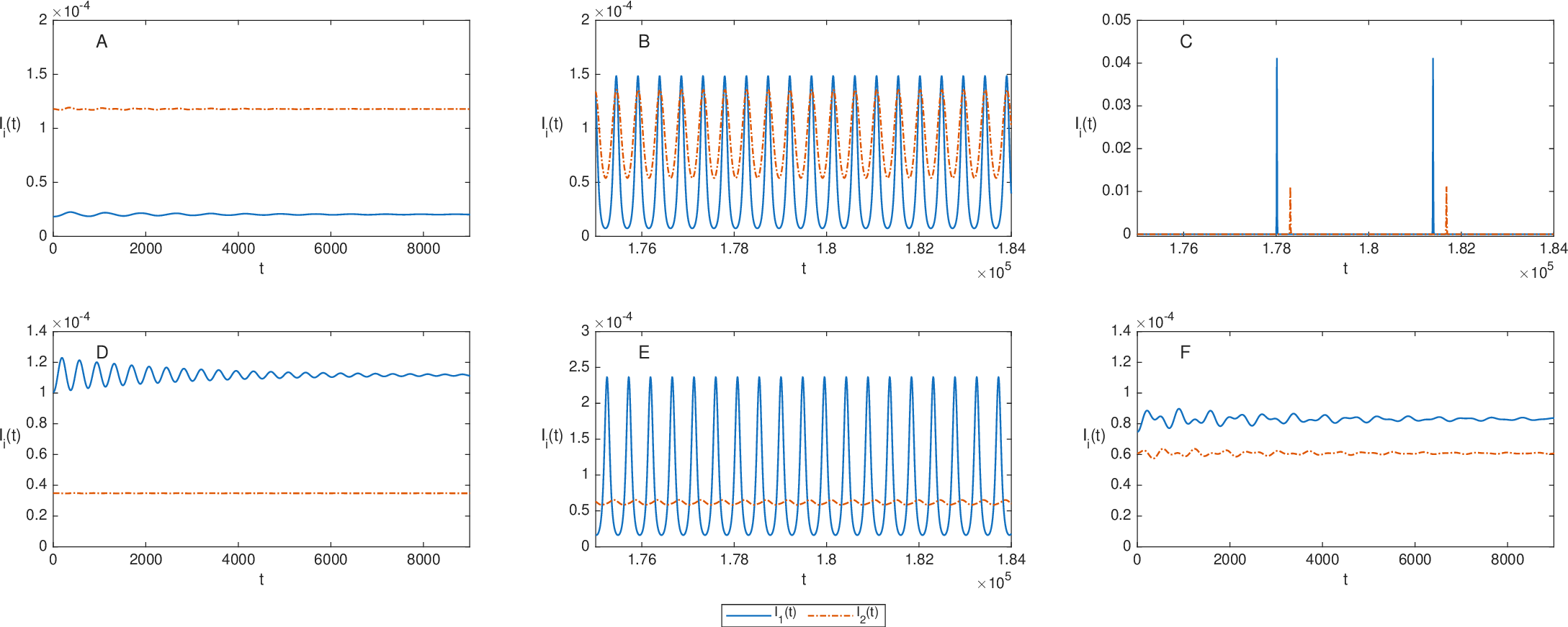}
\caption{Solutions~$I_1(t)$ and~$I_2(t)$ of~\eqref{eq:base_model} corresponding to points A-F in Figure~\ref{fig:bifurcation_diagram_base_alternative}.}
\label{fig:SolutionsSample_alternative}
\end{center}
\end{figure}

To visualize the system's dynamics across different parameter regimes, Figures~\ref{fig:SolutionsSample_alternative}A-F present the temporal solutions of~\eqref{eq:base_model} corresponding to points A-F in Figure~\ref{fig:bifurcation_diagram_base_alternative}, all initialized near coexistence ($\phi_0\approx \phi^{CE}$). Points A, D, and F reside in the stability region of~$\phi^{CE}$; as expected, the corresponding solutions naturally converge to this steady state (Figures~\ref{fig:SolutionsSample_alternative}A,~\ref{fig:SolutionsSample_alternative}D, and~\ref{fig:SolutionsSample_alternative}F, respectively).  Conversely, when~$\phi^{CE}$ is unstable, all steady states of the bounded system are unstable, precluding convergence to an equilibrium and resulting in sustained oscillatory or complex dynamics. Points B and E lie near the boundary of this instability region (Figures~\ref{fig:bifurcation_diagram_base_alternative}I and~\ref{fig:bifurcation_diagram_base_alternative}II, respectively). We observe that the solutions exhibit small-amplitude oscillations centered around~$\phi^{CE}$, consistent with the local Hopf-like findings of previous research~\cite{chung2016dynamics,gavish2024newoscillatoryregimetwostrain}. Surprisingly, at point C, located well within the instability region, we observe a drastically distinct dynamic. The solution consists of short, intense surges of strain 1, followed by a surge of strain 2, and a subsequent low-prevalence time interval before a new cycle begins.  The numerical investigation in the next subsection will focus on these unexpected oscillations, and show that they can coexist with the small-amplitude Hopf-like limit cycles.

The parameters chosen for the graphs in this section correspond to 
heightened infectivity and enhanced susceptibility ($\eta_2=1.1$ and~$\sigma_{12}=1.1$) of secondary infections compared to a primary infection, whereas the overall relative transmission advantage or disadvantage of a secondary infection over a primary infection,~$q_{12}^c$, determines the duration of the infectious period,~$\gamma_{12}=\eta_2\sigma_{12}\gamma_{2}/q_{12}^c$.  To demonstrate that the key parameter determining the stability behavior of~$\phi^{CE}$ is~$q_{12}^c$, rather than the parameters that it depends on, we present in Appendix~\ref{app:numerical_examples} an additional numerical example with parameters that reflect neutral infectivity and partial cross-immunity ($\sigma_{12}=0.8$ and $\eta_2=1$).  As expected, the qualitative results are the same.

\subsection{Observation of an additional oscillatory mechanism}
The oscillatory solution in Figure~\ref{fig:SolutionsSample_alternative}B exhibits a small-amplitude oscillation, whereas the solution in Figure~\ref{fig:SolutionsSample_alternative}C demonstrates large-amplitude, outbreak-like behavior. To establish whether these distinct behaviors stem from coexisting attractors at the exact same parameter choices, we investigate the system's sensitivity to initial conditions. Figure~\ref{fig:pointB_alternative} demonstrates a striking bistability for the parameters corresponding to point B in Figure~\ref{fig:bifurcation_diagram_base_alternative}I ($\mathcal{R}_1 = 2.5, \mathcal{R}_2 = 1.9$). When initialized in close proximity to the coexistence equilibrium $\phi^{CE}$ via a minor perturbation, explicitly constructed by setting $I_1(0) \approx 0.9 I_1^{CE}$ and scale-normalizing the remaining compartments to preserve the unit population invariant $\sum \phi_i = 1$, see~\eqref{eq:I0nearCE}, the trajectory approaches a low-amplitude limit cycle enclosing $\phi^{CE}$ (Figure~\ref{fig:pointB_alternative}A, and the red dash-dotted curve in Figure~\ref{fig:pointB_alternative}C). Conversely, under the identical parameter set, initiating the system near the single-strain equilibrium $\phi^{EE,2}$ by introducing a trace invader population $I_1(0) = 10^{-5}$ (with the remaining non-zero compartments of $\phi^{EE,2}$ scale-normalized accordingly) uncovers a distinct, large-amplitude oscillatory attractor (Figure~\ref{fig:pointB_alternative}B, and the gray curve in Figure~\ref{fig:pointB_alternative}C). Numerical continuation shows that this bistability persists under small parameter perturbations and is not a fine-tuned numerical artifact. As discussed in Section~8, in a companion paper, we will demonstrate that this distinct, large-amplitude oscillatory mechanism is driven by a Shilnikov-like orbit connected to a single-strain equilibrium.

\begin{figure}[ht!]
\begin{center}
\includegraphics[width=\textwidth]{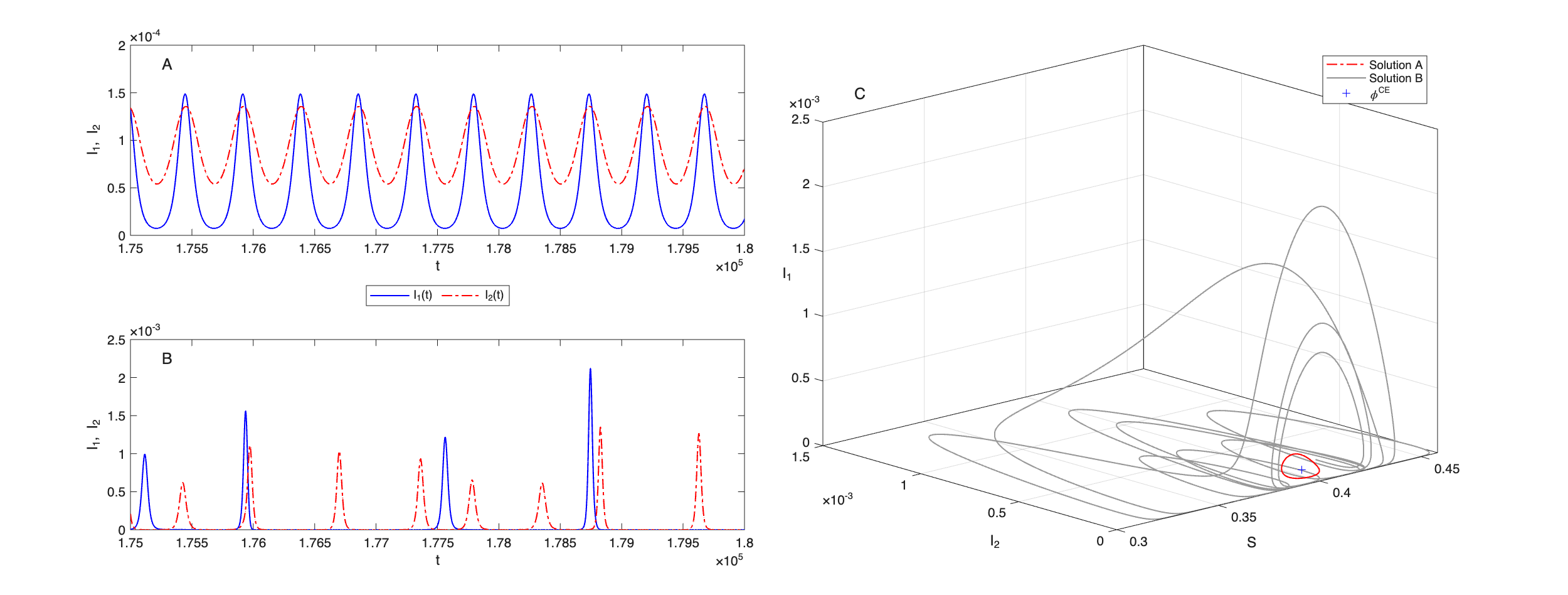}
\caption{Bistability and coexisting attractors for the base model~\eqref{eq:base_model} evaluated at point B in Figure~\ref{fig:bifurcation_diagram_base_alternative}I ($\mathcal{R}_1=2.5, \mathcal{R}_2=1.9$, $\gamma_1=1, \gamma_2=1.1, \sigma_{12}=1.1, \eta_2=1.1, \mu=2.5\times 10^{-4}$). (A) Small-amplitude local limit cycle initialized near coexistence with $I_1(0) \approx 0.9 I_1^{CE}$, see~\eqref{eq:I0nearCE}. (B) Large-amplitude outbreak cycle under identical parameters initialized near the single-strain equilibrium with an invasion seed $I_1(0) = 10^{-5}$. (C) Three-dimensional phase space trajectory in the $(S, I_1, I_2)$ subspace showcasing the coexisting local limit cycle (dash-dotted red) and the macroscopic attractor (gray) around $\phi^{CE}$ (blue cross). All trajectories are simulated via \texttt{ode45} and plotted over the stationary time interval $[T - 5/\mu, T]$ with $T = 5\times 10^4$.}
\label{fig:pointB_alternative}
\end{center}
\end{figure}

To systematically map how these different oscillatory behaviors are distributed across the parameter space, we monitor the spike amplitude for solutions of~\eqref{eq:base_model} along the line~$\mathcal{R}_1=2.5$ where the initial condition is near~$\phi^{CE}$, specifically we take
\begin{equation}\label{eq:I0nearCE}
I_1(0)=\frac{0.9\cdot I_1^{CE}}{1-0.1\cdot I_1^{CE}}
\end{equation}
while the rest of the variables remain the same as~$\phi^{CE}$, up to normalization, $\phi_0=\frac{\phi^{CE}}{1-0.1\cdot I_1^{CE}}$.
\begin{figure}[ht!]
\begin{center}
\includegraphics[width=\textwidth]{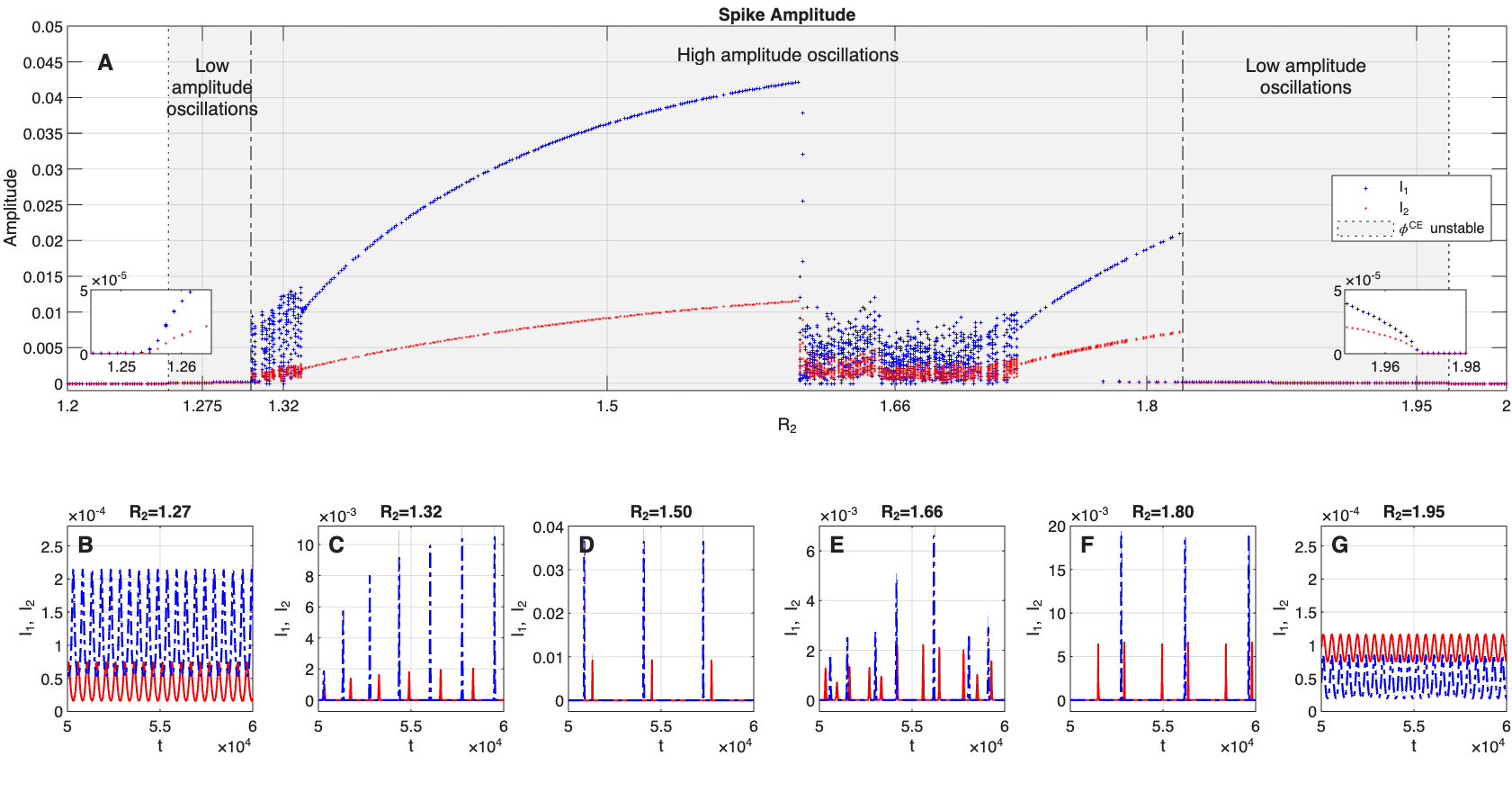}
\caption{\textbf{Bifurcation analysis and temporal dynamics with respect to parameter $\mathcal{R}_2$.} 
    The top row displays one-parameter bifurcation diagrams showing the variation in spike amplitude (A) as a function of~$\mathcal{R}_2$ where the initial condition is near~$\phi^{CE}$.  The blue crosses denote the maxima for variable~$I_1$, while the red dots denote~$I_2$. The shaded gray area denote the region where~$\phi^{CE}$ is unstable.  Near the boundaries of this region, we observe low amplitude oscillations, see inset graphs for zoom-in around the boundaries, while in the interior of the instability region we observe high amplitude oscillations.
    The bottom row presents solution dynamics for specific values of~$\mathcal{R}_2$.}
\label{fig:ISI_amplitude_graph_alternative}
\end{center}
\end{figure}

The region of instability of~$\phi^{CE}$ is marked by a shaded gray area in Figure~\ref{fig:ISI_amplitude_graph_alternative}A.  As expected, in this region, we observe that the solution is oscillatory.
We observe that the system exhibits small-amplitude oscillations near the boundaries, see inset graphs in Figure~\ref{fig:ISI_amplitude_graph_alternative} for a zoom-in on the boundaries of the instability region. For instance, 
Figure~\ref{fig:ISI_amplitude_graph_alternative}B provides an example of this behavior near the lower boundary of the instability region ($\mathcal{R}_2=1.27$), while Figure~\ref{fig:ISI_amplitude_graph_alternative}G illustrates a similar small-amplitude pattern near the upper boundary ($\mathcal{R}_2=1.95$). However, the one-parameter bifurcation diagram in Figure~\ref{fig:ISI_amplitude_graph_alternative}A  reveals that, for the above initial conditions, these Hopf-like limit cycles do not extend to the full instability region. Throughout the vast majority of the instability region's interior, the system solutions transition into large-amplitude, outbreak-like oscillations, see Figures~\ref{fig:ISI_amplitude_graph_alternative}C-F.  Particularly, within the region of large-amplitude oscillations, we observe two sub-regions,~$[1.33,1.6]$ and $[1.72,1.82]$, where the solution peaks reach a unique value.  These regions correspond to regular and periodic large-amplitude oscillations, as observed in Figures~\ref{fig:ISI_amplitude_graph_alternative}D and~\ref{fig:ISI_amplitude_graph_alternative}F.  The complementary regions correspond to more complex oscillatory patterns, possibly aperiodic, as observed in Figures~\ref{fig:ISI_amplitude_graph_alternative}C and E.

\section{Analytical Tractability and Structural Robustness within the~$\sigma_{21}=0$ Family}\label{sec:extensions_numerical}
The overarching goal of this study is to better understand oscillatory phenomena in multi-strain epidemic systems through an analytically tractable family of models. To demonstrate this tractability, we show that the coexistence equilibrium,~$\phi^{CE}$, can be explicitly computed even for a highly complex generalization of the baseline model. Analyzing this extended formulation also allows us to establish the structural robustness of our primary results. Specifically, by incorporating realistic epidemiological conditions, such as waning immunity and the isolation of primarily infected individuals, we confirm that the analytically derived instability regime and oscillatory behavior persist well beyond the base formulation. The subsequent section demonstrates that this structural robustness extends even to models outside the~$\sigma_{21}=0$ family.

In the following we consider an extended model with waning immunity that includes two additional isolation compartments, $Q_i$, for those infected with strain $i$. To distinguish between recovery histories, we also introduce a compartment $R_2$ for individuals recovered from a primary infection with strain 2, while reserving compartment $R$ exclusively for those recovered from a secondary infection. The schematic diagram of this extended model is shown in Figure~\ref{fig:diagram_extended}.
\begin{figure}[ht!]
\begin{center}
\includegraphics[width=0.75\textwidth]{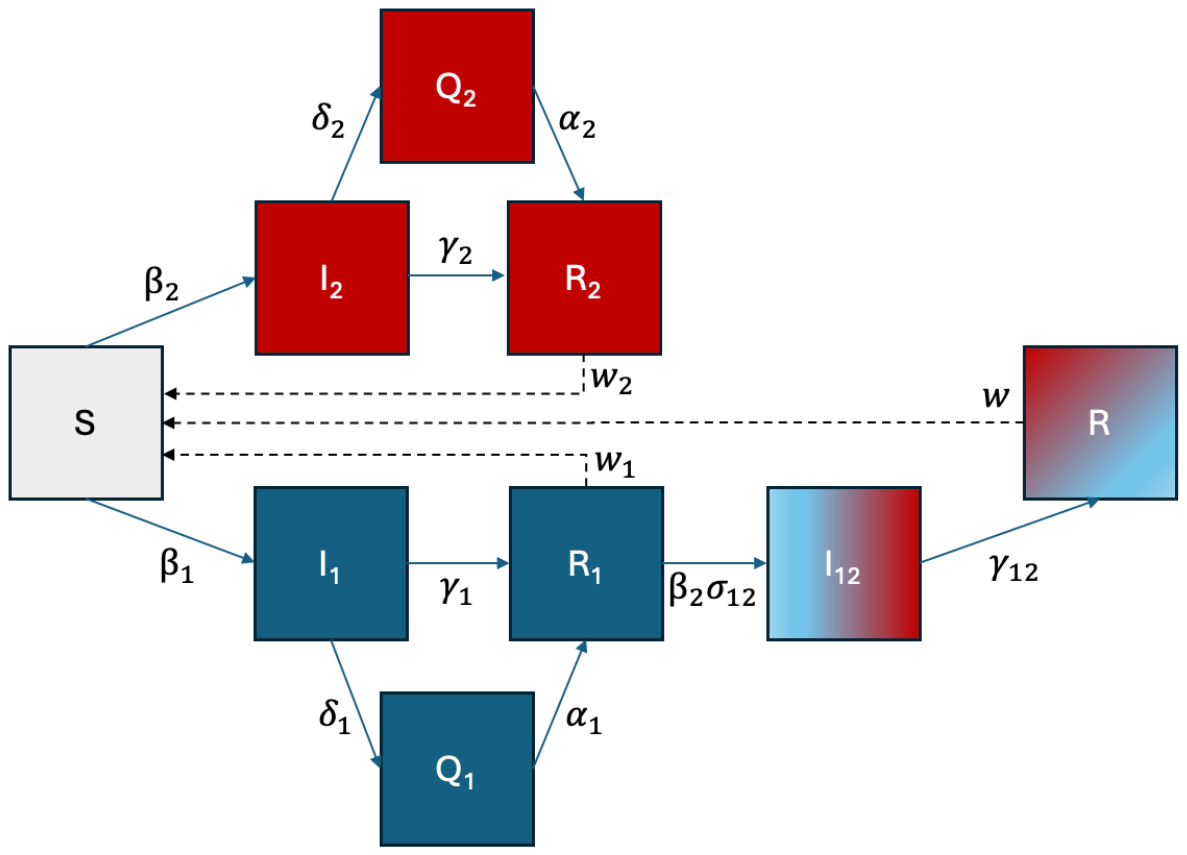}
\caption{Schematic diagram of disease dynamics for the extended model. This model includes compartments for quarantine ($Q_i$) and distinguishes between recovery from primary ($R_2$) and secondary ($R$) infections with strain 2.}
\label{fig:diagram_extended}
\end{center}
\end{figure} 

The extended model is given by the following system of equations:
\begin{subequations} \label{eq:extended_model}
\begin{align}
        \frac{\text{d}S}{\text{d}t} &=\mu(1-S)-\beta_1 I_1S-\beta_2(I_2+\eta_2I_{12})S+w_1R_1+w_2R_2+wR,\label{eq:full_eqS}\\[1ex]
         \frac{\text{d}I_1}{\text{d}t}  &= \beta_1I_1S-(\mu+\gamma_1+\delta_1)I_1,\label{eq:full_eqI1}\\[1ex]
                  \frac{\text{d}Q_1}{\text{d}t}  &= \delta_1I_1-(\mu+\alpha_1)Q_1,\label{eq:full_eqQ1}\\[1ex]
         \frac{\text{d}I_2}{\text{d}t}  &= \beta_2(I_2+\eta_2I_{12})S-(\mu+\gamma_2+\delta_2)I_2,\label{eq:full_eqI2}\\[1ex]
         \frac{\text{d}Q_2}{\text{d}t}  &= \delta_2I_2-(\mu+\alpha_2)Q_2,\label{eq:full_eqQ2}\\
         \frac{\text{d}I_{12}}{\text{d}t}  &= \sigma_{12}\beta_2(I_2+\eta_2I_{12})R_1-(\mu+\gamma_{12})I_{12},\label{eq:full_eqI12}\\[1ex]        
         \frac{\text{d}{R_1}}{\text{d}t} &= \gamma_1I_1+\alpha_1 Q_1-\sigma_{12}\beta_2(I_2+\eta_2I_{12})R_1-(\mu+w_1) R_1,\label{eq:full_eqR1}\\[1ex]
                  \frac{\text{d}{R_2}}{\text{d}t} &= \gamma_2I_2+\alpha_2 Q_2-(\mu+w_2) R_2,\label{eq:full_eqR2}\\[1ex]
         \frac{\text{d}{R}}{\text{d}t} &=\gamma_{12}I_{12}-(w+\mu) R,\label{eq:full_eqR}
\end{align}
\end{subequations}
where~$\delta_i$ is the rate at which individuals infected with strain~$i$ are isolated, and~$\alpha_i$ is the rate at which they recover and are released from isolation. The rate at which immunity wanes following a primary infection with strain~$i$ is denoted by~$w_i$, and the waning rate after a secondary infection is denoted by~$w$. Note that when~$w_2=w$, the dynamics of those recovered from an infection with strain 2 are independent of whether it was a primary or secondary infection; in such cases, the compartments~$R_2$ and~$R$ can be combined, reducing to a structure closer to that of model~\eqref{eq:base_model}. 

Similar to the base model, we consider non-negative initial compartment sizes that sum to one. For the extended model parameters, we require the following conditions:
\begin{equation}\label{eq:parameterspace_extended}
\alpha_i>0,\quad \delta_i\ge0, \quad w_i\ge0,\quad w\ge0,\qquad i=1,2,
\end{equation}
in addition to the baseline feasible space defined in~\eqref{eq:parameterspace_reduced_only}. The basic reproduction number of~\eqref{eq:extended_model} reads as
\[
\mathcal{R}_0=\max\{\mathcal{R}_1, \mathcal{R}_2\},
\]
where the strain-specific reproduction numbers are now updated to account for isolation:
\begin{equation}\label{eq:RRi_extended}
\mathcal{R}_i=\frac{\beta_i}{\gamma_i+\delta_i+\mu}.
\end{equation}
\subsection{Analytic tractability}
Similar to the base model, we begin by establishing the single-strain endemic equilibria. If~$\mathcal{R}_i>1$, system~\eqref{eq:extended_model} admits a single-strain steady state~$\phi^{EE,i}$. The compartment sizes of~$\phi^{EE,1}$ are:
\begin{equation}\label{eq:phiEE1_extended}
\begin{split}
    S^{EE,1}&=\frac1{\mathcal{R}_1},\quad I_1^{EE,1}=\frac{(\mathcal{R}_1 - 1)(\alpha_1+\mu)(w_1+\mu)}{\mu^2+(\gamma_1+w_1 + \alpha_1 + \delta_1 )\mu+(\gamma_1+\delta_1+w_1)\alpha_1+\delta_1w_1}\frac{1}{\mathcal{R}_1},\\ Q_1^{EE,1}&=\frac{\delta_1}{\alpha_1+\mu}I_1^{EE,1},\quad R_1^{EE,1}=\frac{(\delta_1 + \gamma_1)\alpha_1 + \gamma_1\mu}{(\alpha_1+\mu)(w_1+\mu)}I_1^{EE,1},\\ I_2^{EE,1}&=I_{12}^{EE,1}=R_2^{EE,1}=Q_2^{EE,1}=0,
\end{split}
\end{equation}
and the compartment sizes of~$\phi^{EE,2}$ are:
\begin{equation}\label{eq:phiEE2_extended}
\begin{split}
    S^{EE,2}&=\frac1{\mathcal{R}_2},\quad I_2^{EE,2}=\frac{(\mathcal{R}_2 - 1)(\alpha_2+\mu)(w_2+\mu)}{\mu^2+(\gamma_2+w_2 + \alpha_2 + \delta_2 )\mu+(\gamma_2+\delta_2+w_2)\alpha_2+\delta_2w_2}\frac{1}{\mathcal{R}_2},\\ Q_2^{EE,2}&=\frac{\delta_2}{\alpha_2+\mu}I_2^{EE,2},\quad R_2^{EE,2}=\frac{(\delta_2 + \gamma_2)\alpha_2 + \gamma_2\mu}{(\alpha_2+\mu)(w_2+\mu)}I_2^{EE,2},\\ I_1^{EE,2}&=I_{12}^{EE,2}=R_1^{EE,2}=Q_1^{EE,2}=0.
\end{split}
\end{equation}

To determine the stability of these boundary steady states and the conditions for coexistence, we define the invasion reproduction numbers for the extended model. 
The invasion number of strain 2 into the endemic equilibrium of strain 1 is given by:
\begin{equation}\label{eq:invasion_R21_extended}
\hat{\mathcal{R}}_2^1 = \frac{\mathcal{R}_2}{\mathcal{R}_1}\left[1+\frac{\sigma_{12}\eta_2(\gamma_2+\mu+\delta_2)\mathcal{R}_1}{\gamma_{12}+\mu}R_1^{EE,1}\right].
\end{equation}
Conversely, because strain 2 does not generate secondary infections by strain 1 ($\sigma_{21}=0$), the invasion number of strain 1 into the endemic equilibrium of strain 2 reduces simply to:
\begin{equation}\label{eq:invasion_R12_extended}
\hat{\mathcal{R}}_1^2 = \frac{\mathcal{R}_1}{\mathcal{R}_2}.
\end{equation}

Using these invasion numbers, we define the extended coexistence region,~$\Omega_{ext}$, as the parameter space where both strains can mutually invade:
\begin{equation}\label{eq:omega_mu_extended}
\Omega_{ext} = \{ (\mathcal{R}_1,\mathcal{R}_2) \mid \hat{\mathcal{R}}_2^1 > 1,\quad \hat{\mathcal{R}}_1^2 > 1 \}.
\end{equation}

With the steady states and invasion boundaries established, the stability and coexistence conditions follow directly:

\begin{proposition}[Stability of Single-Strain Steady States]\label{prop:stability_EE_extended}
Let~$\mathcal{R}_i>1$ such that the single-strain endemic steady state~$\phi^{EE,i}$, of system~\eqref{eq:extended_model} exists.  Then,~$\phi^{EE,i}$ is linearly stable if and only if the invasion number,~$\hat{\mathcal{R}}_j^i$, of the competing strain~$j\ne i$ is stricly less than one:
\[
\hat{\mathcal{R}}_j^i<1.
\]
\end{proposition}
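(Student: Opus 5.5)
The plan is to linearize system~\eqref{eq:extended_model} at $\phi^{EE,i}$, with $R$ eliminated through the conservation law, and to exploit the block-triangular structure of the Jacobian. Order the variables as (resident block, invader block). The resident block is $(S,I_1,Q_1,R_1,R_2)$ for $i=1$, or $(S,I_2,Q_2,R_2,R_1)$ for $i=2$. The invader block is $(I_2,I_{12},Q_2)$ for $i=1$, or $(I_1,Q_1,I_{12})$ for $i=2$. At $\phi^{EE,i}$ the invader compartments vanish, so their equations, linearized there, involve only invader variables. The Jacobian is therefore block lower-triangular, and its spectrum is the union of the spectra of the resident block and the invader block. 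For $i=2$, note that $R_1$ and $I_{12}$ are both zero at $\phi^{EE,2}$. The $R_1$ equation is linear in invader quantities plus $-(\mu+w_1)R_1$, so $R_1$ and $I_{12}$ simply contribute negative diagonal entries.

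\textbf{Resident block.} This is the single-strain SIQR model with waning immunity. I would show that all its eigenvalues have negative real part whenever $\mathcal{R}_i>1$. After removing the trivially stable directions (for $i=1$, $R_2$ with eigenvalue $-(\mu+w_2)$), one is left with a $4\times4$ system in $(S,I_i,Q_i,R_i)$. Its characteristic polynomial can be computed symbolically, and one then checks the Routh--Hurwitz conditions using $\beta_iS^{EE,i}=\mu+\gamma_i+\delta_i$. I expect this to be the main obstacle: the Hurwitz determinant inequality, not the positivity of the coefficients, will be messy. I would try writing each coefficient as a sum of positive monomials in $\mu,\alpha_i,w_i,\delta_i,\gamma_i$ and $\beta_iI_i^{EE,i}$. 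Then I would verify the $H_3$ condition by symbolic expansion, showing it also reduces to a sum of positive terms. As a fallback, a Lyapunov function argument for the single-strain SIRS with isolation (e.g.\ of Volterra type, or Dulac in reduced coordinates) would establish local stability instead.

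\textbf{Invader block.} This block determines the threshold. For $i=2$, the $I_1$ equation decouples with eigenvalue $\beta_1/\mathcal{R}_2-(\mu+\gamma_1+\delta_1)=(\gamma_1+\delta_1+\mu)(\mathcal{R}_1/\mathcal{R}_2-1)$. This is negative if and only if $\hat{\mathcal{R}}_1^2<1$ by~\eqref{eq:invasion_R12_extended}, and the remaining invader eigenvalues $-(\mu+\alpha_1)$ and $-(\mu+\gamma_{12})$ are negative. For $i=1$, $Q_2$ enters only the $Q_2$ equation's diagonal, giving eigenvalue $-(\mu+\alpha_2)$. The $(I_2,I_{12})$ subsystem is a $2\times2$ Metzler matrix
\[
M=\begin{pmatrix} \beta_2/\mathcal{R}_1-(\mu+\gamma_2+\delta_2) & \beta_2\eta_2/\mathcal{R}_1\\ \sigma_{12}\beta_2R_1^{EE,1} & \sigma_{12}\beta_2\eta_2R_1^{EE,1}-(\mu+\gamma_{12})\end{pmatrix}.
\]
I would write $M=F-V$ with $F\ge0$ the new-infection terms and $V=\mathrm{diag}(\mu+\gamma_2+\delta_2,\ \mu+\gamma_{12})$. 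The standard next-generation result (van den Driessche--Watmough) says $s(M)<0$ if and only if $\rho(FV^{-1})<1$. Since $F$ has rank one, $\rho(FV^{-1})$ equals its trace, namely $\frac{\mathcal{R}_2}{\mathcal{R}_1}+\frac{\sigma_{12}\eta_2\beta_2R_1^{EE,1}}{\mu+\gamma_{12}}$. This coincides with $\hat{\mathcal{R}}_2^1$ in~\eqref{eq:invasion_R21_extended}. Hence the invader block is stable exactly when $\hat{\mathcal{R}}_2^1<1$ and has a positive eigenvalue when $\hat{\mathcal{R}}_2^1>1$.

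Combining the stable resident block with the invader-block criterion gives the ``if and only if''. Setting $w=w_1=w_2=\delta_1=\delta_2=0$ recovers Proposition~\ref{prop:single_strain_stability}.
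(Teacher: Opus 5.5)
Your argument is correct in substance, and it rests on the same structural fact as the paper's proof: at $\phi^{EE,i}$ the invading compartments form a closed block of the linearization. The two routes are organized differently, though. The paper keeps all nine compartments, so the resident $(S,I_i,Q_i,R_i)$ block has all column sums equal to $-\mu$. This splits off the factor $(\lambda+\mu)$ and leaves an explicit cubic $Q_3^{EE,i}$, which is checked by Routh--Hurwitz. For $i=1$ the paper handles the invader block by writing out the quadratic $Q_2^{EE,1}$, whose constant term is $(\gamma_{12}+\mu)(\gamma_2+\delta_2+\mu)(1-\hat{\mathcal{R}}_2^1)$, and reading off signs. Your van den Driessche--Watmough argument with the rank-one $F$ replaces that computation. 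It identifies the threshold directly as $\rho(FV^{-1})=\hat{\mathcal{R}}_2^1$, which explains why the invasion number is the right quantity. It also gives both directions of the equivalence without checking the sign of the linear coefficient. The cost is invoking the M-matrix lemma instead of a self-contained trace/determinant check.

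Two points need fixing.

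\emph{The $i=2$ partition.} As stated, it is not block triangular. Linearizing the $I_{12}$ equation at $\phi^{EE,2}$ gives $\sigma_{12}\beta_2 I_2^{EE,2}\,\delta R_1-(\mu+\gamma_{12})\,\delta I_{12}$. So an invader row depends on $R_1$, which you put in the resident block, while the $R_1$ row depends on $I_1$ and $Q_1$. Move $R_1$ to the invader side. The set $\{I_1,Q_1,R_1,I_{12}\}$ is closed and, in this order, lower triangular with diagonal entries $(\gamma_1+\delta_1+\mu)(\mathcal{R}_1/\mathcal{R}_2-1)$, $-(\mu+\alpha_1)$, $-(\mu+w_1+\sigma_{12}\beta_2 I_2^{EE,2})$ and $-(\mu+\gamma_{12})$. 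Your description of the $R_1$ entry as $-(\mu+w_1)$ drops the loss term $\sigma_{12}\beta_2 I_2R_1$, though the sign is unaffected.

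\emph{The resident block.} This is not the obstacle you anticipate. In your reduced coordinates every column sum of the $(S,I_i,Q_i,R_i)$ block equals $-(\mu+w)$, so $\mathbf{1}^{T}$ is a left eigenvector. The quartic therefore factors as $(\lambda+\mu+w)(\lambda^3+b_3\lambda^2+c_3\lambda+d_3)$, where, with $K=\beta_iI_i^{EE,i}>0$,
$b_3=2\mu+\alpha_i+w_i+K$,
$c_3=(\mu+\alpha_i)(\mu+w_i)+K(2\mu+\alpha_i+w_i+\gamma_i+\delta_i)$,
$d_3=K[\mu^2+(\alpha_i+\gamma_i+\delta_i+w_i)\mu+(\gamma_i+\delta_i+w_i)\alpha_i+\delta_iw_i]$.
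Then $b_3c_3>d_3$ because $(2\mu+\alpha_i+w_i)(2\mu+\alpha_i+w_i+\gamma_i+\delta_i)$ dominates the bracket in $d_3$ term by term. This is exactly the paper's Routh--Hurwitz step. The $w_i$ in $b_3$, which the trace forces, is what absorbs the $\delta_iw_i$ term; note that the appendix's displayed $b_3^{EE,1}$ omits it. No quartic Hurwitz determinant or Lyapunov function is needed.
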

\begin{proof}
    See Appendix~\ref{app:proof_phiEE_extended}.
\end{proof}

\begin{proposition}[Coexistence Steady State]\label{prop:phiCE_extended}
   Let~$(\mathcal{R}_1,\mathcal{R}_2)\in \Omega_{ext}$. Then, system~\eqref{eq:extended_model} possesses a unique coexistence steady state,~$\phi^{CE}$, characterized by strictly positive infectious populations~$I_1>0$ and~$I_2>0$. The compartment sizes of~$\phi^{CE}$ are given by:
\begin{subequations}
\begin{equation}
    \begin{split}
S^{CE}&=\frac1{\mathcal{R}_1},\quad I_1^{CE}=\frac{(\mu + \alpha_1)(\mu+w_1+\sigma_{12}(\mu + \gamma_2 + \delta_2)\mathcal{R}_1I_2^{CE})}{(\delta_1 + \gamma_1)\alpha_1 + \gamma_1\mu}R_1^{CE},\\ 
I_2^{CE}&=(\mu + w)(\mu + w_2)(\mu + \gamma_{12})(\mu + \alpha_2)(\hat{\mathcal{R}}_2^1-1)\times\\&\frac{\alpha_1\delta_1 + \alpha_1\gamma_1 + \alpha_1\mu + \alpha_1w_1 + \delta_1\mu + \delta_1w_1 + \gamma_1\mu + \mu^2 + \mu w_1}{a+b\,(\mathcal{R}_1-\mathcal{R}_2)},\\
I_{12}^{CE}&=\frac{\mathcal{R}_1-\mathcal{R}_2}{\eta_2\mathcal{R}_2}I_2^{CE},\quad Q_1^{CE}=\frac{\delta_1}{\mu+\alpha_1}I_1^{CE},\quad Q_2^{CE}=\frac{\delta_2}{\mu+\alpha_2}I_2^{CE},\\ R_2^{CE}&=\frac{(\delta_2 + \gamma_2)\alpha_2 + \mu\gamma_2}{(\mu + \alpha_2)(\mu + w_2)}I_2^{CE},\\
R^{CE}&=\frac{\gamma_{12}c}{\mu+w}I_2^{CE},\quad R_1^{CE} = \frac{(\mu + \gamma_{12})(\mathcal{R}_1-\mathcal{R}_2)}{(\mu + \gamma_2 + \delta_2)\eta_2\sigma_{12}\mathcal{R}_1\mathcal{R}_2},
\end{split}
\end{equation}
where
\begin{equation}\begin{split}
a&=(\mu^2 + (\delta_2 + \gamma_2 + w_2 + \alpha_2)\mu + (\delta_2 + \gamma_2 + w_2)\alpha_2 + w_2\delta_2)\times\\&((\delta_1 + \gamma_1)\alpha_1 + \gamma_1\mu)(\mu + \gamma_2 + \delta_2)(\mu + w)\mathcal{R}_2\eta_2\sigma_{12},\\ 
b&=(\mu + w_2)(\mu + \gamma_2 + \delta_2)(\mu + \alpha_2)\sigma_{12}\times\\&\left[\mu^3 + (w + \gamma_{12} + \alpha_1 + \delta_1 + \gamma_1)\mu^2+ ((w + \delta_1 + \gamma_1)\gamma_{12} + w(\delta_1 + \gamma_1))\alpha_1 + w\delta_1\gamma_{12}\right.\\& \left.+((w + \delta_1 + \gamma_1 + \gamma_{12})\alpha_1 + (w + \delta_1 + \gamma_1)\gamma_{12} + w(\delta_1 + \gamma_1))\mu \right].    
\end{split}
\end{equation}
\end{subequations}
\end{proposition}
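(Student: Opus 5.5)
The plan mirrors the proof of Proposition~\ref{prop:phiCE}: set the right-hand sides of~\eqref{eq:extended_model} to zero and solve the equations in an order that keeps every step linear. The conservation law $S+I_1+Q_1+I_2+Q_2+I_{12}+R_1+R_2+R\equiv1$ replaces the $R$ equation or the $S$ equation, whichever is convenient.

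\textbf{Step 1: $S$, $I_{12}$ and $R_1$.} Because $I_1^{CE}>0$, equation~\eqref{eq:full_eqI1} forces $\beta_1S=\mu+\gamma_1+\delta_1$, so $S^{CE}=1/\mathcal{R}_1$ with $\mathcal{R}_1$ as in~\eqref{eq:RRi_extended}. Substituting this into~\eqref{eq:full_eqI2} with $I_2\neq0$ gives $\beta_2\eta_2 I_{12}/\mathcal{R}_1=(\mu+\gamma_2+\delta_2)I_2(1-\mathcal{R}_2/\mathcal{R}_1)$, which is the stated linear relation for $I_{12}^{CE}$. Substituting that relation into~\eqref{eq:full_eqI12} gives the force of infection $\beta_2(I_2+\eta_2I_{12})=(\mu+\gamma_2+\delta_2)\frac{\mathcal{R}_1}{\mathcal{R}_2}I_2$. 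Dividing out $I_{12}\neq0$ then yields the explicit $R_1^{CE}$.

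\textbf{Step 2: the linear compartments.} Equations~\eqref{eq:full_eqQ1}, \eqref{eq:full_eqQ2}, \eqref{eq:full_eqR2} and~\eqref{eq:full_eqR} are linear and immediately give $Q_1,Q_2,R_2,R$ as multiples of $I_1$, $I_2$ and $I_{12}$. The factor $c$ in $R^{CE}$ is the ratio $(\mathcal{R}_1-\mathcal{R}_2)/(\eta_2\mathcal{R}_2)$ coming from $I_{12}^{CE}$.

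Equation~\eqref{eq:full_eqR1}, with the force of infection from Step~1, expresses $I_1$ in terms of $I_2$ and the known $R_1^{CE}$:
\[
((\delta_1+\gamma_1)\alpha_1+\gamma_1\mu)\,I_1=(\mu+\alpha_1)\bigl(\mu+w_1+\sigma_{12}(\mu+\gamma_2+\delta_2)\mathcal{R}_1 I_2/\mathcal{R}_2\bigr)R_1^{CE}.
\]
This is the stated formula for $I_1^{CE}$, possibly up to a $\mathcal{R}_2$ factor absorbed into the notation, which I will check.

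\textbf{Step 3: close with the $S$ equation.} After Steps~1--2, the equation~\eqref{eq:full_eqS} with $S=1/\mathcal{R}_1$ is a single linear equation in $I_2$, since every other compartment is affine in $I_2$. Solving it gives $I_2^{CE}=N/(a+b(\mathcal{R}_1-\mathcal{R}_2))$.

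This step is the main obstacle. The goal is to rearrange the numerator into the factored form proportional to $\hat{\mathcal{R}}_2^1-1$. To do so I would use~\eqref{eq:invasion_R21_extended} together with $R_1^{EE,1}$ from~\eqref{eq:phiEE1_extended}. The key observation is that the constant terms in the $S$ equation, $\mu(1-1/\mathcal{R}_1)$ minus the $I_1$-contribution at $I_2=0$, reproduce the strain-1-only balance. Subtracting that balance should leave exactly $R_1^{EE,1}-R_1^{CE}$, which is proportional to $\hat{\mathcal{R}}_2^1-1$.

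Grouping the remaining terms should produce the coefficients $a$ and $b$; I would verify this grouping symbolically. Positivity of $a$ and $b$ is evident, since they are sums of products of nonnegative parameters with $\mathcal{R}_1>\mathcal{R}_2$ on $\Omega_{ext}$.

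\textbf{Step 4: positivity and uniqueness.} On $\Omega_{ext}$ we have $\hat{\mathcal{R}}_1^2>1$, so $\mathcal{R}_1>\mathcal{R}_2$. Hence $R_1^{CE}>0$ and $I_{12}^{CE}>0$, and $\hat{\mathcal{R}}_2^1>1$ gives $I_2^{CE}>0$. Then $I_1^{CE}>0$, and the remaining compartments follow from Step~2.

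Uniqueness holds because every step was forced: the conditions $I_1\neq0$ and $I_2\neq0$ fixed $S$, $I_{12}/I_2$ and $R_1$, and the remaining system is linear with a nonzero determinant.

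Finally, I would check that $S^{CE}<1$ and that the conservation law holds, so the equilibrium is feasible.
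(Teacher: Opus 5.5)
Your route is the one the paper takes in Appendix~\ref{app:proof_phiCE_extended}. You fix $S^{CE}=1/\mathcal{R}_1$, then obtain $I_{12}^{CE}$, $R_1^{CE}$ and the linear compartments, then $I_1$ from the $R_1$ equation, and close with one linear equation for $I_2$. The paper sums the $S$, $I_1$, $I_2$ equations, which is equivalent to your $S$ equation once $S=1/\mathcal{R}_1$.

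However, Step~1 contains an algebra error, and it is exactly the source of the $\mathcal{R}_2$ discrepancy you flagged in Step~2; nothing is absorbed into notation. Since $\beta_2=\mathcal{R}_2(\mu+\gamma_2+\delta_2)$, equation \eqref{eq:full_eqI2} at $S=1/\mathcal{R}_1$ gives
\[
\beta_2(I_2+\eta_2 I_{12})=(\mu+\gamma_2+\delta_2)\,\mathcal{R}_1 I_2,
\]
not $(\mu+\gamma_2+\delta_2)\frac{\mathcal{R}_1}{\mathcal{R}_2}I_2$. With your expression, \eqref{eq:full_eqI12} would return $\mathcal{R}_2R_1^{CE}$ rather than the stated $R_1^{CE}$, so your claim that it ``yields the explicit $R_1^{CE}$'' does not follow from what you wrote. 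Your $I_1$ relation also carries a spurious $1/\mathcal{R}_2$. With the correct force of infection, both the stated $R_1^{CE}$ and the stated $I_1^{CE}$ come out exactly.

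This is not cosmetic, because your Step~3 identity needs the exact $R_1^{CE}$. Set $\kappa=\sigma_{12}\eta_2(\mu+\gamma_2+\delta_2)\mathcal{R}_2/(\mu+\gamma_{12})$. Then $\kappa R_1^{CE}=1-\mathcal{R}_2/\mathcal{R}_1$, so \eqref{eq:invasion_R21_extended} gives
\[
\hat{\mathcal{R}}_2^1-1=\kappa\bigl(R_1^{EE,1}-R_1^{CE}\bigr),
\]
which fails if $R_1^{CE}$ is replaced by $\mathcal{R}_2R_1^{CE}$. After the fix, your Step~3 is correct, and it explains the $(\hat{\mathcal{R}}_2^1-1)$ factor more transparently than the paper's appeal to direct computation. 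Write $P_1=(\delta_1+\gamma_1)\alpha_1+\gamma_1\mu$ and let $p_1$ be the polynomial in the numerator of $I_2^{CE}$. The closing equation is then $\frac{\mu p_1}{P_1}(R_1^{EE,1}-R_1^{CE})=D\,I_2$.

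The one thing you should not call evident is the sign of $D$, since it contains $-w_2R_2^{CE}/I_2^{CE}$ and $-wR^{CE}/I_2^{CE}$. Let $p_2$ be the first factor of $a$ and $q$ the bracket in $b$. The identities
\begin{gather*}
(\mu+\gamma_2+\delta_2)(\mu+\alpha_2)(\mu+w_2)-w_2\bigl((\gamma_2+\delta_2)\alpha_2+\gamma_2\mu\bigr)=\mu\,p_2,\\
(\mu+\gamma_1+\delta_1)(\mu+\alpha_1)(\mu+\gamma_{12})(\mu+w)-w\gamma_{12}P_1=\mu\,q
\end{gather*}
give $D=\frac{\mu p_2}{(\mu+\alpha_2)(\mu+w_2)}+\frac{\mu(\mathcal{R}_1-\mathcal{R}_2)\,q}{\eta_2\mathcal{R}_2P_1(\mu+w)}>0$. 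The factor $\mu$ cancels against the one on the left, and the result is precisely the stated $I_2^{CE}$ with denominator $a+b(\mathcal{R}_1-\mathcal{R}_2)$. This computation also supplies the nonzero coefficient your uniqueness argument relies on.
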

\begin{proof}
See Appendix~\ref{app:proof_phiCE_extended}.
\end{proof}
Proposition~\ref{prop:phiCE_extended} demonstrates how the core assumption, the elimination of secondary infections by strain 1 ($\sigma_{21}=0$), preserves the mathematical tractability of the model by yielding an explicit expression for the coexistence steady state even in a highly complex model featuring quarantine and waning immunity.

\subsection{Structural robustness}
The extended model~\eqref{eq:extended_model} differs from the base formulation by incorporating two distinct mechanisms for susceptible replenishment: a slow demographic turnover, as in the base model, and a significantly faster mechanism driven by the waning of immunity. This addition alters the asymptotic structure underpinning the analytical results in Section~\ref{sec:stability_phiCE_base}. Consequently, it is not {\it a priori} clear whether the extended model will preserve the oscillatory dynamics of the base system. While a full mathematical analysis of this extended case is beyond the scope of the current paper, we present a numerical analysis demonstrating that the system's dynamics exhibit remarkable structural robustness. To evaluate this robustness, we consider the extended model under a biologically plausible parameter set:
\begin{equation}\label{eq:parameter_values_extended}
w=0.012,\quad w_1=0.008,\quad w_2=0.01, \quad \delta_1=0.1,\quad\delta_2=0.05,\quad\alpha_1=1.2,\quad\alpha_2=0.8.
\end{equation}
Assuming a time unit of one week, this configuration represents realistic recovery and isolation periods of approximately one week, alongside immunity waning over a timescale of roughly two years. These values also reflect slight, realistic asymmetries between the two strains. As in the numerical example of Figure~\ref{fig:bifurcation_diagram_base_alternative}, we determine~$\gamma_{12}$ according to the value of the relative transmission advantage.

In the base model~\eqref{eq:base_model}, the analysis identified the parameter $q_{12}^c$ as the exact relative transmission advantage determining the onset of limit-cycle instabilities. Lacking a closed-form transcendental expression for the exact bifurcation threshold in the higher-dimensional framework, we introduce a structural analogue, $q_{12}^{c,\text{extended}}$, to serve as an interpretive heuristic and parameter-scaling tool. This heuristic leverages the property that the threshold scales with the ratio of secondary to primary infectious period durations weighted by susceptibility and infectivity adjustments. Transposing this relation to the extended framework with quarantine, we define:
\begin{equation}\label{eq:q12_extended}
q_{12}^{c,\text{extended}} = \frac{\eta_2 \sigma_{12} (\gamma_2 + \delta_2)}{\gamma_{12}}.
\end{equation}
We emphasize that $q_{12}^{c,\text{extended}}$ is not used as an approximate classification metric for our stability boundaries. Because Proposition~5 yields an explicit expression for the coexistence steady state $\phi^{CE}$, the linear stability of the system is evaluated exactly and directly by computing the eigenvalues of the full $9\times9$ Jacobian matrix across our parameter sweeps. Instead, we employ~\eqref{eq:q12_extended} as a scaling rule to explicitly isolate and calibrate the secondary recovery rate:
\begin{equation}\label{eq:gamma12_mapping}
\gamma_{12} = \frac{\eta_2 \sigma_{12} (\gamma_2 + \delta_2)}{q_{12}^{c,\text{extended}}}.
\end{equation}
This allows us to evaluate the structural robustness of the instability mechanism under comparable transmission conditions across different versions of the model.

\begin{figure}[ht!]
\begin{center}
\includegraphics[width=\textwidth]{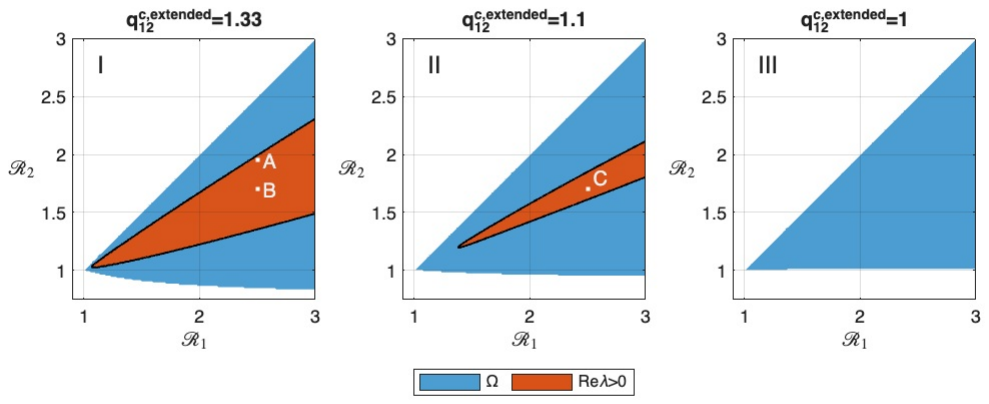}
\caption{Bifurcation diagram for the extended model~\eqref{eq:extended_model} as a function of~$\mathcal{R}_1$ and~$\mathcal{R}_2$. The parameters~$\beta_1$ and~$\beta_2$ vary while the other parameters are kept fixed and are given by~\eqref{eq:parameter_values_extended}. The diagram shows the region~$\Omegamu$ where the coexistence steady-state exists (blue) and where it is unstable (red). Points A-C are marked within the oscillatory region for analysis.}
\label{fig:bifurcationDiagram_gamma12_02_extended}
\end{center}
\end{figure}
Figure~\ref{fig:bifurcationDiagram_gamma12_02_extended}I presents the bifurcation diagram in the~$(\mathcal{R}_1,\mathcal{R}_2)$ plane for~$q_{12}^{c, {\rm extended}}=1.33$. As expected, we observe a substantial oscillatory region, resembling Figure~\ref{fig:bifurcation_diagram_base_alternative}I. Within this region, we mark two distinct points, A and B, which are dynamically analogous to points B and C in the base model (Figure~\ref{fig:bifurcation_diagram_base_alternative}), respectively. As in the base model, point A exhibits a local periodic limit cycle about~$\phi^{CE}$,  see Figure~\ref{fig:solutionSample_extended}A, while point B gives rise to larger-amplitude outbreaks interspersed with low-activity periods, see Figure~\ref{fig:solutionSample_extended}B. 
\begin{figure}[ht!]
\begin{center}
\includegraphics[width=\textwidth]{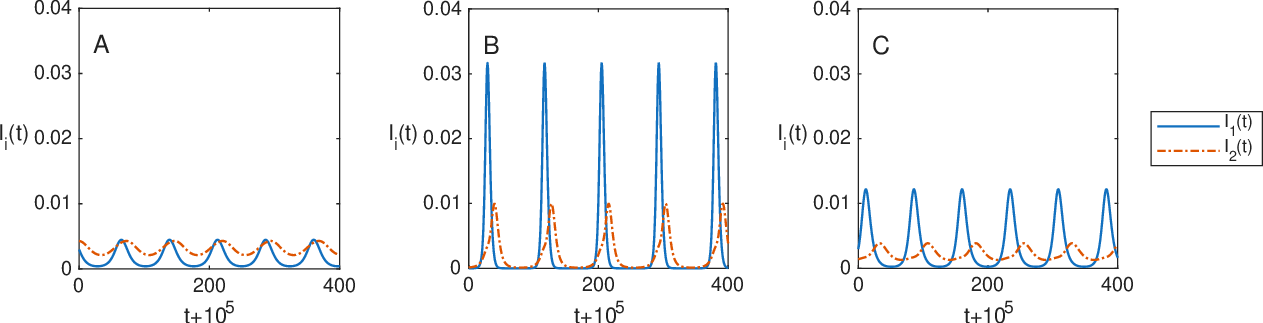}
\caption{Solutions~$I_1(t)$ and~$I_2(t)$ for the extended model~\eqref{eq:extended_model} at points A, B, and C from Figure~\ref{fig:bifurcationDiagram_gamma12_02_extended}.}
\label{fig:solutionSample_extended}
\end{center}
\end{figure}

However, the dynamical differences between the small-amplitude (point A) and large-amplitude (point B) oscillations are significantly less pronounced in the extended model. The amplitude ratio between these two solutions is approximately 3, a stark contrast to the ratio of roughly 200 observed between the analogous solutions in the base model. Moreover, the extended model exhibits much shorter low-activity periods, which are comparable in length to the duration of the high-activity outbreaks. Furthermore, bifurcation analysis reveals that the oscillatory solutions of the extended model remain entirely regular, see Figure~\ref{fig:ISI_amplitude_graph_extended}, whereas the base model exhibited both regular and irregular oscillations, see Figure~\ref{fig:ISI_amplitude_graph_alternative}. It is highly plausible that the faster replenishment of susceptible in the extended model acts to dampen extreme fluctuations, thereby contributing to these observed changes in amplitude scaling, temporal clustering, and regularity.
\begin{figure}[ht!]
\begin{center}
\includegraphics[width=\textwidth]{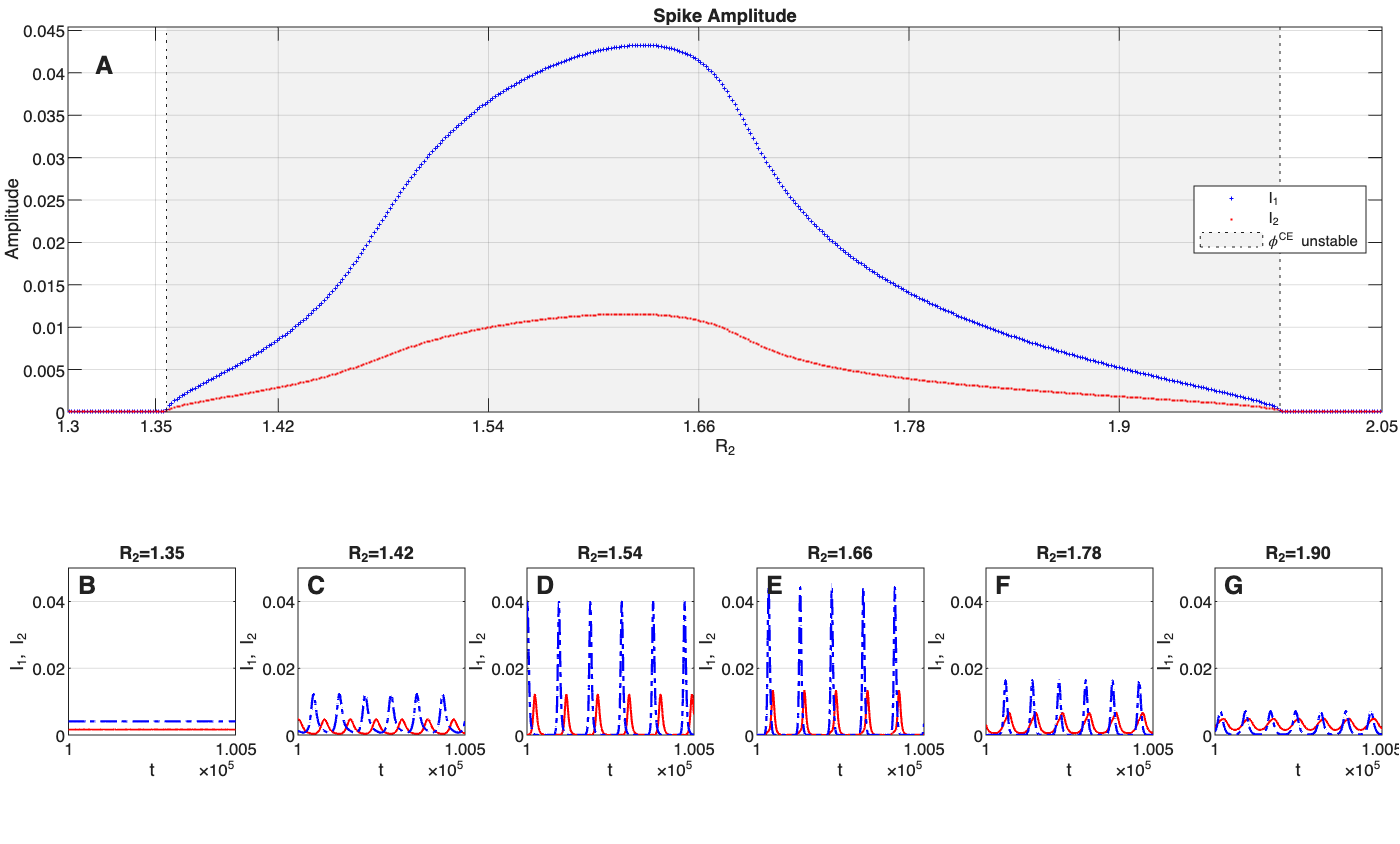}
\caption{\textbf{Bifurcation analysis and temporal dynamics with respect to parameter $\mathcal{R}_2$.} 
    The top row displays one-parameter bifurcation diagrams showing the variation in spike amplitude (A) as a function of~$\mathcal{R}_2$ evaluated along the line~$\mathcal{R}_1=2.5$, corresponding to the parameter slice containing the dynamics of Figure~\ref{fig:bifurcationDiagram_gamma12_02_extended}I. The initial condition is set near~$\phi^{CE}$. The blue crosses denote the maxima for variable~$I_1$, while the red dots denote the maxima for~$I_2$. The shaded gray area denotes the region where~$\phi^{CE}$ is unstable. 
    The bottom row presents solution dynamics for specific values of~$\mathcal{R}_2$.}
\label{fig:ISI_amplitude_graph_extended}
\end{center}
\end{figure}
Figure~\ref{fig:bifurcationDiagram_gamma12_02_extended}II presents the exact numerical bifurcation diagram evaluated for $q_{12}^{c,\text{extended}} = 1.1$. We observe that this choice represents a borderline scenario, dynamically analogous to the base model threshold behavior shown in Figure~\ref{fig:bifurcation_diagram_base_alternative}II. Conversely, when setting $q_{12}^{c,\text{extended}} = 1.0$ (Figure~\ref{fig:bifurcationDiagram_gamma12_02_extended}III), the unstable oscillatory region is completely absent from the feasible parameter space. 

Based on these exact numerical stability computations, we find that for the chosen parameter configuration~\eqref{eq:parameter_values_extended}, the transition to instability occurs at an empirical threshold value satisfying $1.0 < q_{12}^{c,\text{extended}} \le 1.1$. This conditional observation confirms that while $q_{12}^{c,\text{extended}}$ functions as a reliable and highly accurate heuristic for scaling parameters and identifying the structural conditions that spark instabilities, the fast susceptible replenishment introduced by waning immunity slightly shifts the true analytical boundary away from unity.

In summary, despite the differences  between the base model and the extended model, both share key structural features regarding their oscillatory behavior. Specifically, both frameworks capture the emergence of oscillations beyond the critical threshold quantity and demonstrate distinct regimes of low-amplitude versus high-amplitude oscillations. Crucially, the explicit availability of the coexistence steady-state~$\phi^{CE}$, see Proposition~\ref{prop:phiCE_extended}, establishes a pathway for subsequent analytic studies. While further research is necessary to refine our understanding of the complex impact of waning immunity, the foundational approach taken in this work has proven highly fruitful, yielding critical insights into the underlying mechanisms that drive epidemic cycles.

\section{An outlook on Structural Robustness Beyond the $\sigma_{21}=0$ Family}\label{sec:general_two_strain}
This work aims to better understand oscillatory phenomena in multi-strain epidemic systems by focusing on an analytically tractable family of models where secondary infections with one strain are excluded ($\sigma_{21}=0$). This approach proved highly successful, revealing the critical role of $q_{12}^c$, the $\Gamma^*$ curve, and the limiting case in which the oscillatory domain is confined to the vicinity of $\Gamma^*$. Furthermore, pinpointing the domain where small-amplitude oscillations arise enables the identification of additional oscillatory mechanisms and provides clear directions for future research. To test the potential breadth of these findings without exceeding the scope of the current work, we conduct an initial numerical exploration to determine whether these dynamics extend to cases where~$\sigma_{21}\ne0$.

Figure~\ref{fig:bifurcation_diagram_base_nonzero_sigma21} presents bifurcation diagrams for model~\eqref{eq:model} as a function of~$\mathcal{R}_1$ and~$\mathcal{R}_2$ across three distinct cases. Because this model allows for secondary infections from both strains, we define~$q_{21}^c$ as the analogue to~$q_{12}^c$~\eqref{eq:sigma12c}. This parameter represents the relative transmission advantage of a secondary infection with strain one over a primary infection with strain two:
$$
q_{21}^c=\frac{\sigma_{21}\eta_1\gamma_1}{\gamma_{21}}.
$$

When $\sigma_{21} \neq 0$, the coexistence equilibrium $\phi^{CE}$ is no longer available in a fully decoupled closed-form vector like in the specialized $\sigma_{21}=0$ family. However, the steady-state equations can still be reduced exactly to a cubic polynomial in terms of the susceptible population $S^*$~\cite{gavish2024newoscillatoryregimetwostrain}. At each parameter coordinate, the roots of this cubic polynomial are computed algebraically using standard matrix companion methods via MATLAB's \texttt{roots} routine. Feasibility and uniqueness are strictly enforced by filtering for a real root that lies within the domain $0 < S^* < \min(1/\mathcal{R}_1, 1/\mathcal{R}_2)$. Once this unique value of $S^*$ is established, the remaining epidemiological compartments are evaluated sequentially via explicit closed-form algebraic expressions  of $S^*$. 

The first case explores the dynamics of cocirculating strains driven by partial, symmetric cross-immunity ($\sigma_{12} = \sigma_{21} < 1$). Such a scenario holds strong biological relevance and has been used to model diseases such as influenza, see, e.g.,~\cite{andreasen1997dynamics}. Here, primary infections maintain a relative transmission advantage over secondary infections, analogous to the scenario presented in Figure~\ref{fig:bifurcation_diagram_base_alternative}A. As expected, we observe that the coexistence steady-state~$\phi^{CE}$ is stable whenever it exists, see Figure~\ref{fig:bifurcation_diagram_base_nonzero_sigma21}A. 

The second case explores an asymmetric scenario where only secondary infections with strain one possess a relative transmission advantage over primary infections with strain two ($q_{12}^c < 1 < q_{21}^c$). Similar to the case in Figure~\ref{fig:bifurcation_diagram_base_alternative}B where~$q_{12}^c > 1$ and~$q_{21}^c = 0$, we observe a parameter region where~$\phi^{CE}$ becomes unstable, indicated by the red region in Figure~\ref{fig:bifurcation_diagram_base_nonzero_sigma21}B. To systematically map the distribution of oscillatory behaviors across the parameter space, we monitor the spike amplitudes for solutions of model~\eqref{eq:model} along the line~$\mathcal{R}_1 = 2.5$, where the initial condition is near~$\phi^{CE}$, see Figure~\ref{fig:ISI_amplitude_graph_base_example_caseB}A. Resembling the approach and dynamics of Figure~\ref{fig:ISI_amplitude_graph_alternative}A, the system exhibits regions of low-amplitude oscillations (e.g., Figures~\ref{fig:ISI_amplitude_graph_base_example_caseB}B--D), as well as both regular and irregular high-amplitude oscillations (Figures~\ref{fig:ISI_amplitude_graph_base_example_caseB}E and~\ref{fig:ISI_amplitude_graph_base_example_caseB}F).

Finally, we consider dynamics induced by antibody-dependent enhancement (ADE), following the framework of Ferguson et al.~\cite{ferguson1999effect}.  In this case, secondary infections hold a relative transmission advantage over primary infections. Crucially, while this case includes symmetric strain parameters, the system still exhibits robust oscillatory dynamics (see the red region in Figure~\ref{fig:bifurcation_diagram_base_nonzero_sigma21}C). This demonstrates that these oscillations are fundamentally driven by the relative advantage of secondary infections rather than parametric asymmetry. Indeed, the bifurcation diagram reveals that the oscillatory region spans a segment of the~$\mathcal{R}_1=\mathcal{R}_2$ line, where the two strains are completely symmetric. However, unlike all other cases considered in this work, this oscillatory region lies strictly along the boundary of the existence region for~$\phi^{CE}$. 
Evaluating the corresponding spike amplitudes (Figure~\ref{fig:ISI_amplitude_graph_base_example_caseC}A), we observe a continued diversity in the oscillatory regimes: regular and irregular high-amplitude solutions (Figures~\ref{fig:ISI_amplitude_graph_base_example_caseC}B--D), alongside distinct low-amplitude oscillations (Figures~\ref{fig:ISI_amplitude_graph_base_example_caseC}E--G). 
It is important to contextualize these findings relative to diseases like dengue fever. While ADE is notably linked to dengue fever~\cite{ferguson1999effect}, model~\eqref{eq:model} does not capture the more intricate immunological complexities, such as temporary cross-immunity, associated with the disease~\cite{aguiar2022mathematical}.  

\begin{figure}[ht!]
\begin{center}
\includegraphics[width=\textwidth]{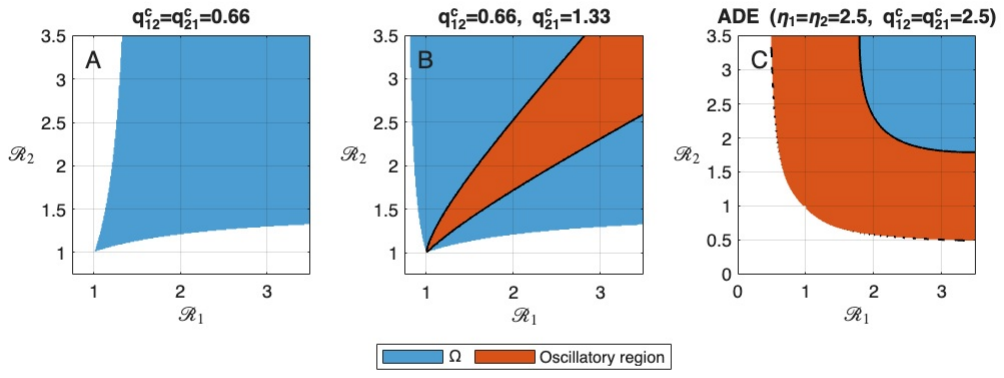}
\caption{Bifurcation diagram for the model~\eqref{eq:model} as a function of~$\mathcal{R}_1$ and~$\mathcal{R}_2$. The parameters~$\beta_1$ and~$\beta_2$ vary while the other parameters are kept fixed.  
A: $\gamma_1=1$,~$\gamma_2=1.1$,~$\sigma_{12}=1.1$, $\eta_2=1.1$, $\mu=2.5\cdot10^{-4}$ and  
I: ~$q_{12}^c=1.125$.  II:~$q_{12}^c=1$. III:~$q_{12}^c=0.875$. The diagram shows the region~$\Omegamu$ where the coexistence steady-state exists (blue) and the region where it is unstable (red). The curve~$\Gamma^*$ is shown as a dotted black curve. Points A-F are at~$\mathcal{R}_1=2.5$ with the following $\mathcal{R}_2$ values: A(2.25), B(1.85), C(1.65), D(1.4), E(1.7), and F(1.85).}
\label{fig:bifurcation_diagram_base_nonzero_sigma21}
\end{center}
\end{figure}

\begin{figure}[ht!]
\begin{center}
\includegraphics[width=\textwidth]{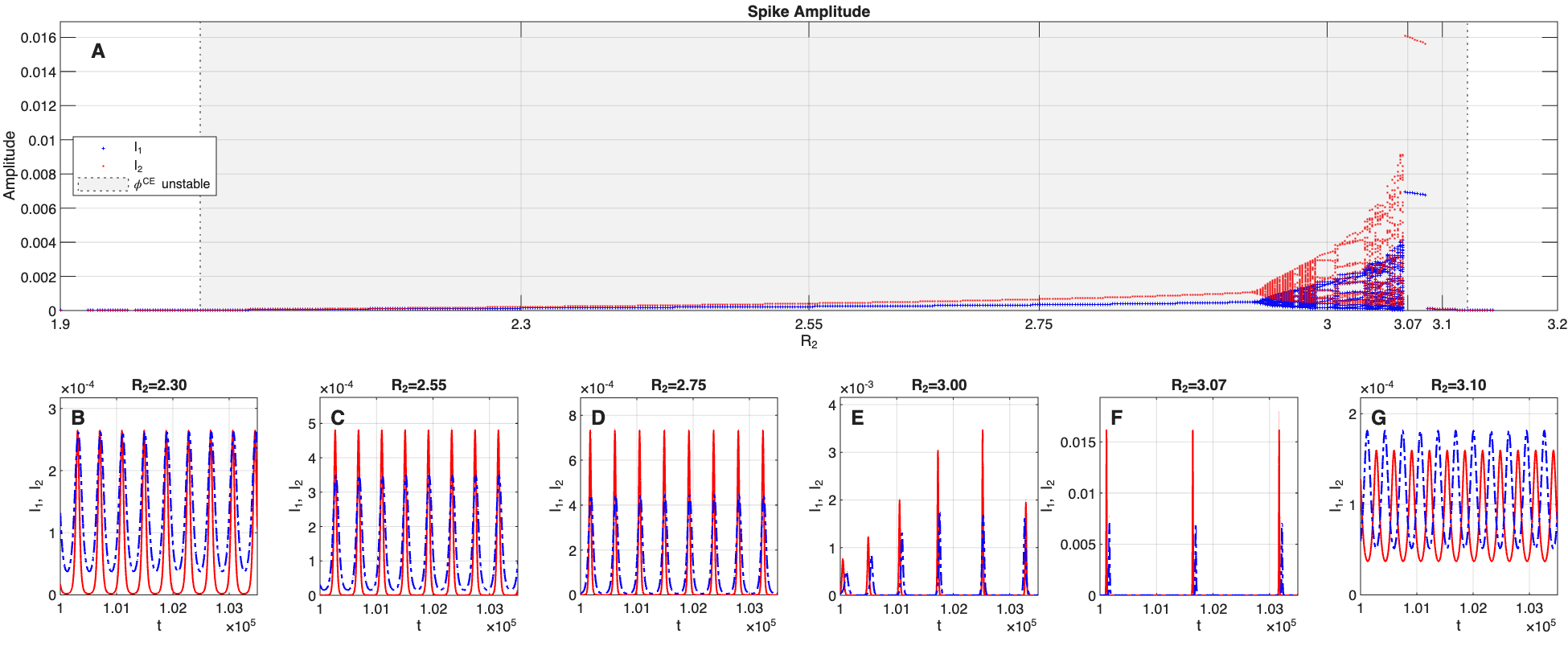}
\caption{\textbf{Bifurcation analysis and temporal dynamics with respect to parameter $\mathcal{R}_2$.} 
    The top row displays one-parameter bifurcation diagrams showing the variation in spike amplitude (A) as a function of~$\mathcal{R}_2$ evaluated along the line~$\mathcal{R}_1=2.5$, corresponding to the parameter slice containing the dynamics of Figure~\ref{fig:bifurcation_diagram_base_nonzero_sigma21}B. The initial condition is set near~$\phi^{CE}$. The blue crosses denote the maxima for variable~$I_1$, while the red dots denote the maxima for~$I_2$. The shaded gray area denotes the region where~$\phi^{CE}$ is unstable. 
    The bottom row presents solution dynamics for specific values of~$\mathcal{R}_2$.}
\label{fig:ISI_amplitude_graph_base_example_caseB}
\end{center}
\end{figure}

\begin{figure}[ht!]
\begin{center}
\includegraphics[width=\textwidth]{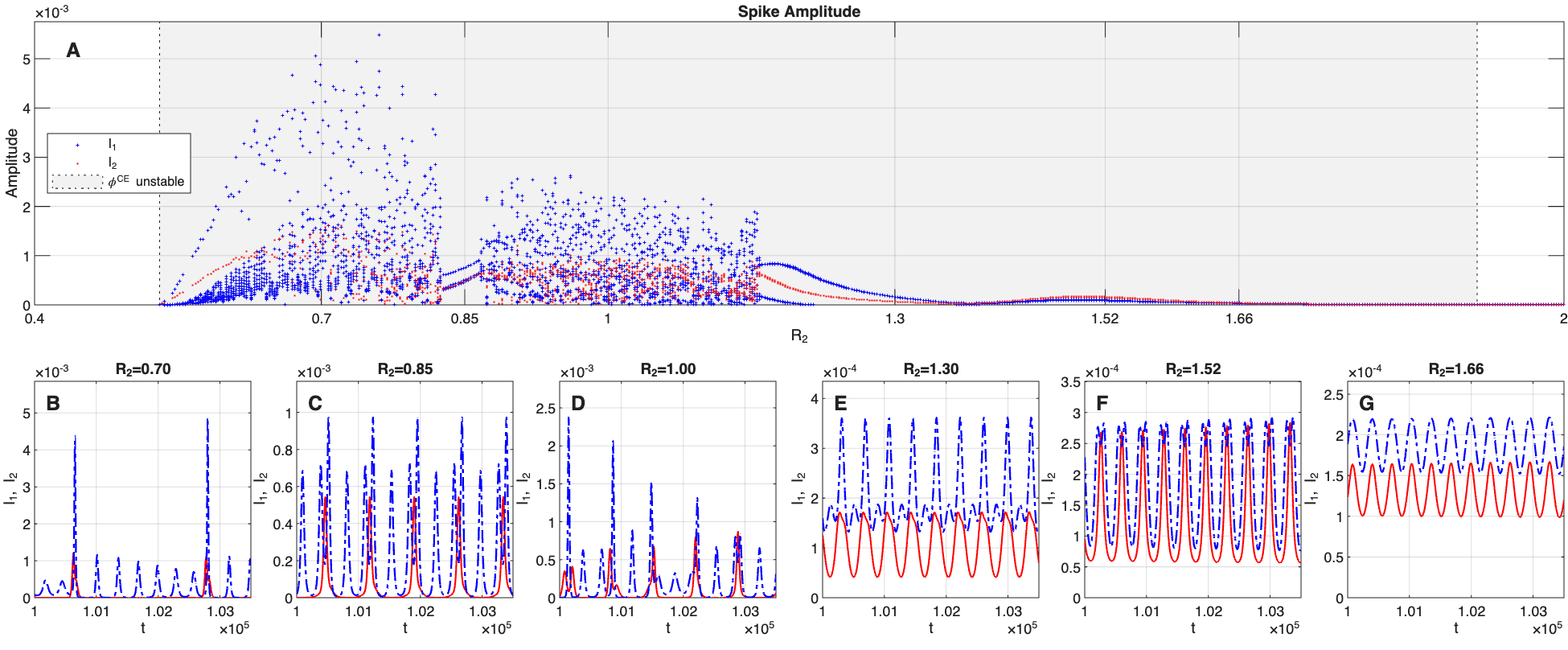}
\caption{\textbf{Bifurcation analysis and temporal dynamics with respect to parameter $\mathcal{R}_2$.} 
    The top row displays one-parameter bifurcation diagrams showing the variation in spike amplitude (A) as a function of~$\mathcal{R}_2$ evaluated along the line~$\mathcal{R}_1=2.5$, corresponding to the parameter slice containing the dynamics of Figure~\ref{fig:bifurcation_diagram_base_nonzero_sigma21}C. The initial condition is set near~$\phi^{CE}$. The blue crosses denote the maxima for variable~$I_1$, while the red dots denote the maxima for~$I_2$. The shaded gray area denotes the region where~$\phi^{CE}$ is unstable. 
    The bottom row presents solution dynamics for specific values of~$\mathcal{R}_2$.}
\label{fig:ISI_amplitude_graph_base_example_caseC}
\end{center}
\end{figure}

Ultimately, exploring parameter regimes where~$\sigma_{21} \neq 0$ reveals that the core oscillatory phenomena identified in our analytically tractable base model are highly robust. Even though the coexistence steady-state~$\phi^{CE}$ is no longer explicitly solvable in this extended setting, numerical investigations confirm the persistence of key structural features: the threshold-driven onset of oscillations and the clear bifurcation into low- and high-amplitude regimes. Although further research is necessary to fully unravel the complex interactions in these extended regimes, the foundational approach taken in this work has proven highly fruitful. By analyzing the tractable base model, we have established a robust framework that yields critical insights into the underlying mechanisms driving multi-strain epidemic cycles.

\section{Discussion}\label{sec:discussion}
This study introduces a family of analytically tractable two-strain models defined by the parameter choice~$\sigma_{21}=0$. This restriction bypasses the algebraic complexity typically associated with the coexistence equilibrium, enabling a rigorous analysis of oscillatory phenomena even when incorporating features like antibody-dependent enhancement (ADE) ($\eta_i>1$) and asymmetric recovery rates.  This approach showed that  the relative transmission advantage of a secondary infection is the primary driver of limit-cycle oscillations.  The results provide new context to the results in~\cite{gavish2024newoscillatoryregimetwostrain}. In retrospect, that study evaluated a borderline scenario residing precisely at the onset of instability, explaining why the previously identified oscillatory regime was so parametrically narrow. Moving away from this mathematical borderline, we uncover parameter regions that support far more robust oscillations, dynamic behaviors that likely better reflect real biological conditions.

While mapping these small-amplitude oscillations, we encountered an unexpected phenomenon: the emergence of distinct, large-amplitude outbreak cycles, even when the initial condition is near the coexistence equilibrium. These macroscopic oscillations operate independently of the local Hopf bifurcation. Under certain initial conditions, they dominate the phase space, generating a bistability where a stable steady state coexists with a massive oscillatory attractor. From a dynamical systems perspective, this transition points to a rich topological complexity likely involving global bifurcations—potentially homoclinic or Shilnikov-like—associated with saddle-focus equilibria. Characterizing the boundaries of these global transitions remains a compelling open problem. Our mathematically transparent $\sigma_{21}=0$ baseline provides an ideal scaffold for mapping these complex global dynamics and extending methods used to chart multi-strain structures~\cite{Aguiar_Mapping}.

Our numerical investigations (Sections~\ref{sec:extensions_numerical} and~\ref{sec:general_two_strain}) strongly suggest that the fundamental instability mechanisms driving limit-cycle oscillations and high-amplitude outbreaks are structurally robust. Oscillations persist when incorporating temporary cross-immunity (waning immunity), quarantine measures, or secondary infections from both strains. While our exploration is not exhaustive, it leaves the characterization of their complete dynamical landscape for future work.

The framework developed here may provide a theoretical basis for further mathematical investigation of multi-strain dynamics in epidemiological systems. The core assumption of precluding secondary infections for one strain ($\sigma_{21}=0$) maps naturally to the strongly asymmetric cross-immunity observed between \textit{Bordetella pertussis} and \textit{Bordetella parapertussis}~\cite{wolfe2007antigen}. Permitting secondary infections for both strains ($\sigma_{12}, \sigma_{21} > 0$), as shown in Figure~\ref{fig:bifurcation_diagram_base_nonzero_sigma21}B, provides a baseline for studying antigenically drifting respiratory pathogens like influenza, where bidirectional partial cross-immunity shapes epidemic cycles~\cite{andreasen1997dynamics}. The framework also accommodates targeted extensions for ADE and temporary waning immunity (Figures~\ref{fig:bifurcationDiagram_gamma12_02_extended} and~\ref{fig:bifurcation_diagram_base_nonzero_sigma21}C). While both features are classically linked to the macroscopic cycles of dengue fever~\cite{ferguson1999effect, aguiar2022mathematical}, our extensions treat them in isolation to rigorously understand their individual impacts. Directly applying this framework to dengue requires incorporating additional immunological complexities, such as the temporary cross-immunity modeled in recent literature~\cite{aguiar2008epidemiology,aguiar2013how,aguiar2022mathematical}. Applying these findings to specific, multifaceted epidemiological systems warrants further study, potentially by integrating these isolated mechanisms into composite models.

In conclusion, this work demonstrates that targeted structural simplifications, specifically precluding secondary infections for a single strain ($\sigma_{21}=0$), can successfully resolve severe algebraic barriers without sacrificing biological relevance~\cite{wolfe2007antigen}. By mapping the robust coexistence of local limit-cycle oscillations and macroscopic outbreak cycles, we provide a mathematically transparent framework to study complex multi-strain systems. This baseline not only refines restrictive historical limitations but also establishes a reliable scaffold for analyzing global transitions and higher-dimensional extensions. Ultimately, this analytically tractable approach yields insights into the mechanisms driving multi-strain epidemic cycles, serving as a foundation for future mathematical and epidemiological investigations.

\backmatter
\bmhead{Data \& Code Availability Statement}
We have used Matlab version 2025b for the computations in this work.  The Matlab codes used to perform the computations and produce the graphs presented in the manuscript are openly available at {\tt https://github.com/NGavish/AnalyticallyTractableMultiStrain.git}

No data was analyzed for this manuscript.

\bmhead{Statements and Declarations}

{\bf Competing Interests} The authors have no competing interests to declare.

{\bf Funding} This research was supported by the Israel Science Foundation (grant no. 3730/20) within the KillCorona-Curbing Coronavirus Research Program, and by the Israeli Science Foundation (ISF) grant 1596/23.
\bmhead{Acknowledgements}
The author thanks Arik Yochelis and Guy Katriel for their valuable comments and discussions on this research.

\begin{appendices}

\section{Proof of Proposition~\ref{prop:stability_EE_extended}}\label{app:proof_phiEE_extended}
We note that this proof below encompasses the proof of Proposition~\ref{prop:single_strain_stability} as a special case recovered by setting~$w=w_1=w_2=\delta_1=\delta_2=0$.

We first consider the stability of~$\phi^{EE,1}$.  The Jacobian of~\eqref{eq:extended_model} at~$\phi^{EE,1}$ has the corresponding characteristic polynomial
\begin{subequations}\label{eq:characteristicPolynomialEE1}
\begin{equation}\label{eq:PEE1}
P^{EE,1}(\lambda)=(\lambda+\alpha_2 + \mu)(\lambda + w_2+\mu)(\lambda+w+\mu)(\lambda + \mu)Q_3^{EE,1}(\lambda)Q_2^{EE,1}(\lambda)
\end{equation}
where
\begin{equation}
\begin{split}
Q_3^{EE,1}(\lambda)&=\lambda^3+b_3^{EE,1}\lambda^2+c_3^{EE,1}\lambda+d_3^{EE,1},\\
b_3^{EE,1}&=2\mu + \alpha_1 +(\mu + \gamma_1 + \delta_1)\mathcal{R}_1I_1^{EE,1},\\
c_3^{EE,1}&=(\mu + \alpha_1)(\mu + w_1) + (2\mu + \delta_1 + \gamma_1 + w_1 + \alpha_1)(\mu + \gamma_1 + \delta_1)\mathcal{R}_1I_1^{EE,1},\\
d_3^{EE,1}&=(\mu^2 + (w_1 + \alpha_1 + \delta_1 + \gamma_1)\mu + (w_1 + \delta_1 + \gamma_1)\alpha_1 + \delta_1w_1)(\mu + \gamma_1 + \delta_1)\mathcal{R}_1I_1^{EE,1},\\
\end{split}
\end{equation}
and
\begin{equation}
\mathcal{R}_1Q_2^{EE,1}(\lambda)=\lambda^2+b_2^{EE,1} \lambda+(\gamma_{12}+\mu)(\gamma_2+\delta_2+\mu)(1-\hat{\mathcal{R}}_2^1),
\end{equation}
where
\[
b_2^{EE,1}=\mu+(\gamma_{12} + \mu)(1-\hat{\mathcal{R}}_2^1) + \gamma_{12}\frac{\mathcal{R}_2}{\mathcal{R}_1} + (\delta_2 + \gamma_2)\frac{(\gamma_{12}+\mu)(1-\hat{\mathcal{R}}_2^1)+\eta_2\sigma_{12}(\mu + \gamma_2 + \delta_2)\mathcal{R}_1R_1^{EE,1}}{\mu+\gamma_{12}+\eta_2\sigma_{12}(\mu + \gamma_2 + \delta_2)\mathcal{R}_1R_1^{EE,1}}.
\]
\end{subequations}

The characteristic polynomial~,$P^{EE,1}$, has nine roots.   
Four are the negative roots
\[
\lambda_1=-\mu-w_2<0,\quad \lambda_2=-\mu-\alpha_2<0,\quad \lambda_3=-\mu-w<0,\quad \lambda_4=-\mu<0,
\]
three are the roots of~$Q_3^{EE,1}(\lambda)$ and two additional roots that are the roots of~$Q_2^{EE,1}(\lambda)$.
All coefficients of~$Q_3^{EE,1}(\lambda)$ are strictly positive and satisfy~$b^{EE,1}c_3^{EE,1}-d_3^{EE,1}>0$. 
Thus, according to the Routh-Hurwitz criterion, all roots of~$Q_3^{EE,1}$ have a negative real part. When~$\hat{\mathcal{R}}_2^1<1$, all coefficients of~$Q_3^{EE,1}(\lambda)$ are strictly positive and thus, according to the Routh-Hurwitz criterion, all roots of~$Q_2^{EE,1}$ have a negative real part.  Otherwise,  
if~$\hat{\mathcal{R}}_2^1\ge1$, then~$Q_2^{EE,1}(\lambda)$ has a non-negative root.

Similarly, the Jacobian of~\eqref{eq:extended_model} at~$\phi^{EE,2}$ has the corresponding characteristic polynomial
\begin{subequations}\label{eq:characteristicPolynomialEE2}
\begin{equation}
\begin{split}
P^{EE,2}(\lambda)&=\frac1{\mathcal{R}_2}(\lambda + \alpha_1 + \mu)(\lambda + \mu + w)(\lambda + \gamma_{12} + \mu)(\lambda+\mu)\times\\&\left[\lambda + \mu + w_1 + (\delta_2+\gamma_2+\mu)\sigma_{12}\mathcal{R}_2I_2^{EE,2}\eta_2 \right]\times\\&\left[\lambda+(\mu + \gamma_1 + \delta_1)(1-\hat{\mathcal{R}}_1^2)\right]Q_3^{EE,2}(\lambda),
\end{split}
    \end{equation}
where
\begin{equation}
\begin{split}
Q_3^{EE,2}(\lambda)&=\lambda^3+b_3^{EE,2}\lambda^2+c_3^{EE,2}\lambda+d_3^{EE,2},\\
b_3^{EE,2}&=(\mu + \gamma_2 + \delta_2)\mathcal{R}_2I_2^{EE,2} + 2\mu + w_2 + \alpha_2\\
c_3^{EE,2}&=(2\mu + w_2 + \alpha_2 + \delta_2 + \gamma_2)(\mu + \gamma_2 + \delta_2)\mathcal{R}_2I_2^{EE,2} + (\mu + \alpha_2)(\mu + w_2)\\
d_3^{EE,2}&=(\mu^2 + (w_2 + \alpha_2 + \delta_2 + \gamma_2)\mu + (w_2 + \alpha_2)\delta_2 + \alpha_2(\gamma_2 + w_2))(\mu + \gamma_2 + \delta_2)\mathcal{R}_2I_2^{EE,2}.
\end{split}
\end{equation}
\end{subequations}
The characteristic polynomial,~$P^{EE,2}(\lambda)$, has nine roots. Three of them are roots of~$Q_3^{EE,2}(\lambda)$. All coefficients of~$Q_3^{EE,2}(\lambda)$ are strictly positive, and~$b_3^{EE,2}c_3^{EE,2}-a_3^{EE,2}>0$. Thus, by the Routh-Hurwitz criterion, the three roots of~$Q_3^{EE,2}(\lambda)$ have a negative real part. An additional root of~$P^{EE,2}(\lambda)$ is
\[
\lambda=-\frac{(1-\hat{\mathcal{R}}_1^2)(\mu + \gamma_1 + \delta_1)}{\mathcal{R}_2},
\]
which is negative if and only if~$\hat{\mathcal{R}}_1^2<1$. The remaining roots of~$P^{EE,2}(\lambda)$ are strictly negative due to~\eqref{eq:parameterspace_reduced_only} and~\eqref{eq:parameterspace_extended}.

\section{Proof of Lemma~\ref{lem:curveGamma}}\label{app:proof_lem_curveGamma}
Let~$\mathcal{R}_1^*>1$. Then, the range of values of~$\mathcal{R}_2^*$ for which~$(\mathcal{R}_1^*,\mathcal{R}_2^*)\in\Omegamu$ is
\begin{equation}\label{eq:R2*_bounds}
\mathcal{R}_2^{\rm bound}(\mathcal{R}_1^*)<R_2^*<\mathcal{R}_1^*,
\end{equation}
where~$\mathcal{R}_2^{\rm bound}(\mathcal{R}_1^*)$ satisfies~$\hat{\mathcal{R}}_2^1(\mathcal{R}_1^*,\mathcal{R}_2^{\rm bound})=1$ and equals
\[
\mathcal{R}_2^{\rm bound}(\mathcal{R}_1^*)=\frac{\mathcal{R}_1^*}{1+\frac{\gamma_1(\gamma_2+\mu)\eta_2\sigma_{12}}{(\gamma_1+\mu)(\gamma_{12}+\mu)}(\mathcal{R}_1^*-1)}.
\]
Let~$h(\mathcal{R}_2;\mathcal{R}_1,\gamma_1,\gamma_2,\gamma_{12},\sigma_{12},\eta_2)=A-B$. 
Then, at the left bound of the interval~\eqref{eq:R2*_bounds},
\[
\gamma_{12}h\left(\mathcal{R}_2=\mathcal{R}_2^{\rm bound}\right)=-\gamma_{12}B\left(\mathcal{R}_2=\mathcal{R}_2^{\rm bound}\right)<0,
\]
while at the right bound of~\eqref{eq:R2*_bounds},
\[
(\gamma_1+\mu)(\gamma_{12}+\mu)h(\mathcal{R}_2=\mathcal{R}_1)=\eta_2\sigma_{12}\gamma_2\gamma_{12}\gamma_1(\gamma_2+\mu)(\mathcal{R}_1 - 1)\mathcal{R}_1^2>0.
\]
Thus, for every~$\mathcal{R}_1^*>1$, there exists a root~$\mathcal{R}_2^*$ of~$h$ that satisfies~\eqref{eq:R2*_bounds}. Since~$h$ is a quadratic function in~$\mathcal{R}_2$, it follows that the additional root of~$h$ does not satisfy~\eqref{eq:R2*_bounds}. Therefore,~$h=A-B=0$ only along the curve
\[
\Gamma^*=\{(\mathcal{R}_1^*,\mathcal{R}_2^*(\mathcal{R}_1^*)) \mid \mathcal{R}_1^*>1\}\subset\Omegamu.
\]
\section{Numerical details}\label{app:numericalDetails}
All numerical simulations throughout this work are implemented using \texttt{Matlab} \texttt{ode45} routine, configured with a maximum step size of $0.05$ and an absolute tolerance of $10^{-14}$.
\section{Additional numerical examples}\label{app:numerical_examples}
\begin{figure}[ht!]
\begin{center}
\includegraphics[width=\textwidth]{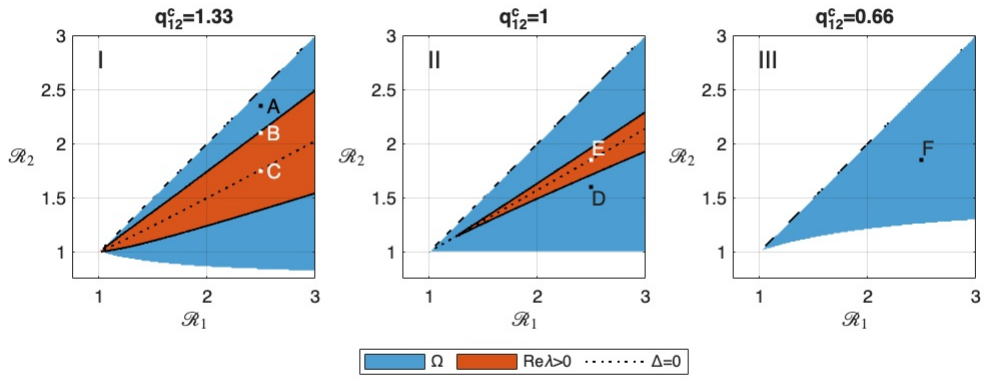}
\caption{Bifurcation diagram for the base model~\eqref{eq:base_model} as a function of~$\mathcal{R}_1$ and~$\mathcal{R}_2$. The parameters~$\beta_1$ and~$\beta_2$ vary while the other parameters are kept fixed: $\gamma_1=1$,~$\gamma_2=0.75$,~$\sigma_{12}=0.8$, $\eta_2=1$, $\mu=2.5\cdot10^{-4}$ and  
I: ~$q_{12}^c=2$.  II:~$q_{12}^c=1$. III:~$q_{12}^c=2/3$. The diagram shows the region~$\Omegamu$ where the coexistence steady-state exists (blue) and the region where it is unstable (red). The curve~$\Gamma^*$ is shown as a dotted black curve. Points A-F are at~$\mathcal{R}_1=2.5$ with the following $\mathcal{R}_2$ values: A(2.35), B(2.1), C(1.75), D(1.6), E(1.85), and F(1.85).}
\label{fig:bifurcation_diagram_base}
\end{center}
\end{figure}Figure~\ref{fig:bifurcation_diagram_base} presents a bifurcation diagram for the base model~\eqref{eq:base_model} with the parameters~$\gamma_1=1$,~$\gamma_2=0.75$,~$\sigma_{12}=0.8$ and~$\eta_2=1$, while Figure~\ref{fig:SolutionsSample} presents solutions~$I_1(t)$ and~$I_2(t)$ of~\eqref{eq:base_model} corresponding to points A-F in Figure~\ref{fig:bifurcation_diagram_base}.  As expected, the results are qualitatively the same of the bifurcation diagram presented in Figures~\ref{fig:bifurcation_diagram_base_alternative} and~\ref{fig:SolutionsSample_alternative} for the case~$\gamma_2=1.1$,~$\sigma_{12}=1.1$ and~$\eta_2=1.1$.   This enforces the conclusion arising from the analysis that the key parameter determining the stability behavior of~$\phi^{CE}$ is~$q_{12}^c$, rather than the parameters that it depends on.  

\begin{figure}[ht!]
\begin{center}
\includegraphics[width=\textwidth]{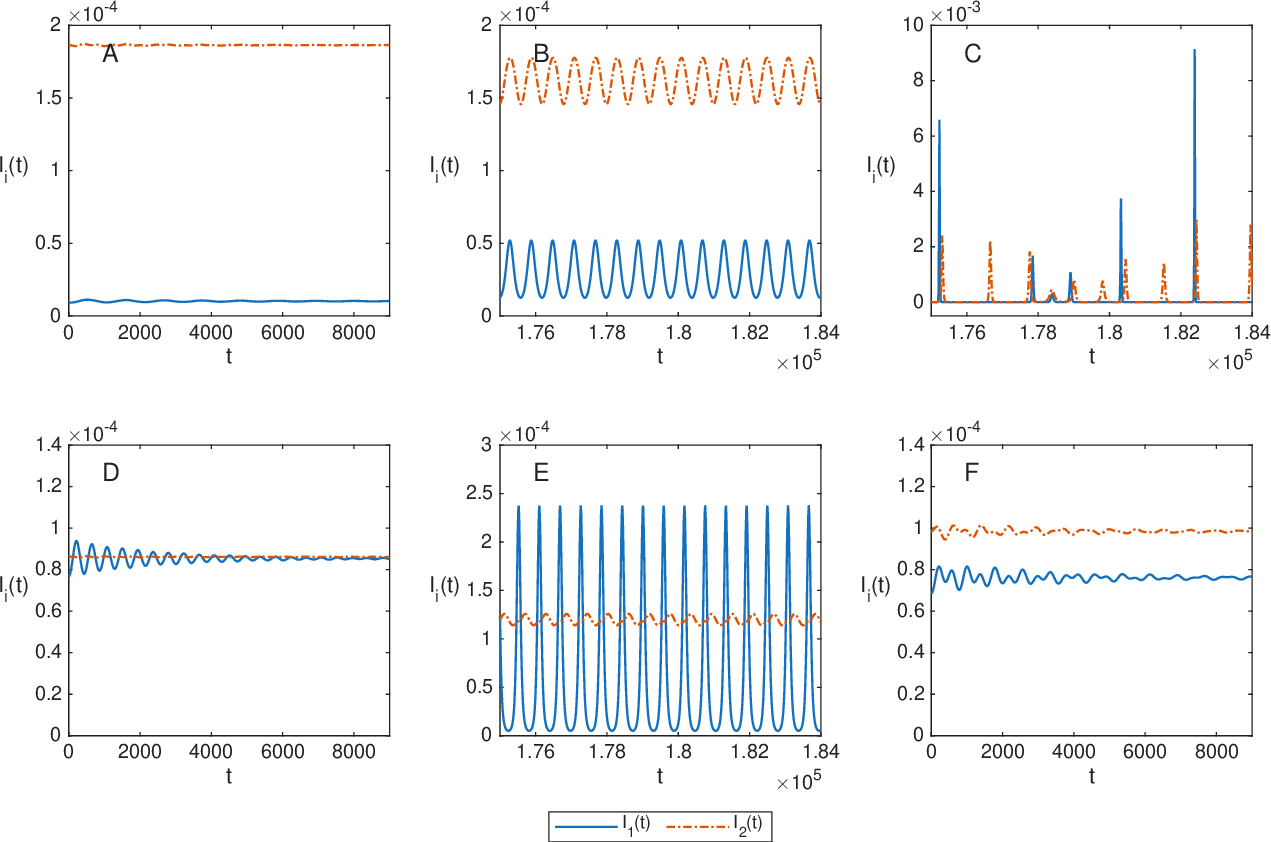}
\caption{Solutions~$I_1(t)$ and~$I_2(t)$ of~\eqref{eq:base_model} corresponding to points A-F in Figure~\ref{fig:bifurcation_diagram_base}.}
\label{fig:SolutionsSample}
\end{center}
\end{figure}
\section{Proof of Proposition~\ref{prop:phiCE_extended}}\label{app:proof_phiCE_extended}
Similar to the proof of Proposition~\ref{prop:phiCE}, from~\eqref{eq:full_eqI1},~$I_1\ne0$ implies
\[
S^{CE}=\frac1{\mathcal{R}_1}.
\]
From~\eqref{eq:full_eqI2},
\[
I_{12}^{CE}=cI_2^{CE},\quad c=\frac{\mathcal{R}_1 - \mathcal{R}_2}{\eta_2\mathcal{R}_2}.
\]
Similarly, from~\eqref{eq:full_eqQ2},~\eqref{eq:full_eqI12},~\eqref{eq:full_eqR2}, and~\eqref{eq:full_eqR},
\[
\begin{split}
Q_2^{CE}&=\frac{\delta_2}{\mu+\alpha_2}I_2^{CE},\quad R_2^{CE}=\frac{(\delta_2 + \gamma_2)\alpha_2 + \mu\gamma_2}{(\mu + \alpha_2)(\mu + w_2)}I_2^{CE},\\ R^{CE}&=\frac{\gamma_{12}c}{\mu+w}I_2^{CE},
\quad R_1^{CE} = \frac{(\mu + \gamma_{12})c}{(\mu + \gamma_2 + \delta_2)\mathcal{R}_2\sigma_{12}(1+c\eta_2)}.
\end{split}
\]
Thus, from~\eqref{eq:full_eqQ1} and~\eqref{eq:full_eqR1},
\[
\left(\gamma_1+\alpha_1\frac{\delta_1}{\mu+\alpha_1}\right)I_1^{CE}=(\mu+\gamma_{12})R_1^{CE}cI_2^{CE}+(\mu+w_1)R_1^{CE},
\]
or
\[
I_1^{CE}=aI_2^{CE}+b,\quad a=\frac{(\mu+\alpha_1)(\mu+\gamma_{12})c}{(\mu+\alpha_1)\gamma_1+\alpha_1\delta_1}R_1^{CE},\quad b=\frac{(\mu+\alpha_1)(\mu+w_1)}{(\mu+\alpha_1)\gamma_1+\alpha_1\delta_1}R_1^{CE}.
\]
Summing~\eqref{eq:full_eqS},~\eqref{eq:full_eqI1}, and~\eqref{eq:full_eqI2}, and isolating~$I_2^{CE}$ yields
\[
I_2^{CE} = \frac{(1-S-b)\mu - b\gamma_1}{(a + 1)\mu + a\gamma_1 + \gamma_2}.
\]
Direct computation shows that~$I_1^{CE}>0$ and~$I_2^{CE}>0$ only in the parameter regime~\eqref{eq:omega_mu}.
\end{appendices}

%% BioMed_Central_Bib_Style_v1.01


\begin{thebibliography}{28}
% BibTex style file: bmc-mathphys.bst (version 2.1), 2014-07-24
\ifx \bisbn   \undefined \def \bisbn  #1{ISBN #1}\fi
\ifx \binits  \undefined \def \binits#1{#1}\fi
\ifx \bauthor  \undefined \def \bauthor#1{#1}\fi
\ifx \batitle  \undefined \def \batitle#1{#1}\fi
\ifx \bjtitle  \undefined \def \bjtitle#1{#1}\fi
\ifx \bvolume  \undefined \def \bvolume#1{\textbf{#1}}\fi
\ifx \byear  \undefined \def \byear#1{#1}\fi
\ifx \bissue  \undefined \def \bissue#1{#1}\fi
\ifx \bfpage  \undefined \def \bfpage#1{#1}\fi
\ifx \blpage  \undefined \def \blpage #1{#1}\fi
\ifx \burl  \undefined \def \burl#1{\textsf{#1}}\fi
\ifx \doiurl  \undefined \def \doiurl#1{\url{https://doi.org/#1}}\fi
\ifx \betal  \undefined \def \betal{\textit{et al.}}\fi
\ifx \binstitute  \undefined \def \binstitute#1{#1}\fi
\ifx \binstitutionaled  \undefined \def \binstitutionaled#1{#1}\fi
\ifx \bctitle  \undefined \def \bctitle#1{#1}\fi
\ifx \beditor  \undefined \def \beditor#1{#1}\fi
\ifx \bpublisher  \undefined \def \bpublisher#1{#1}\fi
\ifx \bbtitle  \undefined \def \bbtitle#1{#1}\fi
\ifx \bedition  \undefined \def \bedition#1{#1}\fi
\ifx \bseriesno  \undefined \def \bseriesno#1{#1}\fi
\ifx \blocation  \undefined \def \blocation#1{#1}\fi
\ifx \bsertitle  \undefined \def \bsertitle#1{#1}\fi
\ifx \bsnm \undefined \def \bsnm#1{#1}\fi
\ifx \bsuffix \undefined \def \bsuffix#1{#1}\fi
\ifx \bparticle \undefined \def \bparticle#1{#1}\fi
\ifx \barticle \undefined \def \barticle#1{#1}\fi
\bibcommenthead
\ifx \bconfdate \undefined \def \bconfdate #1{#1}\fi
\ifx \botherref \undefined \def \botherref #1{#1}\fi
\ifx \url \undefined \def \url#1{\textsf{#1}}\fi
\ifx \bchapter \undefined \def \bchapter#1{#1}\fi
\ifx \bbook \undefined \def \bbook#1{#1}\fi
\ifx \bcomment \undefined \def \bcomment#1{#1}\fi
\ifx \oauthor \undefined \def \oauthor#1{#1}\fi
\ifx \citeauthoryear \undefined \def \citeauthoryear#1{#1}\fi
\ifx \endbibitem  \undefined \def \endbibitem {}\fi
\ifx \bconflocation  \undefined \def \bconflocation#1{#1}\fi
\ifx \arxivurl  \undefined \def \arxivurl#1{\textsf{#1}}\fi
\csname PreBibitemsHook\endcsname

%%% 1
\bibitem[\protect\citeauthoryear{Martcheva}{2015}]{martcheva2015introduction}
\begin{bbook}
\bauthor{\bsnm{Martcheva}, \binits{M.}}:
\bbtitle{An Introduction to Mathematical Epidemiology}
vol. \bseriesno{61}.
\bpublisher{Springer},
\blocation{New York}
(\byear{2015})
\end{bbook}
\endbibitem

%%% 2
\bibitem[\protect\citeauthoryear{Wormser and
  Pourbohloul}{2008}]{wormser2008modeling}
\begin{botherref}
\oauthor{\bsnm{Wormser}, \binits{G.P.}},
\oauthor{\bsnm{Pourbohloul}, \binits{B.}}:
Modeling Infectious Diseases in Humans and Animals By Matthew James Keeling and
  Pejman Rohani Princeton, NJ: Princeton University Press, 2008. 408 pp.,
  Illustrated (hardcover).
The University of Chicago Press
(2008)
\end{botherref}
\endbibitem

%%% 3
\bibitem[\protect\citeauthoryear{Aguiar et~al.}{2022}]{aguiar2022mathematical}
\begin{barticle}
\bauthor{\bsnm{Aguiar}, \binits{M.}},
\bauthor{\bsnm{Anam}, \binits{V.}},
\bauthor{\bsnm{Blyuss}, \binits{K.B.}},
\bauthor{\bsnm{Estadilla}, \binits{C.D.S.}},
\bauthor{\bsnm{Guerrero}, \binits{B.V.}},
\bauthor{\bsnm{Knopoff}, \binits{D.}},
\bauthor{\bsnm{Kooi}, \binits{B.W.}},
\bauthor{\bsnm{Srivastav}, \binits{A.K.}},
\bauthor{\bsnm{Steindorf}, \binits{V.}},
\bauthor{\bsnm{Stollenwerk}, \binits{N.}}:
\batitle{Mathematical models for dengue fever epidemiology: A 10-year
  systematic review}.
\bjtitle{Physics of Life Reviews}
\bvolume{40},
\bfpage{65}--\blpage{92}
(\byear{2022})
\end{barticle}
\endbibitem

%%% 4
\bibitem[\protect\citeauthoryear{Keeling and Rohani}{2008}]{Keeling-book}
\begin{bbook}
\bauthor{\bsnm{Keeling}, \binits{M.J.}},
\bauthor{\bsnm{Rohani}, \binits{P.}}:
\bbtitle{Modeling Infectious Diseases in Humans and Animals}.
\bpublisher{Princeton University Press},
\blocation{Princeton}
(\byear{2008})
\end{bbook}
\endbibitem

%%% 5
\bibitem[\protect\citeauthoryear{Gumel and Song}{2008}]{gumel2008existence}
\begin{barticle}
\bauthor{\bsnm{Gumel}, \binits{A.B.}},
\bauthor{\bsnm{Song}, \binits{B.}}:
\batitle{Existence of multiple-stable equilibria for a multi-drug-resistant
  model of mycobacterium tuberculosis}.
\bjtitle{Mathematical Biosciences \& Engineering}
\bvolume{5}(\bissue{3}),
\bfpage{437}--\blpage{455}
(\byear{2008})
\end{barticle}
\endbibitem

%%% 6
\bibitem[\protect\citeauthoryear{Garba et~al.}{2013}]{garba2013cross}
\begin{barticle}
\bauthor{\bsnm{Garba}, \binits{S.}},
\bauthor{\bsnm{Safi}, \binits{M.A.}},
\bauthor{\bsnm{Gumel}, \binits{A.}}:
\batitle{Cross-immunity-induced backward bifurcation for a model of
  transmission dynamics of two strains of influenza}.
\bjtitle{Nonlinear Analysis: Real World Applications}
\bvolume{14}(\bissue{3}),
\bfpage{1384}--\blpage{1403}
(\byear{2013})
\end{barticle}
\endbibitem

%%% 7
\bibitem[\protect\citeauthoryear{Hussaini
  et~al.}{2016}]{hussaini2016mathematical}
\begin{barticle}
\bauthor{\bsnm{Hussaini}, \binits{N.}},
\bauthor{\bsnm{Lubuma}, \binits{J.M.}},
\bauthor{\bsnm{Barley}, \binits{K.}},
\bauthor{\bsnm{Gumel}, \binits{A.}}:
\batitle{Mathematical analysis of a model for avl--hiv co-endemicity}.
\bjtitle{Mathematical biosciences}
\bvolume{271},
\bfpage{80}--\blpage{95}
(\byear{2016})
\end{barticle}
\endbibitem

%%% 8
\bibitem[\protect\citeauthoryear{Ferguson et~al.}{1999}]{ferguson1999effect}
\begin{barticle}
\bauthor{\bsnm{Ferguson}, \binits{N.}},
\bauthor{\bsnm{Anderson}, \binits{R.}},
\bauthor{\bsnm{Gupta}, \binits{S.}}:
\batitle{The effect of antibody-dependent enhancement on the transmission
  dynamics and persistence of multiple-strain pathogens}.
\bjtitle{Proceedings of the National Academy of Sciences}
\bvolume{96}(\bissue{2}),
\bfpage{790}--\blpage{794}
(\byear{1999})
\end{barticle}
\endbibitem

%%% 9
\bibitem[\protect\citeauthoryear{Gupta et~al.}{1998}]{gupta1998chaos}
\begin{barticle}
\bauthor{\bsnm{Gupta}, \binits{S.}},
\bauthor{\bsnm{Ferguson}, \binits{N.}},
\bauthor{\bsnm{Anderson}, \binits{R.}}:
\batitle{Chaos, persistence, and evolution of strain structure in antigenically
  diverse infectious agents}.
\bjtitle{Science}
\bvolume{280}(\bissue{5365}),
\bfpage{912}--\blpage{915}
(\byear{1998})
\end{barticle}
\endbibitem

%%% 10
\bibitem[\protect\citeauthoryear{Nu{\~n}o et~al.}{2005}]{nuno2005dynamics}
\begin{barticle}
\bauthor{\bsnm{Nu{\~n}o}, \binits{M.}},
\bauthor{\bsnm{Feng}, \binits{Z.}},
\bauthor{\bsnm{Martcheva}, \binits{M.}},
\bauthor{\bsnm{Castillo-Chavez}, \binits{C.}}:
\batitle{Dynamics of two-strain influenza with isolation and partial
  cross-immunity}.
\bjtitle{SIAM Journal on Applied Mathematics}
\bvolume{65}(\bissue{3}),
\bfpage{964}--\blpage{982}
(\byear{2005})
\end{barticle}
\endbibitem

%%% 11
\bibitem[\protect\citeauthoryear{Nuno et~al.}{2008}]{nuno2008mathematical}
\begin{botherref}
\oauthor{\bsnm{Nuno}, \binits{M.}},
\oauthor{\bsnm{Castillo-Chavez}, \binits{C.}},
\oauthor{\bsnm{Feng}, \binits{Z.}},
\oauthor{\bsnm{Martcheva}, \binits{M.}}:
Mathematical models of influenza: the role of cross-immunity, quarantine and
  age-structure.
Mathematical Epidemiology,
349--364
(2008)
\end{botherref}
\endbibitem

%%% 12
\bibitem[\protect\citeauthoryear{Thieme}{2007}]{thieme2007pathogen}
\begin{botherref}
\oauthor{\bsnm{Thieme}, \binits{H.R.}}:
Pathogen competition and coexistence and the evolution of virulence.
Mathematics for Life Science and Medicine,
123--153
(2007)
\end{botherref}
\endbibitem

%%% 13
\bibitem[\protect\citeauthoryear{Kuddus et~al.}{2022}]{kuddus2022analysis}
\begin{barticle}
\bauthor{\bsnm{Kuddus}, \binits{M.A.}},
\bauthor{\bsnm{McBryde}, \binits{E.S.}},
\bauthor{\bsnm{Adekunle}, \binits{A.I.}},
\bauthor{\bsnm{Meehan}, \binits{M.T.}}:
\batitle{Analysis and simulation of a two-strain disease model with nonlinear
  incidence}.
\bjtitle{Chaos, Solitons \& Fractals}
\bvolume{155},
\bfpage{111637}
(\byear{2022})
\end{barticle}
\endbibitem

%%% 14
\bibitem[\protect\citeauthoryear{Castillo-Chavez
  et~al.}{1989}]{castillo1989epidemiological}
\begin{barticle}
\bauthor{\bsnm{Castillo-Chavez}, \binits{C.}},
\bauthor{\bsnm{Hethcote}, \binits{H.W.}},
\bauthor{\bsnm{Andreasen}, \binits{V.}},
\bauthor{\bsnm{Levin}, \binits{S.A.}},
\bauthor{\bsnm{Liu}, \binits{W.M.}}:
\batitle{Epidemiological models with age structure, proportionate mixing, and
  cross-immunity}.
\bjtitle{Journal of Mathematical Biology}
\bvolume{27},
\bfpage{233}--\blpage{258}
(\byear{1989})
\end{barticle}
\endbibitem

%%% 15
\bibitem[\protect\citeauthoryear{Kooi et~al.}{2014}]{kooi2014analysis}
\begin{barticle}
\bauthor{\bsnm{Kooi}, \binits{B.W.}},
\bauthor{\bsnm{Aguiar}, \binits{M.}},
\bauthor{\bsnm{Stollenwerk}, \binits{N.}}:
\batitle{Analysis of an asymmetric two-strain dengue model}.
\bjtitle{Mathematical biosciences}
\bvolume{248},
\bfpage{128}--\blpage{139}
(\byear{2014})
\end{barticle}
\endbibitem

%%% 16
\bibitem[\protect\citeauthoryear{Kooi et~al.}{2023}]{kooi2023multi}
\begin{bchapter}
\bauthor{\bsnm{Kooi}, \binits{B.W.}},
\bauthor{\bsnm{Rashkov}, \binits{P.}},
\bauthor{\bsnm{Venturino}, \binits{E.}}:
\bctitle{Multi-strain host-vector dengue modeling: Dynamics and control}.
In: \beditor{\bsnm{Ghaffari}, \binits{P.}} (ed.)
\bbtitle{Bio-mathematics, Statistics, and Nano-Technologies: Mosquito Control
  Strategies},
pp. \bfpage{110}--\blpage{142}.
\bpublisher{Chapman and Hall/CRC},
\blocation{Boca Raton}
(\byear{2023})
\end{bchapter}
\endbibitem

%%% 17
\bibitem[\protect\citeauthoryear{Kooi et~al.}{2013}]{kooi2013bifurcation}
\begin{barticle}
\bauthor{\bsnm{Kooi}, \binits{B.W.}},
\bauthor{\bsnm{Aguiar}, \binits{M.}},
\bauthor{\bsnm{Stollenwerk}, \binits{N.}}:
\batitle{Bifurcation analysis of a family of multi-strain epidemiology models}.
\bjtitle{Journal of computational and applied mathematics}
\bvolume{252},
\bfpage{148}--\blpage{158}
(\byear{2013})
\end{barticle}
\endbibitem

%%% 18
\bibitem[\protect\citeauthoryear{Aguiar et~al.}{2011}]{aguiar2011role}
\begin{barticle}
\bauthor{\bsnm{Aguiar}, \binits{M.}},
\bauthor{\bsnm{Ballesteros}, \binits{S.}},
\bauthor{\bsnm{Kooi}, \binits{B.W.}},
\bauthor{\bsnm{Stollenwerk}, \binits{N.}}:
\batitle{The role of seasonality and import in a minimalistic multi-strain
  dengue model capturing differences between primary and secondary infections:
  complex dynamics and its implications for data analysis}.
\bjtitle{Journal of theoretical biology}
\bvolume{289},
\bfpage{181}--\blpage{196}
(\byear{2011})
\end{barticle}
\endbibitem

%%% 19
\bibitem[\protect\citeauthoryear{Aguiar et~al.}{2013}]{aguiar2013how}
\begin{barticle}
\bauthor{\bsnm{Aguiar}, \binits{M.}},
\bauthor{\bsnm{Kooi}, \binits{B.W.}},
\bauthor{\bsnm{Rocha}, \binits{F.}},
\bauthor{\bsnm{Ghaffari}, \binits{P.}},
\bauthor{\bsnm{Stollenwerk}, \binits{N.}}:
\batitle{How much complexity is needed to describe the fluctuations observed in
  dengue haemorrhagic fever incidence data?}
\bjtitle{Ecological Complexity}
\bvolume{16},
\bfpage{31}--\blpage{40}
(\byear{2013})
\end{barticle}
\endbibitem

%%% 20
\bibitem[\protect\citeauthoryear{Aguiar et~al.}{2008}]{aguiar2008epidemiology}
\begin{barticle}
\bauthor{\bsnm{Aguiar}, \binits{M.}},
\bauthor{\bsnm{Kooi}, \binits{B.}},
\bauthor{\bsnm{Stollenwerk}, \binits{N.}}:
\batitle{Epidemiology of dengue fever: A model with temporary cross-immunityand
  possible secondary infection shows bifurcationsand chaotic behaviour in wide
  parameter regions}.
\bjtitle{Mathematical Modelling of Natural Phenomena}
\bvolume{3}(\bissue{4}),
\bfpage{48}--\blpage{70}
(\byear{2008})
\end{barticle}
\endbibitem

%%% 21
\bibitem[\protect\citeauthoryear{Andreasen
  et~al.}{1997}]{andreasen1997dynamics}
\begin{barticle}
\bauthor{\bsnm{Andreasen}, \binits{V.}},
\bauthor{\bsnm{Lin}, \binits{J.}},
\bauthor{\bsnm{Levin}, \binits{S.A.}}:
\batitle{The dynamics of cocirculating influenza strains conferring partial
  cross-immunity}.
\bjtitle{Journal of mathematical biology}
\bvolume{35},
\bfpage{825}--\blpage{842}
(\byear{1997})
\end{barticle}
\endbibitem

%%% 22
\bibitem[\protect\citeauthoryear{Chung and Lui}{2016}]{chung2016dynamics}
\begin{barticle}
\bauthor{\bsnm{Chung}, \binits{K.}},
\bauthor{\bsnm{Lui}, \binits{R.}}:
\batitle{Dynamics of two-strain influenza model with cross-immunity and no
  quarantine class}.
\bjtitle{Journal of Mathematical Biology}
\bvolume{73},
\bfpage{1467}--\blpage{1489}
(\byear{2016})
\end{barticle}
\endbibitem

%%% 23
\bibitem[\protect\citeauthoryear{Gavish}{2026}]{gavish2024newoscillatoryregimetwostrain}
\begin{botherref}
\oauthor{\bsnm{Gavish}, \binits{N.}}:
Hidden Oscillations: How Minimal Immunological Interactions Destabilize
  Two-Strain Epidemic Models.
Accepted for publication in SIAM Journal on Life Sciences (SIALS)
(2026)
\end{botherref}
\endbibitem

%%% 24
\bibitem[\protect\citeauthoryear{Wolfe et~al.}{2007}]{wolfe2007antigen}
\begin{barticle}
\bauthor{\bsnm{Wolfe}, \binits{D.N.}},
\bauthor{\bsnm{Goebel}, \binits{E.M.}},
\bauthor{\bsnm{Bjornstad}, \binits{O.N.}},
\bauthor{\bsnm{Restif}, \binits{O.}},
\bauthor{\bsnm{Harvill}, \binits{E.T.}}:
\batitle{The o antigen enables bordetella parapertussis to avoid bordetella
  pertussis-induced immunity}.
\bjtitle{Infection and immunity}
\bvolume{75}(\bissue{10}),
\bfpage{4972}--\blpage{4979}
(\byear{2007})
\end{barticle}
\endbibitem

%%% 25
\bibitem[\protect\citeauthoryear{Dawes and Gog}{2002}]{dawes2002onset}
\begin{barticle}
\bauthor{\bsnm{Dawes}, \binits{J.}},
\bauthor{\bsnm{Gog}, \binits{J.}}:
\batitle{The onset of oscillatory dynamics in models of multiple disease
  strains}.
\bjtitle{Journal of Mathematical Biology}
\bvolume{45}(\bissue{6}),
\bfpage{471}--\blpage{510}
(\byear{2002})
\end{barticle}
\endbibitem

%%% 26
\bibitem[\protect\citeauthoryear{Nemhauser et~al.}{2023}]{CDCyellowbook2024}
\begin{botherref}
\oauthor{\bsnm{Nemhauser}, \binits{J.}},
\oauthor{\bsnm{LaRocque}, \binits{R.}},
\oauthor{\bsnm{Alvarado-Ramy}, \binits{F.}},
\oauthor{\bsnm{Angelo}, \binits{K.}},
\oauthor{\bsnm{Ericsson}, \binits{C.}},
\oauthor{\bsnm{Gertz}, \binits{A.}},
\oauthor{\bsnm{Kozarsky}, \binits{P.}},
\oauthor{\bsnm{Ostroff}, \binits{S.}},
\oauthor{\bsnm{Ryan}, \binits{E.}},
\oauthor{\bsnm{Shlim}, \binits{D.}},
\oauthor{\bsnm{Stauffer}, \binits{W.}},
\oauthor{\bsnm{Weinberg}, \binits{M.}},
\oauthor{\bsnm{Wilson}, \binits{M.E.}},
\oauthor{\bsnm{Keir}, \binits{J.}},
\oauthor{\bsnm{Leidel}, \binits{L.}},
\oauthor{\bsnm{Crowe}, \binits{S.}},
\oauthor{\bsnm{Guinn}, \binits{A.}}:
{CDC} {Y}ellow {B}ook 2024: Health information for international travel.
Oxford University Press
(2023)
\end{botherref}
\endbibitem

%%% 27
\bibitem[\protect\citeauthoryear{Gomes and Medley}{2002}]{gomes2002dynamics}
\begin{bchapter}
\bauthor{\bsnm{Gomes}, \binits{M.G.M.}},
\bauthor{\bsnm{Medley}, \binits{G.F.}}:
\bctitle{Dynamics of multiple strains of infectious agents coupled by
  cross-immunity: a comparison of models}.
In: \bbtitle{Mathematical Approaches for Emerging and Reemerging Infectious
  Diseases: Models, Methods, and Theory},
pp. \bfpage{171}--\blpage{191}
(\byear{2002}).
\bcomment{Springer}
\end{bchapter}
\endbibitem

%%% 28
\bibitem[\protect\citeauthoryear{Aguiar et~al.}{2015}]{Aguiar_Mapping}
\begin{barticle}
\bauthor{\bsnm{Aguiar}, \binits{M.}},
\bauthor{\bsnm{Stollenwerk}, \binits{N.}},
\bauthor{\bsnm{Kooi}, \binits{B.W.}}:
\batitle{Charting complex oscillations and global bifurcation structures in
  multi-strain pathogen systems}.
\bjtitle{Journal of Mathematical Biology}
\bvolume{71}(\bissue{3}),
\bfpage{555}--\blpage{581}
(\byear{2015})
\end{barticle}
\endbibitem

\end{thebibliography}
\end{document}